\newcommand{\1}{\mathds{1}}
\newcommand{\citet}[1]{\citeauthor{#1}~[\citeyear{#1}]}
\newtheorem{theorem}{Theorem}
\newtheorem{lemma}[theorem]{Lemma}
\newtheorem{proposition}[theorem]{Proposition}
\newtheorem{definition}{Definition}
\title{An Axiom System for Feedback Centralities}
\author{Tomasz Wąs}
\author{Oskar Skibski%
\thanks{\texttt{\{t.was, o.skibski\}@mimuw.edu.pl} \\
This work is an extended version of \citep{Was:Skibski:2021:feedback} that appeared in the Proceedings of the 30th International Joint Conference on Artificial Intelligence (IJCAI-21).
It includes the complete proofs of all theorems which are omitted in the conference publication.
It also appeared as a chapter in the PhD dissertation of the first author~\citep{Was-2022-PhDThesis}.\\
This work was supported by the Polish National Science Centre grant 2018/31/B/ST6/03201.}
}
\affil{University of Warsaw}
\date{}
\begin{document}

\maketitle

\begin{abstract}
In recent years, the axiomatic approach to centrality measures has attracted attention in the literature.
However, most papers propose a collection of axioms dedicated to one or two considered centrality measures.
In result, it is hard to capture the differences and similarities between various measures.
In this paper, we propose an axiom system for four classic feedback centralities: Eigenvector centrality, Katz centrality, Seeley Index, and PageRank.
We prove that each of these four centrality measures can be uniquely characterized with a subset of our axioms.
Our system is the first one in the literature that considers all four feedback centralities.
\end{abstract}

\section{Introduction}
The question how to assess the importance of a node in a network has puzzled scientists for decades~\citep{Boldi:Vigna:2014}.
While the first methods, called \emph{centrality measures}, have been proposed in social science in 1950s, in the last two decades \emph{centrality analysis} has become an actively developed field in computer science, physics and biology~\citep{Brandes:Erlebach:2005,Newman:2005}.
As a result, with a plethora of measures proposed in the literature, the choice of a centrality measure is harder than ever.

\emph{Feedback centralities} form an especially appealing class of centrality measures.
These measures assess the importance of a node recursively by looking at the importance of its neighbors or, in directed graphs, direct predecessors. 
Such an assumption is desirable in many settings, e.g., in citation networks where a citation from a better journal value more~\citep{Pinski:Narin:1976} or in the World Wide Web where a link from a popular website can significantly increase the popularity of our page~\citep{Kleinberg:1999}.

Chronologically, the first feedback centralities were \emph{Seeley index}, proposed by \citet{See-1949-SeeleyIndex}, and \emph{Katz centrality}, introduced by \citet{Katz:1953}.
Arguably, the most classic feedback centrality is \emph{Eigenvector centrality} proposed by \citet{Bonacich:1972}.
In turn, the most popular feedback centrality is \emph{PageRank} designed for Google search engine~\citep{Page:etal:1999}.
These four classic centralities, while based on the same principle, differ in details which leads to diverse results and often opposite conclusions.

In recent years, the axiomatic approach has attracted considerable attention in the literature~\citep{Boldi:Vigna:2014,Bloch:etal:2016}.
This approach serves as a method to build theoretical foundations of centrality measures and to help in making an informed choice of a measure for an application at hand.
In the axiomatic approach, the measure is characterized by a set of simple properties, called \emph{axioms}.
A number of papers use the axiomatic approach to characterize feedback centralities.
Seeley index was considered by \citet{Palacios-Huerta:Volij:2004} and by \citet{Altman:Tennenholtz:2005}.
\citet{Kitti:2016} proposed algebraic axiomatization of Eigenvector centrality.
\citet{Dequiedt:Zenou:2014} and \citet{Was:Skibski:2018:eigenvector} proposed joint axiomatization of Eigenvector and Katz centralities.
Recently, \citet{Was:Skibski:2023:pagerank} proposed an axiomatization of PageRank.

While these results are a step in the right direction, most papers focus only on one or two feedback centralities.
In result, each paper proposes a collection of axioms dedicated for the considered centrality, but poorly fitted to other measures.
As a consequence, these characterizations based on different axioms do not help much in capturing the differences and similarities between various centrality concepts.

\begin{table}
\begin{center}
\setlength{\tabcolsep}{5pt}
\begin{tabular}{c|c|c||c}
General axioms & Node-modification axiom & Borderline axiom & Centrality \\
\hline
LOC, ED, NC  & EC   & CY    & Eigenvector	  \\
LOC, ED, NC  & EC   & BL & Katz          \\
LOC, ED, NC  & EM & CY    & Seeley index  \\
LOC, ED, NC  & EM & BL & PageRank      
\end{tabular}%
\end{center}%
\vspace{-0.5cm}
\caption{Our axiomatic characterizations based on 7 axioms: Locality (LOC), Edge Deletion (ED), Node Combination (NC), Edge Compensation (EC), Edge Multiplication (EM), Cycle (CY) and Baseline (BL).}
\label{table:summary}
\end{table}

In this paper, we propose an axiom system for four classic feedback centralities. 
Our system consists of seven axioms.
Locality, Edge Deletion, Node Combination are general axioms satisfied by all four centralities.
Edge Compensation and Edge Multiplication concern modification of one node and its incident edges.
Finally, Cycle and Baseline specify centralities in simple borderline graphs.
For this set of axioms, we show that each of four feedback centralities is uniquely characterized by a set of 5 axioms: 3 general ones, one one-node-modification axiom and one borderline axiom.
See Table~\ref{table:summary} for a summary.


\section{Preliminaries}
In this paper, we consider directed weighted graphs with node weights and possible self-loops.

\subsection{Graphs}
A \emph{graph} is a quadruple, $G = (V,E,b,c)$, where $V$ is the set of nodes, $E$ is the set of ordered pairs of nodes called edges and $b$ and $c$ are node and edge weights: $b: V \rightarrow \mathbb{R}_{\ge 0}$ and $c: E \rightarrow \mathbb{R}_{>0}$.
We assume that node weights are non-negative and edge weights are positive.
The set of all possible graphs is denoted by $\mathcal{G}$.

For a graph $G$, the \emph{adjacency matrix} is defined as follows: $A = (a_{u,v})_{u,v \in V}$, where $a_{u,v} = c(v,u)$ if $(v,u) \in E$ and $a_{u,v} = 0$, otherwise.
A real value $r$ is an \emph{eigenvalue} of a matrix $A$ if there exists a non-zero vector $x \in \mathbb{R}^V$ such that $A x = r x$; such vector $x$ is called an \emph{eigenvector}.
The \emph{principal eigenvalue}, denoted by $\lambda$, is the largest eigenvalue.

An edge $(u,v)$ is an \emph{outgoing} edge for node $u$ and an \emph{incoming} edge for node $v$.
For $v$, the set of its incoming edges is denoted by $\Gamma^-_v(G)$ and outgoing edges by $\Gamma^+_v(G)$.
The total weight of outgoing edges, called the \emph{out-degree}, is denoted by $\deg^+_v(G)$: $\deg^+_v(G) = \sum_{e \in \Gamma^+_v(G)} c(e)$.
For any $x$, graph $G$ is \emph{$x$-out-regular} if the out-degree of every node equals $x$: $\deg^+_v(G) = x$ for every $v \in V$.
A graph is \emph{out-regular} if it is $x$-out-regular for some~$x$.

A \emph{walk} is a sequence of nodes $\omega = (\omega(0), \dots, \omega(k))$ such that every two consecutive nodes are connected by an edge: $(\omega(i), \omega(i+1)) \in E$ for every $i \in \{0,\dots,k-1\}$ and $k \ge 1$.
The walk is said to \emph{start} at $\omega(0)$ and \emph{end} at $\omega(k)$ and the \emph{length}, $|\omega|$, of a walk is defined to be $k$.
The set of all walks of length $k$ will be denoted by $\Omega_k(G)$.
If there exists a walk that starts in $u$ and ends in $v$, then $u$ is called a \emph{predecessor} of $v$ and $v$ is called a \emph{successor} of $u$.
If the length of this walk is one, i.e., $(u,v) \in E$, then nodes are \emph{direct} predecessors/successors.
For node $v$, the set of predecessors of $v$ is denoted by $P(v)$ and the set of successors of $v$ by $S(v)$.
The graph is \emph{strongly connected} if there exists a walk between every two nodes, i.e., if $S(v) = V$ for every node $v \in V$.

A strongly connected graph such that every node has exactly one outgoing edge is called a \emph{cycle graph}.

Let us introduce some shorthand notation that we will use throughout the paper.
For an arbitrary function $f: A \rightarrow X$ and a subset $B \subseteq A$, function $f$ with the domain restricted to set $B$ will be denoted by $f_B$ and to set $A \setminus B$: by $f_{-B}$.
If $B$ contains one element, i.e., $B = \{a\}$, we will skip parenthesis and simply write $f_b$ and $f_{-b}$.
Also, for a constant $x$, we define $x \cdot f$ as follows: $(x \cdot f)(a) = x \cdot f(a)$ for every $a \in A$.
Furthermore, for two functions with possibly different domains, $f: A \rightarrow X, f': B \rightarrow X$, we define $f+f': A \cup B \rightarrow X$ as follows: 
$(f+f')(a) = f(a)+f'(a)$ if $a \in A \cap B$, $(f+f')(a) = f(a)$ if $a \in A \setminus B$ and $(f+f')(a) = f'(a)$ if $a \in B \setminus A$ for every $a \in A \cup B$.
In particular, $(b_{-v}+2b_{v})$ are node weights obtained from $b$ by doubling weight of node $v$.

For two graphs, $G = (V,E,b,c), G'=(V',E',b',c')$ with $V \cap V' = \emptyset$, their \emph{sum} $G+G'$ is defined as follows: $G+G' = (V \cup V', E \cup E', b+b', c+c')$.

\subsection{Feedback centralities}
A \emph{centrality measure} is a function $F$ that given a graph $G = (V,E,b,c)$ and a node $v \in V$ returns a real value, denoted by $F_v(G)$.
This value, called a \emph{centrality of a node}, is assumed to be non-negative and represents the importance of node $v$ in graph $G$.

The class of \emph{feedback centralities} aims to assess the importance of a node by looking at the importance of its direct predecessors.
We consider four classic feedback centralities.

\vspace{0.1cm} \noindent 
\textbf{Eigenvector centrality:} 
According to Eigenvector centrality~\citep{Bonacich:1972}, the importance of a node is proportional to the total importance of its direct predecessors:
\begin{equation}\label{eq:rec:ev}
EV_v(G) = \frac{1}{\lambda} \sum_{(u,v) \in \Gamma^-_v(G)} c(u,v) \cdot EV_u(G).
\end{equation}
This system of recursive equations have multiple solutions.
Hence, some additional normalization condition is usually assumed to make a solution unique (e.g., the sum of centralities of all nodes is assumed to be $1$ or $|V|$).
In this paper, we use a normalization more consistent with other feedback centralities---we will discuss it in the next section.
The Eigenvector centrality is usually defined only for strongly connected graphs. 
We relax this assumption by considering also sums of strongly connected graphs with the same principal eigenvalue.%
\footnote{Note that we cannot allow for the sum of arbitrary graphs. It is because if one of the graphs has a smaller principle eigenvalue, then Eigenvector centralities of all its nodes would be zero.}
We denote the class of all such graphs by $\mathcal{G}^{EV}$.

\vspace{0.1cm} \noindent 
\textbf{Katz centrality:} 
\citet{Katz:1953} proposed an alternative to Eigenvector centrality that adds a basic importance to each node.
This shifts the emphasis from the total importance of its direct predecessors to their number.
Formally, for a decay factor $a \in \mathbb{R}_{\ge 0}$, Katz centrality is defined as follows:
\begin{equation}\label{eq:rec:katz}
K^a_v(G) = a \left( \sum_{(u,v) \in \Gamma^-_v(G)} c(u,v) \cdot K^a_u(G) \right) + b(v).
\end{equation}
For a fixed $a$, Katz centrality is uniquely defined for graphs with $\lambda < 1/a$.
We denote the class of such graphs by $\mathcal{G}^{K(a)}$.

\vspace{0.1cm} \noindent 
\textbf{Seeley index:}
In Eigenvector and Katz centralities, the whole importance of a node is ``copied'' to all its direct successors.
In turn, in Seeley index~\citep{See-1949-SeeleyIndex} (which is also known as \emph{Katz prestige}~\citep{Jac-2008-Networks} and \emph{simplified PageRank}~\citep{Page:etal:1999}),
a node splits its importance equally among its successors.
Hence, the importance of predecessors is divided by their out-degree.
Formally, Seeley index is defined as follows:
\begin{equation}\label{eq:rec:si}
SI_v(G) = \sum_{(u,v) \in \Gamma^-_v(G)} \frac{c(u,v)}{\deg^+_u(G)} SI_u(G).
\end{equation}
Similarly to Eigenvector centrality, this system of equations does not imply a unique solution.
We will discuss our normalization condition in the next section.
Seeley index is usually defined only for strongly connected graphs.
In our paper, we relax this assumption and consider sums of strongly connected graphs; we denote the class of all such graphs by $\mathcal{G}^{SI}$.

\vspace{0.1cm} \noindent 
\textbf{PageRank:}
\citet{Page:etal:1999} was proposed to modify Seeley index by adding a basic importance to each node.
In this way, for a decay factor $a \in [0,1)$, PageRank is uniquely defined for all graphs as follows:
\begin{equation}\label{eq:rec:pr}
PR^a_v(G) = a \! \left( \sum_{(u,v) \in \Gamma^-_v\!(G)} \frac{c(u,v)}{\deg^+_u(G)} PR^a_u(G) \right) \!+ b(v).
\end{equation}

\subsection{Walk interpretations of feedback centralities}

In this section we show how PageRank's random-walk interpretation from \citep{Was:Skibski:2023:pagerank} can be extended to work for all four feedback centralities.
These interpretations result in unique definitions of Eigenvector centrality and Seeley index consistent with Katz centrality and PageRank.

Consider a spread of some entity (money, virus, information, probability of visit etc.) through a network.
At the beginning, at time $t=0$, each node has some initial amount equal to its node weight.
Then, in each step the whole entity is multiplied by a scalar $a$, which is a parameter of the process, and then moved to direct successors.
Here, we consider two variants of the process: \emph{distributed} and \emph{parallel}.
In the distributed process, the entity is spread among the outgoing edges.
In the parallel process, the entity is duplicated and moved along all edges.
Specifically, when the entity is moved along an edge $(u,v)$ it is either
\begin{itemize}
\item multiplied by $c(u,v)/\deg_u^+(G)$ (distributed process), or
\item multiplied by $c(u,v)$ (parallel process).
\end{itemize}

As an example, consider a surfer on the World Wide Web. 
Node weights correspond to the probability that the surfer starts surfing from a specific page and weights of edges represent the number of links from one page to another.
A link is chosen by the surfer uniformly at random which means the process is distributed.
Finally, parameter $a$ is the probability that the surfer stops surfing altogether.
In result, the ``entity'' is the probability that the surfer visits a specific page at some time.

Consider the distributed process with parameter $a$. 
The amount of entity in node $v$ at time $t$ equals:
\[ p^a_{v,G}(t) = \sum_{\omega \in \Omega_t(G) : \omega(t)=v} b(\omega(0)) \cdot \prod_{i=0}^{t-1} \frac{a \cdot c(\omega(i),\omega(i+1))}{\deg^+_{\omega(i)}(G)}. \]
Now, if $a < 1$, PageRank is the total amount of entity in node $v$ through the whole process:
\begin{equation}\label{eq:walk:pr}
PR^a_v(G) = \sum_{t=0}^\infty p^a_{v,G}(t)
\end{equation}
We know that this sum converges, because $a < 1$.
Now, if $a=1$, the sum does not converge, as the amount of entity in the network does not change over time.
In such a case we can look at the average amount of entity in node $v$.
In this way we get the definition of Seeley index:
\begin{equation}\label{eq:walk:si}
SI_v(G) = \lim_{T \rightarrow \infty} \sum_{t=0}^T \frac{p^1_{v,G}(t)}{T}
\end{equation}
Hence, Seeley index is the stationary distribution of the process multiplied by the sum of node weights.

Let us turn our attention to the parallel process with parameter $a$.
The amount of entity in node $v$ at time $t$ equals:
\[ w^a_{v,G}(t) = \!\! \sum_{\omega \in \Omega_t(G) : \omega(t)=v} \!\! b(\omega(0)) \cdot \prod_{i=0}^{t-1} \left(a \cdot c(\omega(i),\omega(i+1))\right). \]
Now, if $a < 1/\lambda$, Katz centrality is the total amount of entity in node $v$ through the whole process:
\begin{equation}\label{eq:walk:katz}
K^a_v(G) = \sum_{t=0}^\infty w^a_{v,G}(t)
\end{equation}
This sum converges, because $a < 1/\lambda$.
Now, if $a=1/\lambda$, the sum does not converge.
In such a case, Eigenvector centrality can be obtained as the average amount of entity in node $v$:
\begin{equation}\label{eq:walk:ev}
EV_v(G) = \lim_{T \rightarrow \infty} \sum_{t=0}^T \frac{w^{1/\lambda}_{v,G}(t)}{T}
\end{equation}
In the appendix we show that measures defined in Equations~\eqref{eq:walk:pr}--\eqref{eq:walk:ev} indeed satisfy recursive equations from Equations~\eqref{eq:rec:ev}--\eqref{eq:rec:pr} (Propositions~\ref{proposition:walk:rec:ev}--\ref{proposition:walk:rec:si}).
From now on, we will use Equations~\eqref{eq:walk:pr}--\eqref{eq:walk:ev} as the definitions of all four centralities.


\section{Axioms}

In this section, we present seven axioms used in our axiomatic characterization.

All centrality measures except for PageRank are defined only for a subclass of all graphs.
Hence, for them we will consider restricted versions of our axioms.
Specifically, an axiom restricted to class $\mathcal{G}^*$ is obtained by adding an assumption that all graphs appearing in the axiom statement belong to $\mathcal{G}^*$.
In this way, we obtain a weaker version of the axiom.

Most of our axioms are invariant axioms. 
They identify simple graph operations that do not affect centralities of all or most nodes in a graph.
The last two axioms serve as a borderline: they specify centralities in very simple graphs.
We use three axioms proposed in the axiomatization of PageRank by \citet{Was:Skibski:2023:pagerank}.
Instead of the remaining three axioms, we use Locality and Node Combination which are more meaningful for
considered classes of graphs.

Before we proceed, let us introduce an operation of \emph{proportional combining} of two nodes used in one of the axioms.
Proportional combining differs from a simple merging of nodes as it preserves the significance of outgoing edges.
Take a centrality measure $F$, graph $G = (V,E,b,c)$ and two nodes $u,w \in V$.
Graph $C^F_{u \rightarrow w}(G)$ is a graph obtained in two steps:
\begin{itemize}
\item scaling weights of outgoing edges of $u$ and $w$ proportionally to their centralities: multiplying weights of outgoing edges of $u$ by $F_u(G)/(F_u(G)+F_w(G))$ and $w$ by $F_w(G)/(F_u(G) + F_w(G))$; and 
\item merging node $u$ into node $w$, i.e., deleting node $u$, transferring its incoming and outgoing edges to node $w$ and adding the weight of node $u$ to node $w$.
\end{itemize}
See Figure~\ref{figure:combining} for an illustration.

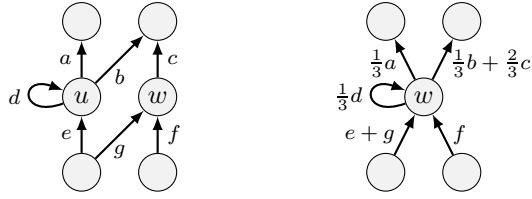
\begin{figure}[t]
\centering
\begin{tikzpicture}[x=1cm,y=1cm]
  \tikzset{     
    e4c node/.style={fill = black!05, circle,draw,minimum size=0.5cm,inner sep=0}, 
    e4c edge/.style={ -latex, left,font=\footnotesize}
  }
  
  \def\x{0}
  \node[e4c node] (1) at (\x+0.00, 2.00) {}; 
  \node[e4c node] (2) at (\x+1.00, 2.00) {}; 
  \node[e4c node] (3) at (\x+0.00, 1.00) {$u$}; 
  \node[e4c node] (4) at (\x+1.00, 1.00) {$w$}; 
  \node[e4c node] (5) at (\x+0.00, 0.00) {}; 
  \node[e4c node] (6) at (\x+1.00, 0.00) {}; 

  \path[->,draw,thick]
  (3) edge[e4c edge] node {$a$} (1)
  (3) edge[e4c edge, below] node {$b$} (2)
  (4) edge[e4c edge, right] node {$c$} (2)
  (3) edge[e4c edge, out = 200, in = 160, looseness=9] node {$d$} (3)
  (5) edge[e4c edge] node {$e$} (3)
  (6) edge[e4c edge, right] node {$f$} (4)
  (5) edge[e4c edge, below] node {$g$} (4)
  ;

  \def\x{4}
  \node[e4c node] (1) at (\x+0.00, 2.00) {}; 
  \node[e4c node] (2) at (\x+1.00, 2.00) {}; 
  \node[e4c node] (3) at (\x+0.50, 1.00) {$w$}; 
  \node[e4c node] (5) at (\x+0.00, 0.00) {}; 
  \node[e4c node] (6) at (\x+1.00, 0.00) {}; 

  \path[->,draw,thick]
  (3) edge[e4c edge, left, pos=0.4] node {$\frac{1}{3}a$} (1)
  (3) edge[e4c edge, right, pos=0.4] node {$\frac{1}{3}b+\frac{2}{3}c$} (2)
  (3) edge[e4c edge, out = 200, in = 160, looseness=9] node {$\frac{1}{3}d$} (3)
  (5) edge[e4c edge] node {$e+g$} (3)
  (6) edge[e4c edge, right] node {$f$} (3)
  ;

\end{tikzpicture}
\caption{Graph $G$ (on the left) and the corresponding graph $C_{u\rightarrow w}^F(G)$ (on the right) assuming $F_u(G) = 1$ and $F_w(G) = 2$.} 
\label{figure:combining}
\end{figure}

We are now ready to present the first three axioms satisfied by all feedback centralities.

\begin{quote}\textit{\noindent
\textbf{Locality (LOC)}:
For every graph $G=(V,E,b,c)$ and graph $G'=(V',E',b',c')$ s.t. $V \cap V' = \emptyset$:
\[	F_v(G+G')=F_v(G) \quad \mbox{for } v \in V. \]
}
\end{quote}

\begin{quote}\textit{\noindent
\textbf{Edge Deletion (ED)}:
For every graph $G=(V,E,b,c)$ and edge $(u,w) \in E$:
\[ F_v(V,E \setminus \{(u,w)\},b,c)=F_v(G) \quad \mbox{for } v \in V \setminus S(u). \]
}
\end{quote}

\begin{quote}\textit{\noindent
\textbf{Node Combination (NC)}:
For every graph $G=(V,E,b,c)$ and nodes $u,w \in V$ s.t. $\deg^+_u(G)=\deg^+_w(G) = \deg^+_s(G)$ for every $s \in S(u)\cup S(w)$:
\[	F_v(C^F_{u \rightarrow w}(G))=F_v(G) \quad \mbox{for } v \in V \setminus \{u,w\} \]
and $F_w(C^F_{u \rightarrow w}(G)) = F_u(G) + F_w(G)$.
}
\end{quote}

Locality and Edge Deletion are standard axioms from the literature.
Locality, proposed in \citep{Skibski:etal:2019:attachment}, states that the centrality of a node depends solely on the part of the graph a node is connected to.
In other words, removing part of the graph not connected to a node does not affect its centrality.
Edge Deletion, proposed in \citep{Was:Skibski:2023:pagerank} for PageRank, states that removing an edge from the graph does not affect nodes which cannot be reached from the start of the edge. 
Node Combination is a new axiom.
Assume two nodes $u,w \in V$ and their successors have the same out-degree, but possibly different centralities. 
Node Combination states that in a graph obtained from proportional combining of $u$ into $w$, the centrality of $w$ is the sum of centralities of both nodes and centralities of other nodes do not change.
This property is characteristic for feedback centralities which associate a benefit from an incoming edge with the importance of a node this edge comes from.
We note that PageRank, Seeley index and Katz centrality satisfy also the axiom without the assumption about equal out-degrees of successors. 
However, it is necessary for Eigenvector centrality.

\begin{figure}[t]
  \centering
  \begin{tikzpicture}[x=1cm,y=1cm] 
    \tikzset{     
      e4c node/.style={fill = black!05, circle,draw,minimum size=0.5cm,inner sep=0}, 
      e4c edge/.style={ -latex, left,font=\footnotesize}
    }
  
    \def\x{0}
    \node[e4c node] (1) at (\x+0.00, 2.00) {}; 
    \node[e4c node] (2) at (\x+1.00, 2.00) {}; 
    \node[e4c node] (3) at (\x+0.00, 1.00) {$u$}; 
    \node[e4c node] (4) at (\x+1.00, 1.00) {$w$}; 
    \node[e4c node] (5) at (\x+0.00, 0.00) {}; 
    \node[e4c node] (6) at (\x+1.00, 0.00) {}; 
  
    \path[->,draw,thick]
    (3) edge[e4c edge] node {$xa$} (1)
    (3) edge[e4c edge, below] node {$xb$} (2)
    (4) edge[e4c edge, right] node {$c$} (2)
    (3) edge[e4c edge, out = 200, in = 160, looseness=9] node {$xd$} (3)
    (5) edge[e4c edge] node {$e$} (3)
    (6) edge[e4c edge, right] node {$f$} (4)
    (5) edge[e4c edge, below] node {$g$} (4)
    ;
  
    \def\x{4}
    \node[e4c node] (1) at (\x+0.00, 2.00) {}; 
    \node[e4c node] (2) at (\x+1.00, 2.00) {}; 
    \node[e4c node] (3) at (\x+0.00, 1.00) {$u$}; 
    \node[e4c node] (4) at (\x+1.00, 1.00) {$w$}; 
    \node[e4c node] (5) at (\x+0.00, 0.00) {}; 
    \node[e4c node] (6) at (\x+1.00, 0.00) {}; 
  
    \path[->,draw,thick]
    (3) edge[e4c edge] node {$xa$} (1)
    (3) edge[e4c edge, below] node {$xb$} (2)
    (4) edge[e4c edge, right] node {$c$} (2)
    (3) edge[e4c edge, out = 200, in = 160, looseness=9] node {$d$} (3)
    (5) edge[e4c edge] node {$e/x$} (3)
    (6) edge[e4c edge, right] node {$f$} (4)
    (5) edge[e4c edge, below] node {$g$} (4)
    ;

  \end{tikzpicture}
  \caption{Graphs considered in Edge Multiplication (on the left) and Edge Compensation (on the right) obtained from $G$ from Figure~\ref{figure:combining}.} 
  \label{figure:em_ec}
  \end{figure}
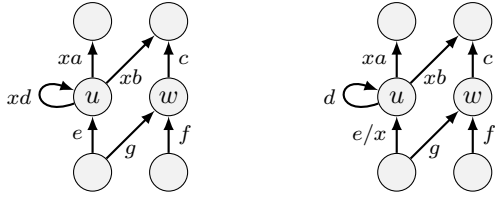

Our next two axioms concern a modification of one node: its weight and weights of its incident edges.

\begin{quote}\textit{\noindent
\textbf{Edge Multiplication (EM)}:
For every graph $G=(V,E,b,c)$, node $u \in V$, constant $x>0$:
\[ F_v(V,E,b,c_{-\Gamma^+_u(G)}+x\cdot c_{\Gamma^+_u(G)})=F_v(G) \mbox{ \ for } v \in V. \]
}
\end{quote}

\begin{quote}\textit{\noindent
\textbf{Edge Compensation (EC)}:
For every graph $G=(V,E,b,c)$, node $u \in V$, constant $x>0$, 
$b' = b_{-u}+b_u/x$ and $c' = c_{-\Gamma^{\pm}_u(G) \setminus \{(u,u)\}} + c_{\Gamma^-_u(G) \setminus \{(u,u)\})}/ x + c_{\Gamma^+_u(G) \setminus \{(u,u)\}} \cdot x$:
\[ F_v(V,E,b',c')=F_v(G) \quad \mbox{for } v \in V \setminus \{u\} \]
and $F_u(V,E,b',c') = F_u(G)/x$.
}
\end{quote}

Edge Multiplication, proposed in \citep{Was:Skibski:2023:pagerank}, states that multiplying weights of all outgoing edges of node $u$ by some constant does not affect centralities.
This means that the absolute weight of edges does not matter, as long as the proportion to other outgoing edges of a node is the same. 
This property is satisfied by both PageRank and Seeley index.
However, it is not satisfied by Eigenvector and Katz centralities, as it increase the importance of modified edges $x$ times.
For them, we propose a similar axiom: Edge Compensation.
In this axiom, not only weights of outgoing edges of $u$ are multiplied by a constant, but at the same time weights of incoming edges and the node itself are divided by the same constant.
Edge Compensation states that this operation decreases the importance of $u$ $x$ times, but at the same time does not affect the importance of other nodes.
See Figure~\ref{figure:em_ec} for an illustration.

Finally, the last two axioms concern simple borderline cases.

\begin{quote}\textit{\noindent
\textbf{Baseline (BL)}:
For every graph $G = (V,E,b,c)$ and an isolated node $v \in V$ it holds $F_v(G) = b(v)$.
}
\end{quote}

\begin{quote}\textit{\noindent
\textbf{Cycle (CY)}:
For every out-regular cycle graph $G=(V,E,b,c)$ it holds 
\[
  F_v(G) = \sum_{u \in V} b(u) / |V|\quad \mbox{for every } v \in V.
\]
}
\end{quote}

Baseline, proposed in \citep{Was:Skibski:2023:pagerank}, states that the centrality of a node with no incident edges is equal to its baseline importance: its node weight.
Baseline is satisfied by PageRank and Katz centrality.
However, it does not make sense for strongly connected graphs.
Cycle, proposed as the borderline case for Eigenvector centrality and Seeley index, considers the simplest strongly connected graph: a cycle.
Specifically, if weight of all edges in a cycle graph are equal, then centralities of all nodes are also equal. 
Moreover, Cycle normalizes the sum of centralities to be equal to the sum of node weights.
See Figure~\ref{figure:bl_cy} for an illustration.

\begin{figure}[t]
\centering
\begin{tikzpicture}[x=5cm,y=5cm] 
  \tikzset{     
    e4c node/.style={fill = black!05, circle,draw,minimum size=0.5cm,inner sep=0}, 
    e4c edge/.style={ -latex, above,font=\footnotesize}
  }
  \node[e4c node] (1) at (0.08, 1.19) {$v$}; 
  \node[e4c node] (2) at (0.30, 1.19) {}; 
  \node[e4c node] (3) at (0.38, 1.00) {}; 
  \node[e4c node] (4) at (0.19, 0.86) {}; 
  \node[e4c node] (5) at (0.00, 1.00) {}; 
  \node[e4c node] (6) at (-0.65, 1.14) {}; 
  \node[e4c node] (7) at (-0.45, 1.00) {}; 
  \node[e4c node] (8) at (-0.69, 0.93) {}; 
  \node[e4c node] (9) at (-0.40, 1.20) {$v$};

  \path[->,draw,thick]
  (1) edge[e4c edge,above] node {$x$} (2)
  (2) edge[e4c edge,right,pos=0.4] node {$x$} (3)
  (3) edge[e4c edge,below,pos=0.3] node {$x$} (4)
  (4) edge[e4c edge,below,pos=0.6] node {$x$} (5)
  (5) edge[e4c edge,left,pos=0.6] node {$x$} (1)
  (6) edge[e4c edge,above,pos=0.6] (7)
  (7) edge[e4c edge,below,pos=0.4] (8)
  (6) edge[e4c edge,left,pos=0.4] (8)
  ;
\end{tikzpicture}

\caption{Graphs considered in Baseline (on the left) and Cycle (on the right).} 
\label{figure:bl_cy}
\end{figure}
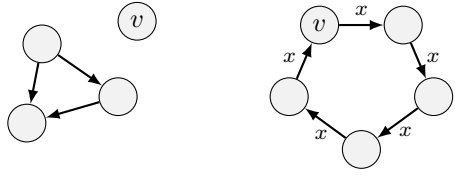

As we will show, these seven axioms are enough to obtain the axiomatizations of all four feedback centralities.

\section{Main results}
In  this section, we present our main results:
We show that if a centrality satisfies three general axioms (LOC, ED and NC), one of the one-node-modification axioms (EM or EC) and one of the borderline axioms (BL or CY) then it must be one of the four feedback centralities.

The full proofs can be found in the supplementary materials. 
In this section, we present the main ideas behind them.

First, consider Seeley index and Eigenvector centrality.

\begin{theorem}
\label{theorem:si}
A centrality measure defined on $\mathcal{G}^{SI}$ satisfies LOC, ED, NC, EM and CY if and only if it is Seeley index (Equation~\eqref{eq:walk:si}).
\end{theorem}

\begin{theorem}
\label{theorem:ev}
A centrality measure defined on $\mathcal{G}^{EV}$ satisfies LOC, ED, NC, EC and CY if and only if it is Eigenvector centrality (Equation~\eqref{eq:walk:ev}).
\end{theorem}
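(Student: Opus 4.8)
The plan is to prove both implications, with essentially all of the work in the ``only if'' (uniqueness) direction.

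\textbf{Soundness.} First I would verify that Eigenvector centrality, as defined by the averaging limit in Equation~\eqref{eq:walk:ev}, satisfies the five axioms. Writing $A$ for the adjacency matrix, the limit is the spectral projection of $b$ onto the principal eigenspace, so $EV(G) = v\,\frac{\langle w,b\rangle}{\langle w,v\rangle}$, where $v$ and $w$ are the right and left principal eigenvectors. LOC and ED are immediate from the walk description: a walk ending at $v$ stays inside the strongly connected component of $v$, and for ED such a walk never traverses an edge whose tail lies outside $P(v)$. The key observation is that EC is exactly a diagonal similarity $A \mapsto DAD^{-1}$ with $D = \mathrm{diag}(d)$, $d_u = 1/x$ and $d_s = 1$ for $s \neq u$ (self-loops are untouched since they lie on the diagonal): this preserves $\lambda$, sends $v \mapsto Dv$, $w \mapsto D^{-1}w$ and $b \mapsto Db$, so the normalising factor $\langle w,b\rangle/\langle w,v\rangle$ is unchanged and $EV \mapsto D\cdot EV$, i.e. $EV_u$ is divided by $x$ while all other values are fixed, as required. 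CY is a one-line computation ($v = w = \mathbf 1$ on an out-regular cycle), and NC follows by substituting the proportional edge rescaling into Equation~\eqref{eq:rec:ev} and using the equal-out-degree hypothesis to check the normalisation.

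\textbf{Uniqueness.} Suppose $F$ satisfies the five axioms on $\mathcal G^{EV}$. By LOC it suffices to treat a single strongly connected graph $G$, since every graph in $\mathcal G^{EV}$ is a disjoint sum of such components. The first reduction makes $G$ out-regular: letting $w$ be the positive left principal eigenvector of $A$, applying EC to each node $s$ with factor $w_s$ turns $A$ into $\mathrm{diag}(w)\,A\,\mathrm{diag}(w)^{-1}$, whose column sums are all $\lambda$. These factors depend only on $G$, not on $F$, and under each EC step $F$ and $EV$ transform in the same way, so it is enough to prove $F = EV$ on out-regular strongly connected graphs. On such graphs every pair of nodes satisfies the out-degree hypothesis of NC, and NC preserves out-regularity (combining two nodes of out-degree $\lambda$ again yields out-degree $\lambda$) and strong connectivity while decreasing $|V|$ by one.

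The engine is then an induction on $|V|$ using NC. Combining one node into another contracts the graph to a smaller out-regular strongly connected graph on which $F = EV$ by the inductive hypothesis; NC equates the centrality of the merged node with $F_u(G)+F_w(G)$, and comparing with the same identity for $EV$ gives $(F_u+F_w)(G) = (EV_u+EV_w)(G)$ for every pair $u,w$. For $|V| \ge 3$ these pairwise sum-identities form a full-rank linear system in the differences $\delta_s = F_s(G) - EV_s(G)$ (the vectors $e_i+e_j$ span for $n\ge 3$) and force every $\delta_s = 0$. The induction therefore rests on the two smallest cases: $|V| = 1$, a single self-loop settled directly by CY (which also fixes the normalisation $\sum_s F_s = \sum_s b(s)$), and $|V| = 2$.

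\textbf{The main obstacle} is exactly the two-node base case, specifically out-regular graphs carrying self-loops, with cross edges of weights $\alpha$ and $\beta$. Here the pairwise-sum argument yields only the normalisation $F_{v_1}+F_{v_2} = b(v_1)+b(v_2)$, and the ratio $F_{v_1}:F_{v_2} = \alpha:\beta$ cannot be extracted from within the two-node world: self-loops cannot be deleted by ED (strong connectivity leaves no unreachable node), cannot be removed by EC (a diagonal similarity preserves the support, hence every self-loop), and the only EC factor that keeps a two-node graph out-regular is the trivial one, so NC never produces a second independent relation. The relation must therefore be imported from a larger graph: one realises the two-node graph as an NC-contraction of a suitable three-node out-regular graph and settles the three-node case on its own, after which the induction above proceeds for all $|V| \ge 4$. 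Making this bootstrap precise—choosing the auxiliary graphs and checking that the resulting system of combining identities is non-degenerate—is, I expect, the technically hardest part of the proof.
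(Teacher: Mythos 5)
Your soundness argument and your first reduction are both fine: viewing EC as the diagonal similarity $A \mapsto DAD^{-1}$ is correct, and using EC with the left principal eigenvector to turn a strongly connected graph into an out-regular one is exactly the paper's own reduction (Lemma~\ref{lemma:ev:all-graphs}, where $EV_v(\bar G)$ of the opposite graph plays the role of the left eigenvector). The gap is in your inductive engine. NC only makes a statement about the graph $C^F_{u\to w}(G)$, which is built by rescaling the outgoing edges of $u$ and $w$ with the \emph{unknown} proportions $F_u(G):F_w(G)$. Induction gives you $F_u(G)+F_w(G) = F_w(C^F_{u\to w}(G)) = EV_w(C^F_{u\to w}(G))$, but ``the same identity for $EV$'' is $EV_u(G)+EV_w(G) = EV_w(C^{EV}_{u\to w}(G))$, and $C^{EV}_{u\to w}(G) \neq C^F_{u\to w}(G)$ unless the two proportions already agree---which is what you are trying to prove. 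The merged node's Eigenvector centrality genuinely depends on the merging proportion: take the $1$-out-regular graph on $\{1,2,3\}$ with edges $1\to 2$ (weight $1$), $2\to 3$ (weight $1$), $3\to 1$ and $3\to 2$ (weight $1/2$ each), total node weight $S$; merging $1$ into $2$ with proportion $\alpha$ yields a two-node graph whose merged node has centrality $S/(2-\alpha)$, which equals $EV_1+EV_2 = 3S/5$ only at $\alpha = 1/3$. So the sum identities $(F_u+F_w)(G)=(EV_u+EV_w)(G)$ are not available, the constraints you actually obtain are nonlinear (each involves the unknown ratio $F_u:F_w$ inside the contracted graph), and the ``full-rank linear system in the differences $\delta_s$'' argument collapses. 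The two-node base case you flag is a symptom, but the circularity already breaks the inductive step for every $|V|$.

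The paper escapes this circularity by running NC in the opposite direction. Instead of contracting $G$ (with unknown proportions) down to smaller graphs, it \emph{expands}: for an out-regular graph with rational edge-weight ratios it builds, via Katz-prestige ``impacts'' and an Euler cycle in an auxiliary multigraph, a large out-regular cycle graph $G^*$ which contracts onto $G$ by a sequence of proportional combinings; on $G^*$ the axiom CY forces all centralities to be equal (hence known), so every merging proportion along the contraction is known, and NC then pins $F_v(G)=KP_v(G)=EV_v(G)$ (Lemma~\ref{lemma:ev-kp:super-regular-graphs}); a separate induction removes the rationality restriction (Lemma~\ref{lemma:ev-kp:regular-graphs}). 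If you wanted to rescue your contraction-based induction, you would have to prove that the nonlinear system $F_u(G)+F_w(G) = EV_w(C^F_{u\to w}(G))$, ranging over all pairs $u,w$, has $F = EV$ as its unique non-negative solution; this happens to hold in small examples, but establishing it in general is precisely the hard part your proposal leaves unaddressed.
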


We begin by showing that Seeley index and Eigenvector centrality are equal for strongly connected out-regular graphs (Lemma~\ref{lemma:ev-si:equal-on-regular}). 
Then, we prove that both centralities satisfy the corresponding axioms (Lemmas~\ref{lemma:axioms:si}--\ref{lemma:axioms:ev}).
Hence, it remains to prove the uniqueness of both axiomatizations.

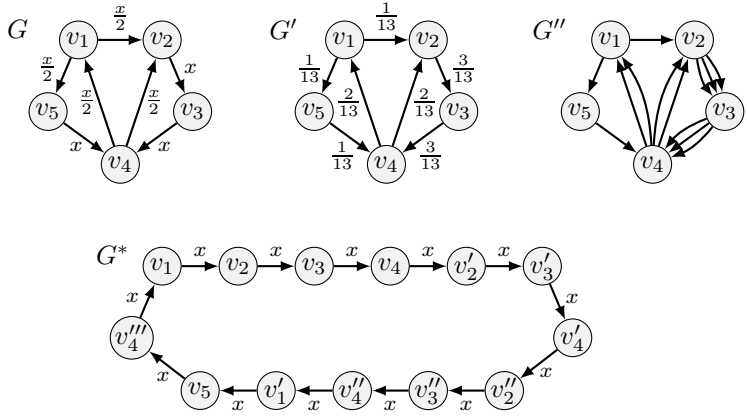
\begin{figure*}[t]
  \centering
  \begin{tikzpicture}[x=5cm,y=5cm] 
    \tikzset{     
      e4c node/.style={fill = black!05, circle,draw,minimum size=0.5cm,inner sep=0}, 
      e4c edge/.style={ -latex, above,font=\footnotesize}
    }
  
    \def\x{0}
    \node[e4c node] (1) at (\x+0.08, 1.19) {$v_1$}; 
    \node[e4c node] (2) at (\x+0.30, 1.19) {$v_2$}; 
    \node[e4c node] (3) at (\x+0.38, 1.00) {$v_3$};
    \node[e4c node] (4) at (\x+0.19, 0.86) {$v_4$}; 
    \node[e4c node] (5) at (\x+0.00, 1.00) {$v_5$}; 
    \node (x) at (\x-0.08, 1.22) {$G$}; 
  
    \path[->,draw,thick]
    (4) edge[e4c edge,left] node[xshift=0.08cm] {$\frac{x}{2}$} (1)
    (4) edge[e4c edge,right] node[xshift=-0.08cm] {$\frac{x}{2}$} (2)
    (1) edge[e4c edge] node {$\frac{x}{2}$} (2)
    (1) edge[e4c edge,left,pos=0.3] node {$\frac{x}{2}$} (5)
    (2) edge[e4c edge,right,pos=0.3] node {$x$} (3)
    (5) edge[e4c edge,below,pos=0.3] node {$x$} (4)
    (3) edge[e4c edge,below,pos=0.3] node {$x$} (4)
    ;
    
    \def\x{0.7}
    \node[e4c node] (1) at (\x+0.08, 1.19) {$v_1$}; 
    \node[e4c node] (2) at (\x+0.30, 1.19) {$v_2$}; 
    \node[e4c node] (3) at (\x+0.38, 1.00) {$v_3$};
    \node[e4c node] (4) at (\x+0.19, 0.86) {$v_4$}; 
    \node[e4c node] (5) at (\x+0.00, 1.00) {$v_5$}; 
    \node (x) at (\x-0.08, 1.22) {$G'$}; 
  
    \path[->,draw,thick]
    (4) edge[e4c edge,left] node[xshift=0.08cm] {$\frac{2}{13}$} (1)
    (4) edge[e4c edge,right] node[xshift=-0.08cm] {$\frac{2}{13}$} (2)
    (1) edge[e4c edge] node {$\frac{1}{13}$} (2)
    (1) edge[e4c edge,left,pos=0.3] node {$\frac{1}{13}$} (5)
    (2) edge[e4c edge,right,pos=0.3] node {$\frac{3}{13}$} (3)
    (5) edge[e4c edge,below,pos=0.3] node {$\frac{1}{13}$} (4)
    (3) edge[e4c edge,below,pos=0.3] node {$\frac{3}{13}$} (4)
    ;
    
    \def\x{1.4}
    \node[e4c node] (1) at (\x+0.08, 1.19) {$v_1$}; 
    \node[e4c node] (2) at (\x+0.30, 1.19) {$v_2$}; 
    \node[e4c node] (3) at (\x+0.38, 1.00) {$v_3$};
    \node[e4c node] (4) at (\x+0.19, 0.86) {$v_4$}; 
    \node[e4c node] (5) at (\x+0.00, 1.00) {$v_5$}; 
    \node (x) at (\x-0.08, 1.22) {$G''$}; 
  
    \path[->,draw,thick]
    (4) edge[e4c edge]  (1)
    (4) edge[e4c edge, bend right=15]  (1)
    (4) edge[e4c edge, bend left=15]  (2)
    (4) edge[e4c edge]  (2)
    (1) edge[e4c edge]  (2)
    (1) edge[e4c edge]  (5)
    (2) edge[e4c edge]  (3)
    (2) edge[e4c edge, bend left=15]  (3)
    (2) edge[e4c edge, bend right=15]  (3)
    (5) edge[e4c edge]  (4)
    (3) edge[e4c edge]  (4)
    (3) edge[e4c edge, bend left=15]  (4)
    (3) edge[e4c edge, bend right=15]  (4)
    ;
    
    \def\x{0.3}  
    \def\y{0.25}
    \def\z{-0.6}
    \node[e4c node] (1) at (\x+0*\y, 1.19+\z) {$v_1$}; 
    \node[e4c node] (2) at (\x+1*\y, 1.19+\z) {$v_2$}; 
    \node[e4c node] (3) at (\x+2*\y, 1.19+\z) {$v_3$}; 
    \node[e4c node] (4) at (\x+3*\y, 1.19+\z) {$v_4$}; 
    \node[e4c node] (5) at (\x+4*\y, 1.19+\z) {$v'_2$}; 
    \node[e4c node] (6) at (\x+5*\y, 1.19+\z) {$v'_3$}; 
    \node[e4c node] (7) at (\x+5*\y+0.08, 1.00+\z) {$v'_4$}; 
    \node[e4c node] (8) at (\x+5*\y-0.10, 0.86+\z) {$v''_2$}; 
    \node[e4c node] (9) at (\x+4*\y-0.10, 0.86+\z) {$v''_3$}; 
    \node[e4c node] (10) at (\x+3*\y-0.10, 0.86+\z) {$v''_4$}; 
    \node[e4c node] (11) at (\x+2*\y-0.10, 0.86+\z) {$v'_1$}; 
    \node[e4c node] (12) at (\x+1*\y-0.10, 0.86+\z) {$v_5$}; 
    \node[e4c node] (13) at (\x+0*\y-0.08, 1.00+\z) {$v'''_4$}; 
    \node (x) at (\x-0.13, 1.22+\z) {$G^*$};   
  
    \path[->,draw,thick]
    (1) edge[e4c edge] node {$x$} (2)
    (2) edge[e4c edge] node {$x$} (3)
    (3) edge[e4c edge] node {$x$} (4)
    (4) edge[e4c edge] node {$x$} (5)
    (5) edge[e4c edge] node {$x$} (6)
    (6) edge[e4c edge,right,pos=0.4] node {$x$} (7)
    (7) edge[e4c edge,below,pos=0.3] node {$x$} (8)
    (8) edge[e4c edge,below] node {$x$} (9)
    (9) edge[e4c edge,below] node {$x$} (10)
    (10) edge[e4c edge,below] node {$x$} (11)
    (11) edge[e4c edge,below] node {$x$} (12)
    (12) edge[e4c edge,below,pos=0.6] node {$x$} (13)
    (13) edge[e4c edge,left,pos=0.6] node {$x$} (1)
    ;

  \end{tikzpicture}
  \caption{The construction of a cycle graph $G^*$ from which $G$ can be obtained by proportional combining.} 
  \label{figure:proof_1}
  \end{figure*}

Let $F$ be a centrality measure defined on $\mathcal{G}^{EV}$ or $\mathcal{G}^{SI}$ that satisfied LOC, ED, NC and CY.
First, we show that $F$ is equal to both centralities for strongly connected out-regular graphs.
We divide it into two steps:
\begin{itemize}
\item First, we consider strongly connected out-regular graphs such that the proportion of weights of every two edges is rational (Lemma~\ref{lemma:ev-si:super-regular-graphs})
\item Then, we consider arbitrary strongly connected out-regular graphs (Lemma~\ref{lemma:ev-si:regular-graphs})
\end{itemize}
These proofs are based on the following observation:
for every graph $G$ that satisfies assumptions of Lemma~\ref{lemma:ev-si:super-regular-graphs} it is possible to construct a cycle graph $G^*$, from which $G$ can be obtained using proportional combining.
This implies that if some centrality measure satisfies CY and NC, then it is uniquely defined on graph $G$.

We will present this construction on the graph $G$ from Figure~\ref{figure:proof_1}. 
Node weights are arbitrary, but we assume they sum up to $1$.
This graph is $x$-out-regular and for every two edges the proportion of their weights is rational.
In such graphs, it can be shown that Seeley index of any node is rational; in this graph we have: $SI_{v_1}(G) = \frac{2}{13}$, $SI_{v_2}(G) = SI_{v_3}(G) = \frac{3}{13}$, $SI_{v_4}(G) = \frac{4}{13}$ and $SI_{v_5}(G) = \frac{1}{13}$.

Now, let us define an \emph{impact} of an edge $(u,v)$, denoted by $I(u,v)$, as Seeley index of node $u$ multiplied by $c(u,v)/\deg^+_u(G)$.
Impacts of edges are depicted in graph $G'$ in Figure~\ref{figure:proof_1}.
For example, $I(v_4,v_1) = K_{v_4}(G) \cdot 1/2 = 2/13$.
Clearly, the total impact of outgoing edges of every node is equal to its Seeley index. 
Also, from the Seeley index recursive equation (Equation~\eqref{eq:rec:si}), we see that the total impact of incoming edges of every node is also equal to its Seeley index.
We will use this fact later on.
What is also important impacts are rational; let $N$ be the least common multiple of all denominators of edge impacts. 
For graph $G$ we have $N=13$.

Based on the notion of impact we create a \emph{multigraph} $G''$ (a graph in which multiple edges exist between a pair of nodes).
This multigraph is obtained by replacing edge $(u,v)$ from the original graph by $N \cdot I(u,v)$ edges.
For example, in our sample graph edge $(v_4,v_1)$ with impact $I(v_4,v_1) = 2/13$ is replaced by two edges from $v_4$ to $v_1$.
Now, the key observation is that every node in the multigraph $G''$ has the same number of incoming and outgoing edges (as the total impact of its incoming edges equals the total impact of its outgoing edges).
Hence, from Euler's theorem, we know that in $G''$ there exists an Euler cycle---a walk that visits every edge exactly once and is a cycle: starts and ends in the same node.

Graph $G^*$ in Figure~\ref{figure:proof_1} is a cycle graph corresponding to some Euler cycle in graph $G''$. 
Here, nodes $v_i, v'_i, \dots$ correspond to node $v_i$ in $G''$.
We define node weights in a way that nodes corresponding to node $v_i$ sum up to $b(v_i)$.
The construction of the cycle graph is complete.

Let us determine the number of nodes in $G^*$.
In graph $G''$ node $v_i$ has $N \cdot SI_{v_i}(G)$ outgoing edges.
Hence, in $G^*$ there are $N \cdot SI_{v_i}(G)$ nodes corresponding to node $v_i$.
Since the sum of Seeley index of all nodes in $G$ is equal to the sum of node weights, i.e., $1$, we get that there are $N$ nodes in graph $G^*$.
In our example, there are indeed $13$ nodes in $G^*$ and $4$ nodes correspond to node $v_4$.

Now, from CY, we know that according to $F$ every node in $G^*$ has the same centrality equal to $1/N$.
It is easy to verify that if we merge using proportional combining for every node $v_i$ all nodes corresponding to $v_i$ we will obtain graph $G$.
Based on NC this implies that node $v_i$ has centrality in $G$ equal to $N \cdot SI_{v_i}(G) \cdot 1/N = SI_{v_i}(G)$ which concludes the proof.

So far, we considered out-regular graphs and four axioms: LOC, ED, NC and CY.
Here is where the proof splits:
\begin{itemize}
\item If $F$ satisfies EM, then it is equal to Seeley index for every graph; it is easy to see that every graph can be made out-regular if we divide weights of outgoing edges of every node by its outdegree (Lemma~\ref{lemma:si:all-graphs}).
\item If $F$ satisfies EC, then it is equal to Eigenvector centrality for every graph; here, more detail analysis is required, as the EC operation changes weights of both outgoing and incoming edges (Lemma~\ref{lemma:ev:all-graphs}).
\end{itemize}


\begin{figure*}[t]
\centering
\begin{tikzpicture}[x=4.5cm,y=5cm] 
  \tikzset{     
    e4c node/.style={fill = black!05, circle,draw,minimum size=0.5cm,inner sep=0}, 
    e4c edge/.style={ -latex, above,font=\footnotesize}
  }

  \def\x{0.0}
  \node[e4c node] (1) at (\x+0.15, 0.18) {$u$}; 
  \node[e4c node] (2) at (\x+0.50, 0.36) {$v$}; 
  \node[e4c node] (6) at (\x+0.75, 0.36) {}; 
  \node[e4c node] (7) at (\x+0.50, 0.12) {}; 
  \node[e4c node] (8) at (\x+0.15, 0.36) {}; 
  \node[e4c node] (9) at (\x+0.50, 0.00) {};   
  \node (x) at (\x+0.70, 0.05) {$G$};   

  \path[->,draw,thick]
  (1) edge[e4c edge, above] node {$a$} (2)
  (1) edge[e4c edge, above] node[yshift=-0.05cm] {$b$} (7)
  (1) edge[e4c edge, below] node {$c$} (9)
  (2) edge[e4c edge]  (6)
  (8) edge[e4c edge]  (1)
  (8) edge[e4c edge]  (2)
  (6) edge[e4c edge, bend right=25]  (8)
  ;

  \def\x{1.0}
  \node[e4c node] (1) at (\x+0.05, 0.18) {$u$}; 
  \node[e4c node] (2) at (\x+0.50, 0.36) {$v$}; 
  \node[e4c node] (3) at (\x+0.25, 0.18) {$u'$}; 
  \node[e4c node] (5) at (\x+0.05, 0.00) {$w$}; 
  \node[e4c node] (6) at (\x+0.75, 0.36) {}; 
  \node[e4c node] (7) at (\x+0.50, 0.12) {}; 
  \node[e4c node] (8) at (\x+0.05, 0.36) {}; 
  \node[e4c node] (9) at (\x+0.50, 0.00) {}; 
  \node (x) at (\x+0.70, 0.05) {$G'$};   

  \path[->,draw,thick]
  (1) edge[e4c edge, right] node {$a$+$b$+$c$} (5)
  (2) edge[e4c edge]  (6)
  (3) edge[e4c edge, above] node {$a$} (2)
  (3) edge[e4c edge, above] node[yshift=-0.05cm] {$b$} (7)
  (3) edge[e4c edge, below] node {$c$} (9)
  (8) edge[e4c edge]  (1)
  (8) edge[e4c edge]  (2)
  (6) edge[e4c edge, bend right=25]  (8)
  ;

  \def\x{2.0}
  \node[e4c node] (1) at (\x+0.05, 0.18) {$u$}; 
  \node[e4c node] (2) at (\x+0.50, 0.36) {$v$}; 
  \node[e4c node] (3) at (\x+0.25, 0.18) {$u'$}; 
  \node[e4c node] (4) at (\x+0.50, 0.24) {$v'$}; 
  \node[e4c node] (5) at (\x+0.05, 0.00) {$w$}; 
  \node[e4c node] (6) at (\x+0.75, 0.36) {}; 
  \node[e4c node] (7) at (\x+0.50, 0.12) {}; 
  \node[e4c node] (8) at (\x+0.05, 0.36) {}; 
  \node[e4c node] (9) at (\x+0.50, 0.00) {}; 
  \node (x) at (\x+0.70, 0.05) {$G''$};   

  \path[->,draw,thick]
  (1) edge[e4c edge, right] node {$a$+$b$+$c$} (5)
  (2) edge[e4c edge]  (6)
  (3) edge[e4c edge, above] node {$a$} (4)
  (3) edge[e4c edge, above, pos=0.7] node[yshift=-0.05cm] {$b$} (7)
  (3) edge[e4c edge, below] node {$c$} (9)
  (4) edge[e4c edge]  (6)
  (8) edge[e4c edge]  (1)
  (8) edge[e4c edge]  (2)
  (6) edge[e4c edge, bend right=25]  (8)
  ;

  \def\x{0.6}
  \def\z{-0.5}
  \node[e4c node] (3) at (\x+0.25, 0.18+\z) {$u'$}; 
  \node[e4c node] (4) at (\x+0.50, 0.24+\z) {$v'$}; 
  \node[e4c node] (7) at (\x+0.50, 0.12+\z) {}; 
  \node[e4c node] (9) at (\x+0.50, 0.00+\z) {}; 
  \node (x) at (\x+0.65, 0.05+\z) {$G^*$};   

  \path[->,draw,thick]
  (3) edge[e4c edge, above] node {$a$} (4)
  (3) edge[e4c edge, above, pos=0.7] node[yshift=-0.05cm] {$b$} (7)
  (3) edge[e4c edge, below] node {$c$} (9)
  ;

  \def\x{1.4}
  \node[e4c node] (3) at (\x+0.25, 0.18+\z) {$u'$}; 
  \node[e4c node] (4) at (\x+0.50, 0.24+\z) {$v'$}; 
  \node[e4c node] (7) at (\x+0.50, 0.06+\z) {}; 

  \path[->,draw,thick]
  (3) edge[e4c edge, above] node {$a$} (4)
  (3) edge[e4c edge, below, pos=0.3] node {$b$+$c$} (7)
  ;

\end{tikzpicture}
\caption{Graphs showing that the profit from edge $(u,v)$ for node $v$ is equal to $p_F(F_u(G), c(u,v), \deg^+_u(G))$.} 
\label{figure:proof_2}
\end{figure*}

Now, let us turn our attention to PageRank and Katz centrality.
We have the following results:

\begin{theorem}
\label{theorem:pr}
A centrality measure defined on $\mathcal{G}$ satisfies LOC, ED, NC, EM and BL if and only if it is PageRank for some decay factor $a$ (Equation~\eqref{eq:walk:pr}).
\end{theorem}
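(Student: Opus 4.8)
The plan is to prove both directions, but the forward implication---that PageRank satisfies LOC, ED, NC, EM and BL---is routine: each contribution $b(\omega(0))\prod a\,c/\deg^+$ in the walk sum \eqref{eq:walk:pr} is local, vanishes for non-successors of a deleted edge, scales away under out-edge multiplication, combines additively, and reduces to $b(v)$ when $v$ is isolated, so I would dispatch it with lemmas analogous to those already used for Katz prestige. The substance is the uniqueness direction. Let $F$ be defined on $\mathcal{G}$ and satisfy the five axioms; I aim to show that $F$ obeys the PageRank recursion \eqref{eq:rec:pr} for a single decay factor $a$, after which the conclusion follows from uniqueness of the solution.

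The key step is to isolate the \emph{profit} of one incoming edge, as indicated in Figure~\ref{figure:proof_2}. Concretely, I want to prove that there is a universal function $p_F$ such that, for every graph $G$ and node $v$,
\[ F_v(G) = b(v) + \sum_{(u,v)\in\Gamma^-_v(G)} p_F\bigl(F_u(G),\,c(u,v),\,\deg^+_u(G)\bigr). \]
To extract the profit of a fixed edge $(u,v)$ I would split $u$ (an inverse application of NC) into a node carrying its incoming edges and a new, incoming-edge-free copy $u'$ carrying its outgoing edges, so that $F_{u'}$ equals $F_u(G)$ while $u'$ can be pried loose from the rest of the graph with ED and LOC; the centrality the detached copy confers on $v$ then depends only on $F_u(G)$, $c(u,v)$ and $\deg^+_u(G)$. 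NC also guarantees that several incoming edges aggregate additively and that the baseline term is exactly $b(v)$, the isolated-node value pinned down by BL. I expect this decomposition to be the main obstacle: the split must preserve out-degrees so that NC's precondition holds, and one must check that detaching the remainder genuinely leaves $v$'s received value a function of the three listed quantities alone.

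With the decomposition in hand, I would determine the form of $p_F$. EM forces invariance under scaling all of $u$'s outgoing edges, so $p_F(y,c,d)$ depends on $(c,d)$ only through the ratio $c/d$; write $p_F(y,c,d)=q(y,c/d)$. Merging two sink targets of $u$ via NC combines the incoming edge weights and adds the two target centralities, which yields $q(y,r_1)+q(y,r_2)=q(y,r_1+r_2)$, so $q(y,r)=r\,g(y)$ is linear in the ratio. Combining two equal-weight, equal-out-degree sources into one via NC adds their centralities while preserving the target's value, giving $g(y_1)+g(y_2)=g(y_1+y_2)$; with non-negativity this forces $g(y)=a\,y$ for the constant $a:=g(1)$. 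Thus $p_F(y,c,d)=a\,(c/d)\,y$, exactly the PageRank increment. The constant $a$ is read off from the two-node gadget $s\to t$ with $b(s)=1$ and $b(t)=0$: ED and BL give $F_s=1$, hence $F_t=a$, so $a\ge 0$; applying the decomposition to a single node with a self-loop gives $F_u=1+a\,F_u$, which has a non-negative real solution only when $a<1$.

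Finally, the decomposition together with $p_F(y,c,d)=a\,(c/d)\,y$ shows that the vector $(F_v(G))_{v\in V}$ solves $x=b+aMx$, where $M_{v,u}=c(u,v)/\deg^+_u$ has column sums at most one. Since $a<1$, the operator $x\mapsto b+aMx$ is a contraction with a unique fixed point, namely $PR^a$ as given by \eqref{eq:walk:pr}. Hence $F=PR^a$, completing the uniqueness direction and the theorem.
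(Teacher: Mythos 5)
Your plan follows the paper's own route---a profit function $p_F$, a split-and-detach decomposition of $F_v(G)$ over incoming edges, the form $p_F(x,y,z)=a\,xy/z$ forced by EM and NC, the constant $a$ read off a two-node graph via BL, and uniqueness of the solution of the resulting recursion---so the issue is whether the steps you defer actually go through; two of them do not, and they are exactly where the paper works hardest. The first is self-loops. Your decomposition is obtained by splitting a predecessor $u$ of $v$ into an incoming part and an outgoing copy $u'$, then detaching $u'$ with ED and LOC; this requires $u\neq v$. For a self-loop, the source whose centrality determines the profit is $v$ itself, so the split alters the very quantity being computed and the induction does not close. Accordingly, the paper proves the decomposition only for nodes with $(v,v)\notin E$ (Lemma~\ref{lemma:pr-katz:recursive}) and needs a lengthy separate argument (Lemma~\ref{lemma:pr-katz:loops}), based on a two-parameter family of modified graphs and a case analysis, to cover loops. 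You cannot sidestep this case: $\mathcal{G}$ contains self-loops, your final system $x=b+aMx$ must hold at self-loop nodes, and your own derivation of $a<1$ is precisely the self-loop computation $F_u=1+aF_u$, i.e., an instance of the unproved case (that particular point could be patched by using a $2$-cycle instead, but the recursion at loop nodes cannot).

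The second gap is NC's precondition. NC requires the two combined nodes \emph{and all of their successors} to have equal out-degrees. Several of your merges violate this as literally stated: when you combine two equal-out-degree sources that both point at the sink $v$ (to obtain $g(y_1)+g(y_2)=g(y_1+y_2)$), the successor $v$ has out-degree $0$, not the sources' common out-degree, so NC does not apply. The paper copes with this by proving an extension of NC to semi-out-regular graphs (Lemma~\ref{lemma:pr-katz:nc-in-sor}), whose proof needs ED and LOC together with an auxiliary absorbing node attached to every sink, and by stating all intermediate lemmas only for semi-out-regular graphs, reducing an arbitrary graph to that class via EM only at the very end (Lemmas~\ref{lemma:pr:semi-out-regular} and~\ref{lemma:pr:final}). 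You flag the out-degree precondition as the main obstacle for the splitting step, but you supply no analogue of either device; without them, both your functional-equation derivation of $p_F$ (which is otherwise a nice variant of the paper's Lemma~\ref{lemma:pr-katz:two-arrow-graphs:pr}) and the inductive decomposition stall at their first application of NC.
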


\begin{theorem}
\label{theorem:katz}
A centrality measure defined on $\mathcal{G}^{K(a)}$ satisfies LOC, ED, NC, EC and BL if and only if it is Katz centrality for some decay factor $a$ (Equation~\eqref{eq:walk:katz}).
\end{theorem}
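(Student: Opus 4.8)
The plan is to treat Theorem~\ref{theorem:katz} as the ``\textbf{EC}$+$\textbf{BL}'' sibling of the companion results: the soundness direction reuses the walk interpretation, while the uniqueness direction borrows the edge-by-edge ``profit'' decomposition used for PageRank (Theorem~\ref{theorem:pr}, Figure~\ref{figure:proof_2}) together with the \textbf{EC}-analysis from the Eigenvector proof. For \textbf{soundness} I would first check that Katz centrality $K^a$, written as the walk sum~\eqref{eq:walk:katz}, satisfies all five axioms. \textbf{BL} is immediate, since an isolated node has no incoming walks and $w^a_{v,G}(0)=b(v)$. \textbf{LOC} and \textbf{ED} follow because $w^a_{v,G}(t)$ counts only walks \emph{ending} at $v$, which are confined to the part of the graph from which $v$ is reachable and are untouched by deleting an edge whose start cannot reach $v$. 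The one real computation is \textbf{EC}: I would cut each walk into maximal blocks spent at $u$ and count the net power of $x$. Each block contributes a factor $x$ on its non-loop exit edge and $1/x$ on its non-loop entry edge, self-loops being left untouched; a walk ending outside $u$ has equally many entries and exits (net $x^0$), while a walk ending at $u$ has one unmatched entry and a walk starting at $u$ absorbs the extra $1/x$ from $b(u)$. The bookkeeping gives a global factor $1$ for every node except $u$ and $1/x$ for $u$, which is exactly \textbf{EC}. \textbf{NC} is verified as for Katz prestige, since proportional combining preserves the total weight each walk delivers to a node.

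For \textbf{uniqueness}, let $F$ satisfy the five axioms on $\mathcal{G}^{K(a)}$. \textbf{BL} fixes the base case $F_v(G)=b(v)$ for isolated $v$. I would then establish the profit-decomposition illustrated in Figure~\ref{figure:proof_2}: splitting the source $u$ with \textbf{NC} into a fresh copy $u'$ that carries its outgoing edges, and peeling that copy off with \textbf{LOC}, shows that the contribution of a single edge $(u,v)$ to $F_v(G)$ is a universal \emph{profit} $p_F$ depending only on $F_u(G)$ and the edge, with $F_v(G)=b(v)+\sum_{(u,v)\in\Gamma^-_v(G)}p_F$. Applying \textbf{EC} to the isolated source multiplies its out-edge weight by $x$ and divides its centrality by $x$ while fixing $F_{v'}$; hence $p_F$ is invariant under $(F_u,c)\mapsto(F_u/x,\,cx)$ and is therefore a function $g$ of the impact $I=c(u,v)\,F_u(G)$ alone---this is where, in contrast to PageRank, the out-degree drops out. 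Finally I would place two parallel edges into $v'$ with impacts $I_1,I_2$ and combine their sources with \textbf{NC}; proportional combining sends these to a single edge of impact $I_1+I_2$ and preserves $F_{v'}$, giving $g(I_1+I_2)=g(I_1)+g(I_2)$. Non-negativity of $F$ makes $g$ monotone, so this Cauchy equation forces $g(I)=a\cdot I$ for a constant $a\ge 0$, the decay factor. Substituting back yields $F_v(G)=b(v)+a\sum_{(u,v)\in\Gamma^-_v(G)}c(u,v)\,F_u(G)$, i.e.\ $F$ obeys the Katz recursion~\eqref{eq:rec:katz}; as this linear system has a unique solution on $\mathcal{G}^{K(a)}$ (where $\lambda<1/a$), we conclude $F=K^a$.

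I expect the \textbf{main obstacle} to be the profit-decomposition combined with the side condition of \textbf{NC}. Every splitting and combining step must satisfy the equal-out-degree requirement on the combined nodes and all their successors; the remedy is to precede each application of \textbf{NC} by an \textbf{EC} rescaling that equalizes out-degrees without changing any impact (absorbing the induced factor on the rescaled node's centrality and undoing it afterwards), and to verify that all auxiliary gadgets remain inside $\mathcal{G}^{K(a)}$, i.e.\ keep their principal eigenvalue below $1/a$. A second point that must be handled explicitly is the regularity needed to pass from the Cauchy equation to linearity---here the monotonicity forced by non-negativity of $F$---rather than appealing to continuity.
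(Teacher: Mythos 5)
Your soundness direction and the overall skeleton of your uniqueness argument (profit decomposition, then linearity of the profit, then the Katz recursion plus uniqueness of its solution) do match the paper's, and your derivation of linearity via a Cauchy equation in the impact $I=c(u,v)\,F_u(G)$ is a legitimate variant of the paper's route through Lemmas~\ref{lemma:pr-katz:node-weights}, \ref{lemma:pr-katz:one-arrow-graphs} and~\ref{lemma:pr-katz:two-arrow-graphs:katz}. The genuine gap is your proposed remedy for the side condition of NC, which is where the real difficulty of this theorem lives. You suggest preceding each application of NC by an EC rescaling that equalizes out-degrees. This cannot work for sinks: EC multiplies a node's out-degree by $x$, and $0\cdot x=0$, so a sink can never be brought up to the common out-degree. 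Yet every gadget in your argument has sinks among the successors of the nodes being combined --- the nodes $v'$ and $w$ in the profit graphs, and $v'$ in your parallel-edge Cauchy gadget --- so the hypothesis $\deg^+_s(G)=\deg^+_u(G)$ for all $s\in S(u)\cup S(w)$ fails at exactly the steps where you need NC. The paper resolves this not with EC but with Lemma~\ref{lemma:pr-katz:nc-in-sor}: add an auxiliary node $t$ with a self-loop and an edge of the common weight from every sink into $t$, apply NC in the enlarged graph, then strip the additions using ED and LOC. Moreover, EC is not local in the way your remedy assumes: rescaling node $u$ divides the weights of $u$'s \emph{incoming} edges by $x$, which changes the out-degrees of all of $u$'s predecessors, so ``equalizing out-degrees'' is a simultaneous, global problem rather than a per-node preprocessing step.

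The same issue makes your final substitution step a gap rather than a routine conclusion. Even with Lemma~\ref{lemma:pr-katz:nc-in-sor}, the profit decomposition is only established on semi-out-regular graphs, so what your argument yields is the Katz recursion on that subclass, i.e.\ the analogue of Lemma~\ref{lemma:katz:semi-out-regular}. For PageRank the passage to arbitrary graphs is one line (EM lets you normalize every node's out-degree independently), but for EC it is the hardest part of the proof: the paper's Lemma~\ref{lemma:katz:final} runs an induction on the number of nodes that are neither leaves nor parents of leaves, and inside a strongly connected component it must compute the Eigenvector centrality of the opposite graph to find mutually consistent EC-rescaling factors that equalize all out-degrees at once, afterwards repairing out-degrees of parents of leaves through their leaf edges. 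Nothing in your proposal substitutes for this construction. Finally, your edge-peeling induction cannot treat a self-loop $(v,v)$, since the source of that edge cannot be split away from its target; the paper needs the separate (and long) Lemma~\ref{lemma:pr-katz:loops} for that case, which your plan omits entirely.
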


It is easy to check that PageRank and Katz centrality indeed satisfy the corresponding axioms (Lemmas~\ref{lemma:axioms:pr}--\ref{lemma:axioms:katz}).
Let us focus on the proof of uniqueness.

The key role in our proof will be played by \emph{semi-out-regular graphs} which is a class of graphs in which all nodes except for sinks (nodes with no outgoing edges) have equal out-degrees.
Most lemmas described below applies only to semi-out-regular graphs.

Let $F$ be a centrality measure defined on $\mathcal{G}$ or $\mathcal{G}^{K(a)}$ that satisfies LOC, ED, NC, BL and EM or EC.
First, we show several technical properties of $F$ that we use later in the proof:
Node Combination generalized to all semi-out-regular graphs (Lemma~\ref{lemma:pr-katz:nc-in-sor}), 
minimal centrality equal to the node weight (Lemmas~\ref{lemma:pr-katz:source-node}--\ref{lemma:pr-katz:positive-weight}), 
linearity with respect to node weights (Lemma~\ref{lemma:pr-katz:node-weights}) 
and existence of a constant $a_F$ such that $F_v(\{u,v\}, \{(u,v)\}, b, c) = a_F \cdot b(u)$ if $b(v)=0$ and $c(u,v)=1$ (Lemma~\ref{lemma:pr-katz:one-arrow-graphs}).

In the remainder of the proof we show that the profit from edge $(u,v)$ for node $v$ depends only on its weight and the centrality and out-degree of node $u$; moreover, if $F$ satisfies EM or EC, then this profit is equal to the profit in PageRank or Katz centrality.

To this end, for $x,y,z > 0$, $y \le z$, we define a \emph{profit function} $p_F(x,y,z)$ as follows: 
\[ p_F(x,y,z) = F_v(\{u,v,w\}, \{(u,v), (u,w)\}, b^x, c^{y,z}), \]
where $b^x(u) = x$, $b^x(v)=b^x(w)=0$, $c^{y,z}(u,v) = y$ and $c^{y,z}(u,w) = z-y$ (if $z=y$, then we remove edge $(u,w)$ from the graph).
To put in words, value $p_F(x,y,z)$ is the profit from an incoming edge with weight $y$ that starts in node with centrality $x$ and out-degree $z$ in the smallest such graph possible: with three nodes and two edges.
It is easy to check that if $F$ satisfies EM, then $p_F(x,y,z) = a_F \cdot x \cdot y/z$ as PageRank (Lemma~\ref{lemma:pr-katz:two-arrow-graphs:pr}) and if $F$ satisfies EC, then $p_F(x,y,z) = a_F \cdot x \cdot y$ as Katz centrality (Lemma~\ref{lemma:pr-katz:two-arrow-graphs:katz}). 

Now, it remains to prove that the profit from any edge $(u,v)$ equals $p_F(F_u, c(u,v), \deg^+_u(G))$.
More precisely, we prove that:
\[ F_v(G) = b(v) + \sum_{(u,v) \in \Gamma^+_u(G)} p_F(F_u(G), c(u,v), \deg^+_u(G)). \]
First, we do it for a self-loop which is the only incoming edge of a node (Lemma~\ref{lemma:pr-katz:loops}).
Then, we move on to the general proof for an arbitrary edge~(Lemma~\ref{lemma:pr-katz:recursive}).
We prove the thesis by induction over the number of incoming edges.

Let us illustrate the scheme of this proof on graph $G$ from Figure~\ref{figure:proof_2}.
Consider node $v$ in graph $G$ and pick one incoming edge, say $(u,v)$.
First we create graph $G'$ in which we extract from node $u$ all its outgoing edges and attach them to a new node $u'$ with the node weight $F_u(G)$.
To keep the out-degree of $u$ unchanged, we add a new node $w$ with an edge from $u$.
Using NC, it can be shown that $F_v(G') = F_v(G)$ and $F_{u'}(G') = F_u(G)$. 

In the second step, we split $v$ into two nodes with the same outgoing edges, but disjoint incoming edges: $v'$ has one edge $(u',v')$, and $v$ has the remaining edges.
From NC we know that $F_v(G) = F_v(G'') + F_{v'}(G'')$.
Since $v$ has less incoming edges in $G''$ than in $G$, we can use the inductive assumption.
Hence, it remains to determine the centrality of $v'$ in $G''$.

To this end, observe that only $u'$ is the predecessor of $v'$.
Hence, from ED, removing all outgoing edges of other nodes does not affect the centrality of $v$. 
Also, from LOC, removing nodes other than $u'$ and its direct predecessors does not change the centrality of $v$.
Hence, we can focus on graph $G^*$.
Using NC we can merge all of the nodes in $G^*$ except for $u'$ and $v'$.
In this way, we get a graph with three nodes and two edges, so we have $F_{v'}(G'') = p_F(F_u(G), c(u,v), \deg^+_u(G))$ what we needed to prove.

Combining the above results and using EM or EC we get uniqueness for PageRank (Lemmas~\ref{lemma:pr:semi-out-regular}--\ref{lemma:pr:final}) and Katz centrality (Lemmas~\ref{lemma:katz:semi-out-regular}--\ref{lemma:katz:final}).

\section{Conclusions}

We proposed the first joint axiomatization of four classic feedback centralities: Eigenvector centrality, Katz centrality, Seeley index and PageRank. 
We used seven axioms and proved that each centrality measure is uniquely characterize by a set of five axioms.
Our axiomatization highlights the similarities and differences between these measures which helps in making an informed choice of a centrality measure for a specific application at hand.

There are many possible directions of further research.
It would be interesting to extend our axiomatization to Degree centrality and Beta measure which constitute the borderline cases of feedback centralities.
Another direction is considering centrality measures based on random walks, such as Random Walk Closeness or Decay.
Also, several axioms considered in our paper can form a basis for the first axiomatization of distance-based centrality measures for directed graphs.

\bibliographystyle{abbrvnat}
\bibliography{bibliography}



\clearpage
\appendix

\newpage
\section{Proofs}
This appendix contains all the proofs.
First, in Section~\ref{sec:proofs:walks} we show that walk-based definitions of feedback centralities indeed satisfy their recursive equations.
Then, in Section~\ref{sec:proofs:ev:si} we focus on Seeley index and Eigenvector centrality (Theorems~\ref{theorem:si} and~\ref{theorem:ev}).
Finally, in Section~\ref{sec:proofs:katz:pr} we move to PageRank and Katz centrality (Theorems~\ref{theorem:pr} and~\ref{theorem:katz}).

\subsection{Walk-based definitions of feedback centralities}\label{sec:proofs:walks}
In this section, we show that centrality measures defined in Equations~\eqref{eq:walk:pr}--\eqref{eq:walk:ev} satisfy recursive equations from 
Equations~\eqref{eq:rec:ev}--\eqref{eq:rec:pr}.
Specifically:
\begin{itemize}
    \item In Proposition~\ref{proposition:walk:rec:ev} we show that Eigenvector centrality as defined in Equation~\eqref{eq:walk:ev} satisfies Eigenvector centrality recursive equation (Equation~\eqref{eq:rec:ev}).

    \item In Proposition~\ref{proposition:walk:rec:katz} we show that Katz centrality as defined in Equation~\eqref{eq:walk:katz} satisfies Katz centrality recursive equation (Equation~\eqref{eq:rec:katz}).
    
    \item In Proposition~\ref{proposition:walk:rec:si} we show that Seeley index as defined in Equation~\eqref{eq:walk:si} satisfies Seeley index recursive equation (Equation~\eqref{eq:rec:si}).
\end{itemize}

The proof that PageRank as defined in Equation~\eqref{eq:walk:pr} satisfies PageRank recursive equation (Equation~\eqref{eq:rec:pr}) can be found in~\citep{Was:Skibski:2023:pagerank}.

\begin{proposition}
\label{proposition:walk:rec:ev}
Eigenvector centrality defined on $\mathcal{G}^{EV}$ by Equation~\eqref{eq:walk:ev} satisfies Eigenvector centrality recursive equation (Equation~\eqref{eq:rec:ev}).
\end{proposition}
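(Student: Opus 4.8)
The plan is to reduce the recursive equation to a one-step recurrence for the weighted walk counts $w^{1/\lambda}_{v,G}(t)$ and then pass to the Cesàro limit that defines Eigenvector centrality. First I would establish the one-step recurrence. For $t \ge 1$, every walk $\omega \in \Omega_t(G)$ ending at $v$ is obtained by appending the edge $(\omega(t-1),v)$ to a walk of length $t-1$ ending at some direct predecessor $u=\omega(t-1)$ of $v$. Factoring the last factor $a\cdot c(u,v)$ out of the product in the definition of $w^a_{v,G}(t)$ and grouping the walks according to their penultimate node yields, with $a=1/\lambda$,
\[
w^{1/\lambda}_{v,G}(t) = \sum_{(u,v) \in \Gamma^-_v(G)} \frac{c(u,v)}{\lambda}\, w^{1/\lambda}_{u,G}(t-1), \qquad t \ge 1,
\]
while $w^{1/\lambda}_{v,G}(0) = b(v)$ (the empty product equals $1$ and the only length-$0$ walk ending at $v$ is $v$ itself).

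Next I would sum this identity over $t=1,\dots,T$ and interchange the two finite sums, which gives
\[
\sum_{t=1}^{T} w^{1/\lambda}_{v,G}(t) = \sum_{(u,v) \in \Gamma^-_v(G)} \frac{c(u,v)}{\lambda} \sum_{t=0}^{T-1} w^{1/\lambda}_{u,G}(t).
\]
Dividing by $T$ and letting $T \to \infty$, the left-hand side tends to $EV_v(G)$, since the single omitted term $w^{1/\lambda}_{v,G}(0)/T = b(v)/T$ vanishes. On the right, because $\Gamma^-_v(G)$ is finite the limit may be taken inside the sum, and each inner average tends to $EV_u(G)$ once the index shift from $T-1$ to $T$ is justified; this produces exactly $EV_v(G) = \frac{1}{\lambda}\sum_{(u,v)\in\Gamma^-_v(G)} c(u,v)\,EV_u(G)$, which is Equation~\eqref{eq:rec:ev}.

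The main obstacle is controlling the boundary terms of the index shift, i.e.\ showing $\frac{1}{T}\sum_{t=0}^{T-1} w^{1/\lambda}_{u,G}(t) \to EV_u(G)$ rather than merely $\frac{1}{T}\sum_{t=0}^{T} w^{1/\lambda}_{u,G}(t) \to EV_u(G)$. The clean way is to observe that the mere existence of the defining limit already forces the individual terms to vanish after division by $T$: writing $S_u(T)=\sum_{t=0}^{T} w^{1/\lambda}_{u,G}(t)$, the hypothesis $S_u(T)/T \to EV_u(G)$ gives $S_u(T-1)/T = \frac{T-1}{T}\cdot\frac{S_u(T-1)}{T-1} \to EV_u(G)$, hence $w^{1/\lambda}_{u,G}(T)/T = S_u(T)/T - S_u(T-1)/T \to 0$. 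Consequently the averages over $\{0,\dots,T-1\}$ and over $\{0,\dots,T\}$ have the same limit, and likewise for $v$. This argument uses only the existence of the Cesàro limit and avoids any spectral decomposition of $A/\lambda$. If one prefers a direct justification of that existence, it follows from Perron--Frobenius theory: for a strongly connected component the peripheral eigenvalues of $A/\lambda$ are simple, so $w^{1/\lambda}(T) = (A/\lambda)^T b$ remains bounded and its Cesàro averages converge, and the block-diagonal structure of graphs in $\mathcal{G}^{EV}$ (whose components share the same $\lambda$) extends the conclusion to the whole graph.
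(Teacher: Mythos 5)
Your proof is correct, and its skeleton matches the paper's: the same one-step recurrence for $w^{1/\lambda}_{v,G}(t)$ obtained by splitting off the last edge of each walk, the same summation over $t=1,\dots,T$, and the same passage to the Ces\`aro limit. Where you genuinely diverge is in the one analytic step that makes the limit argument work, namely showing that the boundary term $w^{1/\lambda}_{u,G}(T)/T$ vanishes so that the averages over $\{0,\dots,T-1\}$ and $\{0,\dots,T\}$ agree in the limit. The paper proves this by observing that $w^{1/\lambda}_{\cdot,G}(t)$ is the initial vector multiplied $t$ times by $A/\lambda$, and invoking boundedness of these power iterates (with a citation to von Mises--Pollaczek-Geiringer); your argument instead extracts the same conclusion from nothing but the assumed existence of the defining limit, via
\[
\frac{w^{1/\lambda}_{u,G}(T)}{T} \;=\; \frac{S_u(T)}{T} - \frac{T-1}{T}\cdot\frac{S_u(T-1)}{T-1} \;\longrightarrow\; EV_u(G) - EV_u(G) \;=\; 0 .
\]
This is a real (if local) improvement in self-containedness: it removes the only external analytic input from the proof and works for any centrality defined as a Ces\`aro limit, whereas the paper's boundedness argument needs spectral information about $A/\lambda$. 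The trade-off is that the paper's route supplies a fact (boundedness of the iterates) that points toward why the limit exists at all, while your main argument simply inherits existence from the definition; you address this correctly in your closing remark by noting that existence on $\mathcal{G}^{EV}$ is a Perron--Frobenius matter, which is also where the paper implicitly leaves it.
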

\begin{proof}
We will prove that the centrality defined as $F_v(G) = \lim_{T \rightarrow \infty} \sum_{t=0}^T w^{1/\lambda}_{v,G}(t)/T$ for every $G=(V,E,b,c)$ and every $v \in V$ satisfies Eigenvector recursive equation (Equation~\eqref{eq:rec:ev}) i.e., 
$x_v=\sum_{(u,v) \in \Gamma^-_v(G)} \frac{1}{\lambda} c(u,v) \cdot x_u.$

To this end, observe that for every walk $\omega \in \Omega_t(G)$ such that $\omega(t)=v$ the walk must have visited a direct predecessor of $v$, say $u$, in step $t-1$ and then move through edge $(u,v)$. Hence, for value $w^{1/\lambda}_{v,G}(t)$ we get
\[
w^{1/\lambda}_{v,G}(t) = \frac{1}{\lambda} \sum_{(u,v) \in \Gamma^-_v(G)} c(u,v) w^{1/\lambda}_{u,G}(t-1).
\]
If we sum both sides for all $t \in \{1,\dots, T\}$ we obtain
\[
\sum_{t=1}^T w^{1/\lambda}_{v,G}(t) =  \sum_{(u,v) \in \Gamma^-_v(G)} \frac{c(u,v)}{\lambda} \sum_{t=0}^{T-1} w^{1/\lambda}_{u,G}(t)
\]
Adding $w^{1/\lambda}_{v,G}(0) = b(v)$ and dividing both sides by $T$ yields
\begin{equation}
\label{eq:proposition:walk:rec:ev}
\sum_{t=0}^T \frac{w^{1/\lambda}_{v,G}(t)}{T} = \frac{b(v)}{T} + \sum_{(u,v) \in \Gamma^-_v(G)}  \frac{c(u,v)}{\lambda} \sum_{t=0}^{T-1} \frac{w^{1/\lambda}_{u,G}(t)}{T}.
\end{equation}
As $T$ approaches infinity $b(v)/T$ approaches zero.
In order to show that value $w^{1/\lambda}_{u,G}(T)/T$ approaches zero as well, observe that for each step $t \in \mathbb{N}$, the vector of values $w^{1/\lambda}_{u,G}(t)$ for all nodes $u \in V$ is a vector of values $w^{1/\lambda}_{u,G}(0)$ for all $u \in V$ multiplied $t$ times by adjacency matrix of $G$ and divided $t$ times by $\lambda$.
The norm of such vector is bounded~\citep{Mises:Pollaczek-Geiringer:1929}, thus its coordinates are also bounded.
Hence, the value $w^{1/\lambda}_{u,G}(T)/T$ indeed approaches zero as $T$ approaches infinity.
Thus, taking limit in Equation~\eqref{eq:proposition:walk:rec:ev} we obtain
\[
F_v(G) = \sum_{(u,v) \in \Gamma^-_v(G)} \frac{1}{\lambda} c(u,v)  F_u(G),
\]
which is exactly Eigenvector recursive equation.
\end{proof}

\begin{proposition}
\label{proposition:walk:rec:katz}
Katz centrality measure defined on $\mathcal{G}^{K(a)}$ by Equation~\eqref{eq:walk:katz} satisfies Katz centrality recursive equation (Equation~\eqref{eq:rec:katz}).
\end{proposition}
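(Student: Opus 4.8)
The plan is to mirror the argument for Eigenvector centrality (Proposition~\ref{proposition:walk:rec:ev}), but the task is easier here because the series $\sum_{t=0}^\infty w^a_{v,G}(t)$ converges for $a < 1/\lambda$, so no averaging over a time horizon is needed and I can work directly with the infinite sum. Throughout I use the convention $w^a_{v,G}(0) = b(v)$, consistent with the process having initial amount equal to node weights.

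First I would establish a one-step recursion for the amount of entity. Fix a graph $G = (V,E,b,c) \in \mathcal{G}^{K(a)}$ and a node $v$. Every walk $\omega \in \Omega_t(G)$ of length $t \ge 1$ ending at $v$ decomposes uniquely into a walk of length $t-1$ ending at some direct predecessor $u$ of $v$, followed by the edge $(u,v)$; tracking the extra factor $a \cdot c(u,v)$ contributed by this last step gives
\[ w^a_{v,G}(t) = a \sum_{(u,v) \in \Gamma^-_v(G)} c(u,v) \cdot w^a_{u,G}(t-1) \]
for all $t \ge 1$.

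Next I would sum this identity over $t \ge 1$. Since $a$, all node weights and all edge weights are non-negative, every term $w^a_{u,G}(t)$ is non-negative, so interchanging the finite sum over incoming edges with the infinite sum over $t$ is legitimate, and after the reindexing $t-1 \mapsto t$ we obtain
\[ \sum_{t=1}^\infty w^a_{v,G}(t) = a \sum_{(u,v) \in \Gamma^-_v(G)} c(u,v) \sum_{t=0}^\infty w^a_{u,G}(t) = a \sum_{(u,v) \in \Gamma^-_v(G)} c(u,v) \cdot K^a_u(G), \]
where the inner series is finite precisely because $a < 1/\lambda$ for graphs in $\mathcal{G}^{K(a)}$. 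Finally, adding the $t=0$ term $w^a_{v,G}(0) = b(v)$ to both sides recovers $K^a_v(G) = \sum_{t=0}^\infty w^a_{v,G}(t)$ on the left and yields exactly Equation~\eqref{eq:rec:katz}.

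The only point requiring care is the absolute convergence of the series, which is what makes the term-by-term summation and the reindexing valid; this is guaranteed by the standing assumption $a < 1/\lambda$ defining the class $\mathcal{G}^{K(a)}$. Consequently no delicate limiting argument of the kind needed for Eigenvector centrality (controlling $w^{1/\lambda}_{u,G}(T)/T$) arises, and this proof is strictly simpler than that of Proposition~\ref{proposition:walk:rec:ev}.
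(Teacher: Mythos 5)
Your proposal is correct and follows essentially the same route as the paper's own proof: the same decomposition of each length-$t$ walk into its last edge, the same one-step recursion $w^a_{v,G}(t) = a \sum_{(u,v) \in \Gamma^-_v(G)} c(u,v)\, w^a_{u,G}(t-1)$, summation over $t \ge 1$, and then adding the $t=0$ term $b(v)$. The only difference is that you make explicit the justification for interchanging the sums (non-negativity plus convergence from $a < 1/\lambda$), which the paper leaves implicit.
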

\begin{proof}
We will prove that centrality measure defined as $F_v(G) = \sum_{t=0}^T w^a_{v,G}(t)$ for every graph $G=(V,E,b,c)$ and every $v \in V$ satisfies Katz Recursive Equation~\eqref{eq:rec:katz}, i.e., 
\[
  x_v=a \left( \sum_{(u,v) \in \Gamma^-_v(G)} c(u,v) \cdot x_u \right) + b(v).
\]

To this end, observe that for each walk $\omega \in \Omega_t(G)$ such that $\omega(t)=v$ in order to arrive in $v$ at step $t \ge 1$, it must have visited a direct predecessor of $v$, say $u$, in step $t-1$ and then follow edge $(u,v)$.
Thus, for $w^a_{v,G}(t)$ we obtain that
\(
w^a_{v,G}(t) = a \cdot \sum_{(u,v) \in \Gamma^-_v(G)} w^a_{u,G}(t-1) \cdot c(u,v) .
\)
Summing both sides for all $t \in \{1,2,\dots\}$ we get
\[
\sum_{t=1}^\infty w^a_{v,G}(t) =  a \cdot \sum_{(u,v) \in \Gamma^-_v(G)} c(u,v) \sum_{t=0}^{\infty} w^a_{u,G}(t)
\]
Let us add $w^a_{v,G}(0) = b(v)$ to both sides of equation to obtain
\(
F_v(G) = a \cdot ( \sum_{(u,v) \in \Gamma^-_v(G)}   c(u,v) \cdot F_u(G) ) + b(v)
\)
which is exactly Katz centrality recursive equation.
\end{proof}

\begin{proposition}
\label{proposition:walk:rec:si}
Seeley index defined on $\mathcal{G}^{SI}$ by Equation~\eqref{eq:walk:si} satisfies Seeley index recursive equation (Equation~\eqref{eq:rec:si}) and for every $G=(V,E,b,c) \in \mathcal{G}^{SI}$ it holds that
$$\sum_{v \in V} SI_v(G) = \sum_{v \in V} b(v).$$
\end{proposition}
\begin{proof}
First, let us focus on the second part of the thesis, i.e., that the sum of centralities of all nodes is equal to the sum of weights of all nodes.
To this end, observe that for every $t>0$ every walk that $\omega \in \Omega_t(G)$ that ends in $v$, i.e., $\omega(t)=v$, must have visited one of the direct predecessors of $v$, say $u$, in step $t-1$ and then move to $v$ through edge $(u,v)$.
Thus, if we look at the value $p^1_{v,G}(t)$ we obtain that
\begin{equation}
    \label{eq:proposition:walk:rec:si:1}
    p^1_{v,G}(t) = \sum_{(u,v) \in \Gamma^-_v(G)} \frac{c(u,v)}{\deg^+_u(G)} p^1_{u,G}(t-1) .    
\end{equation}
Let us sum both sides of Equation~\ref{eq:proposition:walk:rec:si:1} for all nodes $v \in V$.
Observe that each edge $(u,v) \in E$ appears on the right hand side of the equation exactly once, hence for every $u \in V$ all fractions $c(u,v)/\deg^+_u(G)$ sums to 1.
In this way, we obtain
\begin{equation}
    \label{eq:proposition:walk:rec:si:2}
    \sum_{v \in V}  p^1_{v,G}(t) = \sum_{u \in V} p^1_{u,G}(t-1).
\end{equation}
This means that the total amount of entity in all nodes is the same in each step $t \in \mathbb{N}$ of the walk process.
Since $p^1_{v,G}(0) = b(v)$ for every $v \in V$, this total amount is always equal to the sum of node weights.
Thus, when we sum it for $t \in \{0,\dots,T\}$, divide by $T$ and take a limit we still obtain that
\[
    \sum_{v \in V} SI_v(G) = 
    \lim_{T \rightarrow \infty} \sum_{t=0}^T \frac{\sum_{v \in V}  p^1_{v,G}(t)}{T} = 
    \sum_{v \in V} b(v).
\]

Now, let us move to the first part of the thesis, i.e., that centrality measure defined as $F_v(G) = \lim_{T \rightarrow \infty} \sum_{t=0}^T p^1_{v,G}(t)/T$ for every graph $G=(V,E,b,c)$ satisfies recursive equation
\[
  x_v=\sum_{(u,v) \in \Gamma^-_v(G)} x_u \cdot c(u,v) /\deg^+_u(G).
\]
To this end, we sum both sides of Equation~\eqref{eq:proposition:walk:rec:si:1} for all $t \in \{1,\dots, T\}$, to get
\[
\sum_{t=1}^T p^1_{v,G}(t) =  \sum_{(u,v) \in \Gamma^-_v(G)} \frac{c(u,v)}{\deg^+_u(G)} \sum_{t=0}^{T-1} p^1_{u,G}(t)
\]
Now, if we add $p^1_{v,G}(0) = b(v)$ to both sides of the equation and divide each side by $T$, we obtain
\begin{equation*}
\sum_{t=0}^T \frac{p^1_{v,G}(t)}{T} = \frac{b(v)}{T} + \sum_{(u,v) \in \Gamma^-_v(G)}  \frac{c(u,v)}{\deg^+_u(G)} \sum_{t=0}^{T-1} \frac{p^1_{u,G}(t)}{T}    
\end{equation*}
When $T$ approaches infinity, $b(v)/T$ approaches zero. Hence,
\begin{equation}
\label{eq:proposition:walk:rec:si:3}
    F_v(G) = \sum_{(u,v) \in \Gamma^-_v(G)}  \frac{c(u,v)}{\deg^+_u(G)} \lim_{T \rightarrow \infty} \sum_{t=0}^{T-1} \frac{p^1_{u,G}(t)}{T}    
\end{equation}
Thus, it remains to show that $p^1_{u,G}(T)/T$ approaches zero as well.
To this end, observe that from Equation~\eqref{eq:proposition:walk:rec:si:2} the sum of $p^1_{u,G}(t)$ for all $u \in V$ is constant for all $t \in \mathbb{N}$.
Thus, the value of $p^1_{u,G}(t)$ is bounded.
Hence, term $p^1_{u,G}(T)/T$ indeed approaches zero
Therefore, from Equation~\eqref{eq:proposition:walk:rec:si:3} we get
\(
F_v(G) = \sum_{(u,v) \in \Gamma^-_v(G)}  F_u(G) \cdot c(u,v)/\deg^+_u(G)
\)
which is exactly Seeley index recursive equation.
\end{proof}


\subsection{Proof of Theorems~\ref{theorem:si} and~\ref{theorem:ev}}
\label{sec:proofs:ev:si}

In this section, we present the full proofs of Theorems~\ref{theorem:si} and~\ref{theorem:ev} which state that our axioms uniquely characterize Seeley index and Eigenvector centrality.
More in detail, we begin with Lemma~\ref{lemma:ev-si:equal-on-regular} in which we prove that Eigenvector centrality is equal to Seeley index in every out-regular graph.
Then, in Lemmas~\ref{lemma:axioms:si} and~\ref{lemma:axioms:ev} we show that Seeley index and Eigenvector centrality indeed satisfy our axioms.
Lemmas~\ref{lemma:ev-si:super-regular-graphs} and~\ref{lemma:ev-si:regular-graphs} state that any centrality measure satisfying our axioms indicate the same centralities for particular subclasses of graphs.
Finally, in Lemma~\ref{lemma:si:all-graphs} we prove that LOC, ED, NC, EM, and CY uniquely characterize Seeley index and in Lemma~\ref{lemma:ev:all-graphs} that LOC, ED, NC, EC, and CY uniquely characterize Eigenvector centrality.

\begin{lemma}
\label{lemma:ev-si:equal-on-regular}
For every graph $G=(V,E,b,c) \in \mathcal{G}^{EV}$ that is out-regular we have
$$EV_v(G)=SI_v(G) \quad \mbox{for every } v \in V.$$
\end{lemma}
\begin{proof}
Since there exist $\lambda \in \mathbb{R}_{\ge 0}$ such that $\deg^+_u(G)=\lambda$ for every $u \in V$, for every node $v \in V$ and step $t \in \mathbb{N}$ we get
$$w^{1/\lambda}_{v,G}(t) \! = \! \sum_{\substack{\omega \in \Omega_t(G) :\\ \omega(t)=v}} \!  b(\omega(0) \! ) \! \prod_{i=0}^{t-1} \! \frac{c(\omega(i),\! \omega(i\! + \! 1) \! )}{\deg^+_{\omega(i)}(G)} \!  = \! p^1_{v,G}(t).$$
Thus, Equations~\eqref{eq:walk:si} and~\eqref{eq:walk:ev} yield
$EV_v(G)=SI_v(G)$.
\end{proof}

\begin{lemma}
\label{lemma:axioms:si}
Seeley index defined on $\mathcal{G}^{SI}$ by Equation~\eqref{eq:walk:si} satisfies LOC, ED, NC, EM, and CY.
\end{lemma}
\begin{proof}
Let us take arbitrary graph $G=(V,E,b,c) \in \mathcal{G}^{SI}$ and consider axioms one by one.

For LOC consider graph $G'=(V',E',b',c')$ s.t. $V \cap V' = \emptyset$ and arbitrary node $v \in V$.
Observe that in $G+G'$ any walk that starts in one of the nodes in $V'$ cannot visit nodes in $V$ and vice versa.
Thus, for any $t \in \mathbb{N}$ we have that $\{\omega \in \Omega_t(G+G') : \omega(t)=v\} = \{\omega \in \Omega_t(G) : \omega(t)=v\}$.
This implies that also $p^{1}_{v,G+G'}(t)=p^{1}_{v,G}(t)$ since weights of edges in $E$ and out-degrees of nodes in $V$ are the same in both $G$ and $G+G'$.
Hence, LOC follows from Equation~\eqref{eq:walk:ev}.

For ED consider edge $(u,w) \in E$ and arbitrary node $v \in V \setminus S(u)$.
Observe that for $G \in \mathcal{G}^{SI}$ this is only possible if $u$ and $v$ belong to different strongly connected components, i.e., there exist graphs 
$$G_v = (V_v,E_v,b_{V_v},c_{E_v}) \quad \mbox{and} \quad G_u = (V_u,E_u,b_{V_u},c_{E_u})$$
such that $V_v \cap V_u = \emptyset$ and $G_v + G_u = G$.
Since Seeley index satisfies LOC, we get that $SI_v(G)=SI_v(G_v)$.
Now, if we take $G_u'=(V_u,E_u \setminus \{(u,w)\},b_{V_u},c_{E_u \setminus \{(u,w)\}})$, then still from LOC we get that $SI_v(G_v + G_u')=SI_v(G_v)$ and ED follows.

For NC consider nodes $u,w \in V$ such that $\deg^+_u(G)=\deg^+_w(G)=\deg^+_s(G)$ for every $s \in S(u)\cup S(w)$.
Observe that proportional combining of $u$ into $w$ does not affect the sum of node weights in the graph.
Thus, since Seeley index can be equivalently defined as the solution to the system of recursive equations and normalisation equation $\sum_{v \in V} SI_v(G) = \sum_{v \in V} b(v)$ (Proposition~\ref{proposition:walk:rec:si}), it suffices to show that $(x_v)_{v \in V \setminus \{u\}}$ defined as $x_v = SI_v(G)$ for every $v \in V \setminus \{u,w\}$ and $x_w = SI_u(G) + SI_w(G)$ satisfies Seeley index recursive equation (Equation~\eqref{eq:rec:si}) for graph $G' = (V',E',b',c') = C^{SI}_{u \rightarrow w}(G)$ and every $v \in V \setminus \{u\}$.
To this end, fix $v \in V \setminus \{w\}$ and observe that from Seeley index recursive equation (Equation~\eqref{eq:rec:si}) for graph $G$ we have
\begin{equation}
    \label{eq:lemma:axioms:si:nc:1}
    SI_v(G) = \sum_{(s,v) \in \Gamma^-_v(G)} SI_u(G) \cdot \frac{c(s,v)}{\deg^+_s(G)}.
\end{equation}
If $v$ is neither $w$ nor a direct successor of $u$ or $w$ in $G$, i.e., $(u,v),(w,v) \not \in \Gamma^-_v(G)$,
then proportional combining of $u$ into $w$ does not affect the incoming edges of $v$, hence $\Gamma^-_v(G)=\Gamma^-_v(G')$.
Moreover, for every edge $(s,v) \in \Gamma^-_v(G)$ its weight is unchanged, i.e., $c(s,v)=c'(s,v)$, as well as the out-degree of node $s$, i.e., $\deg^+_s(G)=\deg^+_s(G')$.
Thus, from Equation~\eqref{eq:lemma:axioms:si:nc:1} we get
\[
    x_v = \sum_{(s,v) \in \Gamma^-_v(G')} x_u \cdot \frac{c'(s,v)}{\deg^+_s(G')}
\]
which is Seeley index recursive equation for graph $G'$ and node $v \in V$.
Let us move to a case when $v$ is a direct successor of either $u$ or $w$ in $G$, i.e., $(u,v) \in \Gamma^-_v(G)$ or $(w,v) \not \in \Gamma^-_v(G)$, but $v \neq w$.
Then, Equation~\eqref{eq:lemma:axioms:si:nc:1} can be transformed into
\begin{equation}
    \label{eq:lemma:axioms:si:nc:2}
    x_v = \frac{SI_u(G) \tilde{c}(u,v)}{\deg^+_u(G)} + \frac{SI_w(G) \tilde{c}(w,v)}{\deg^+_w(G)} +
    \sum_{(s,v) \in \Gamma^-_v(G):s \not \in \{u,w\}} x_s \cdot \frac{c(s,v)}{\deg^+_s(G)},
\end{equation}
where we define $\tilde{c}$ in such a way that $\tilde{c}(e)=c(e)$ if $e \in E$ and $\tilde{c}(e)=0$ otherwise.
As we have assumed: $\deg^+_u(G)=\deg^+_w(G)$.
Moreover, observe that $\deg^+_w(G) = \deg^+_w(G')$.
Furthermore, for every $(s,v) \in \Gamma^-_v(G)$ such that $s \not \in \{u,w\}$ we have $c'(s,v)=c(s,v)$ and $\deg^+_s(G)=\deg^+_s(G')$.
Also, we have that
$\{ \!(s,v) \! \in \! \Gamma^-_v(G) \! :\! s \! \not \in \! \{u,w\} \!\} \! =
\! \{\! (s,v) \! \in \! \Gamma^-_v(G')\! :\! s \! \not \in \! \{u,w\}\! \}$.
Finally, by the definition of proportional combining of nodes:
$c'(w,v) = (SI_u(G)\tilde{c}(u,v) + SI_w(G)\tilde{c}(w,v))/(SI_u(G) + SI_w(G))$.
Combining these facts with Equation~\eqref{eq:lemma:axioms:si:nc:2} we get
\[
    x_v = \frac{x_w c'(w,v)}{\deg^+_w(G')} + \sum_{(s,v) \in \Gamma^-_v(G'):s \neq w} x_s \cdot \frac{c'(s,v)}{\deg^+_s(G')},
\]
which is Seeley index recursive equation for graph $G'$ and node $v$.
Therefore, it remains to consider node $w$.
From Seeley index recursive equation (Equation~\eqref{eq:rec:si}) for graph $G$ we have
\begin{equation}
    \label{eq:lemma:axioms:si:nc:3}
    SI_w(G) = \frac{SI_u(G) \tilde{c}(u,w)}{\deg^+_u(G)} + \frac{SI_w(G) \tilde{c}(w,w)}{\deg^+_w(G)} +
    \sum_{(s,w) \in \Gamma^-_w(G):s \not \in \{u,w\}} x_s \cdot \frac{c(s,w)}{\deg^+_s(G)}
\end{equation}
and analogous equation for $SI_u(G)$.
From the definition of proportional combining we obtain
$c'(w,w) = (SI_u(G) (\tilde{c}(u,w) + \tilde{c}(u,u)) + SI_w(G) (\tilde{c}(w,w)+ \tilde{c}(w,u)))/(SI_u(G) + SI_w(G))$.
Thus, taking into account that  $\deg^+_u(G)=\deg^+_w(G)= \deg^+_w(G')$,
$\{\! (s,w) \!\in \!\Gamma^-_w(G')\! :\! s \neq w\} \! =\! \{\! (s,u) \!\in \!\Gamma^-_u(G):s \!\not \in \!\{u,w\}\!\} \cup
\{\!(s,w) \!\in \!\Gamma^-_w(G)\! :\! s \not \in \{u,w\}\!\}$,
and that
$c'(s,w)=\tilde{c}(s,u) + \tilde{c}(s,w)$ and $\deg^+_s(G)\!=\!\deg^+_s(G')$ for every $(s,w) \!\in \!\Gamma^-_w(G')$ such that $s \neq w$, we get
\[
    x_w = \frac{x_w c(w,w)}{\deg^+_w(G')} + \sum_{(s,w) \in \Gamma^-_w(G'):s \neq w} x_s \cdot \frac{c'(s,w)}{\deg^+_s(G')},
\]
which is Seeley index recursive equation for graph $G'$ and node $w$.

For EM consider arbitrary nodes $u,v \in V$, constant $x \in \mathbb{R}_{\ge 0}$ and graph
$G' = (V,E,b,c_{-\Gamma^+_u(G)} + x \cdot c_{\Gamma^+_u(G)})$.
Observe that for every $t \in \mathbb{N}$ and walk $\omega \in \Omega_t(G)$ such that $\omega(t)=v$ the value
$$b(\omega(0))\cdot \prod_{i=0}^{t-1} \frac{c(\omega(i),\omega(i+1))}{\deg^+_{\omega(i)}(G)}$$
is the same for both $G$ and $G'$ (both numerator and denominator for all $\omega(i)=u$ is multiplied by $x$).
Thus, we have that $p^1_{v,G}(t) = p^1_{v,G'}(t)$ does not change as well.
Summing for all $t \in \mathbb{N}$, we get that $SI_v(G)=SI_v(G')$ from Equation~\eqref{eq:walk:si}.

Finally, for CY observe that if graph $G$ is a cycle graph, then from Equation~\eqref{eq:rec:si} for every node $v \in V$ we have that $SI_v(G) = SI_u(G)$
where $u$ is the node that precedes $v$ in the cycle, i.e., $\Gamma^-_v(G)=\{(u,v)\}$.
Thus, all nodes have equal centralities.
From Proposition~\ref{proposition:walk:rec:si} we get $|V| \cdot SI_v(G) = \sum_{v \in V} SI_v(G) = \sum_{v \in V}b(v)$, hence $SI_v(G) = \sum_{v \in V}b(v)/|V|$.
\end{proof}

\begin{lemma}
\label{lemma:axioms:ev}
Eigenvector centrality defined on $\mathcal{G}^{EV}$ by Equation~\eqref{eq:walk:ev} satisfies LOC, ED, NC, EC, and CY.
\end{lemma}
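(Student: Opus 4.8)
The plan is to mirror the proof of Lemma~\ref{lemma:axioms:kp}, with two structural differences: every step must respect the principal eigenvalue $\lambda$ that ties together the components of a graph in $\mathcal{G}^{EV}$, and Edge Compensation replaces Edge Multiplication. Throughout I will use the standing fact that a graph in $\mathcal{G}^{EV}$ is a disjoint union of strongly connected graphs all sharing the same $\lambda$ (the sum operation adds no cross-edges), so that for every node $u$ the successor set $S(u)$ is exactly the strongly connected component containing $u$. I will also lean on Lemma~\ref{lemma:ev-kp:equal-on-regular} to replace $EV$ by $KP$ on out-regular components. For \textbf{LOC}, membership $G+G' \in \mathcal{G}^{EV}$ forces $\lambda(G+G')=\lambda(G)$, and since no walk crosses between the disjoint $V$ and $V'$ we get $w^{1/\lambda}_{v,G+G'}(t)=w^{1/\lambda}_{v,G}(t)$ for $v\in V$; the Cesàro average in Equation~\eqref{eq:walk:ev} then gives the identity. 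For \textbf{ED}, a node $v\in V\setminus S(u)$ sits in a different component than $u$, and the deleted edge $(u,w)$ stays inside $u$'s component, so writing $G$ as the sum of $v$'s component and the rest and applying LOC twice (the restricted axiom keeps the deleted graph in $\mathcal{G}^{EV}$, so $\lambda$ is unchanged) gives the result. For \textbf{CY}, an out-regular cycle graph is strongly connected and out-regular, hence $EV_v=KP_v$ by Lemma~\ref{lemma:ev-kp:equal-on-regular}, and the statement follows from the Cycle part of Lemma~\ref{lemma:axioms:kp}.

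For \textbf{NC}, this is where the equal-out-degree hypothesis does its work, and why it is dispensable for $KP$ but necessary for $EV$. Because $S(u)$ and $S(w)$ are the components of $u$ and $w$, the assumption $\deg^+_s(G)=\deg^+_u(G)$ for all $s\in S(u)\cup S(w)$ says precisely that these components are out-regular; their common out-degree must in fact equal $\lambda$, since for an out-regular component the all-ones vector is a positive left eigenvector of the adjacency matrix, forcing its out-degree to be the Perron value. On out-regular components $EV=KP$ by Lemma~\ref{lemma:ev-kp:equal-on-regular}, so $C^{EV}_{u\to w}$ coincides with $C^{KP}_{u\to w}$ there; moreover proportional combining keeps the out-degree of the merged node equal to a convex combination of $\deg^+_u(G)$ and $\deg^+_w(G)$, both $\lambda$, hence again $\lambda$, so the affected component of $C^{EV}_{u\to w}(G)$ is again out-regular and $EV=KP$ on it as well. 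The Node Combination identity for $EV$ then follows from the one for $KP$ (Lemma~\ref{lemma:axioms:kp}) on the affected component and from LOC on the remaining components.

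For \textbf{EC}, the genuinely new axiom, let $G'$ be the graph produced by Edge Compensation at $u$ with constant $x$, and let $D$ be the diagonal scaling with $d_u=x$ and $d_v=1$ otherwise. First I observe that the adjacency matrices satisfy $A'=D^{-1}AD$: multiplying the non-loop out-edges of $u$ by $x$, dividing its non-loop in-edges by $x$, and leaving the self-loop $(u,u)$ and node weight handled as in the axiom is exactly this similarity. Hence $G'$ has the same spectrum as $G$, and since the scaling is positive it preserves strong connectivity, so $\lambda$ stays the principal eigenvalue and $G'\in\mathcal{G}^{EV}$; this legitimises running both parallel processes with the same parameter $1/\lambda$. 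The core is a per-walk identity: in passing from $G$ to $G'$, the weight $b(\omega(0))\prod_i c(\omega(i),\omega(i+1))/\lambda$ of a walk $\omega$ ending at $v$ is multiplied by $x$ for each non-loop step out of $u$, by $1/x$ for each non-loop step into $u$, and by $1/x$ if $\omega(0)=u$, while the self-loop and all factors $1/\lambda$ are untouched. Counting sojourns at $u$ (four cases, according to whether $\omega$ starts and/or ends at $u$), these factors telescope to the single net factor $d_v^{-1}$, giving $w^{1/\lambda}_{v,G'}(t)=w^{1/\lambda}_{v,G}(t)/d_v$ for every $t$; the Cesàro average then yields $EV_u(G')=EV_u(G)/x$ and $EV_v(G')=EV_v(G)$ for $v\neq u$, with LOC again covering components other than $u$'s.

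I expect the bookkeeping in \textbf{EC} to be the main obstacle: one must match non-loop entries to and exits from $u$ so that the product collapses to the single factor $d_v^{-1}$, carefully excluding the self-loop $(u,u)$ and the process factor $1/\lambda$ from the scaling, and keeping separate track of the starting node weight. The recurring secondary point—absent from the $KP$ proof—is that each operation must be shown to keep the graph inside $\mathcal{G}^{EV}$ with the same principal eigenvalue $\lambda$, which is exactly what permits using the fixed parameter $1/\lambda$ in the parallel process both before and after every modification.
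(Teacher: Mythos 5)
Your proposal is correct and takes essentially the same approach as the paper: LOC, ED and CY are handled by reduction to the Katz-prestige proofs, NC is reduced via out-regularity of the components of $u$ and $w$ and Lemma~\ref{lemma:ev-kp:equal-on-regular} to the NC property of Katz prestige plus LOC, and EC is established by the same walk-based entry/exit cancellation argument used in the paper. Your only addition is the explicit similarity observation $A' = D^{-1}AD$ justifying that the principal eigenvalue and membership in $\mathcal{G}^{EV}$ are preserved under the EC operation---a point the paper's proof leaves implicit---which is a welcome but minor refinement rather than a different route.
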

\begin{proof}
Let us take an arbitrary graph $G=(V,E,b,c) \in \mathcal{G}^{EV}$ and consider axioms one by one.

For LOC and ED the proof is analogous to the proof that Seeley index satisfies LOC and ED (Lemma~\ref{lemma:axioms:si}).


For NC consider nodes $u,w \in V$ such that $\deg^+_u(G)=\deg^+_w(G)=\deg^+_s(G)$ for every $s \in S(u) \cup S(w)$.
If all nodes in $V$ are successors of either $u$ or $w$, i.e., $V = S(u) \cup S(w)$, then the graph is out-regular and from Lemma~\ref{lemma:ev-si:equal-on-regular} $EV_v(G)=SI_v(G)$ for every $v \in V$.
Thus, since proportional combining preserves out-regularity, Eigenvector centrality satisfies NC because Seeley index satisfies NC (Lemma~\ref{lemma:axioms:si}).
If there are nodes in $V$ that are not successors of neither $u$ nor $w$, then since $G$ is a disjoint sum of strongly connected components it can be decomposed as a sum of two graphs:
$G_{uw} = (V_{uw}, E_{uw}, b_{V_{uw}},c_{E_{uw}})$  and $G' = (V', E', b_{V'},c_{E'})$ 
s.t. $V_{uw} \!=\! S(v) \cup S(u)$, $V_{uw} \cap V' \!=\! \emptyset$, and $G_{uw} + G' \!=\! G$.
Observe that $C^{EV}_{u \rightarrow w}(G) = C^{EV}_{u \rightarrow w}(G_{uw}) + G'$.
Also, $G_{uw}$ is out-regular, hence from Lemma~\ref{lemma:ev-si:equal-on-regular} we have $EV_v(G_{uw})=SI_v(G_{uw})$.
Therefore, NC follows from LOC and the fact that Seeley index satisfies NC (Lemma~\ref{lemma:axioms:si}).

For EC take arbitrary node $u \in V$ and constant $x > 0$, and denote $G=(V,E,b',c')$ such that $b'=b_{-u} + b_u/x$ and $c' = c_{-\Gamma^{\pm}_u(G) \setminus \{(u,u)\}} + c_{\Gamma^{+}_u(G) \setminus \{(u,u)\}} \cdot x + c_{\Gamma^{+}_u(G) \setminus \{(u,u)\}}/x$. 
Fix $v \in V \setminus \{u\}$.
Observe that for every $t \in \mathbb{N}$ and walk $\omega \in \Omega_t(G)$ that ends in $v$, i.e., s.t. $\omega(t) = v$, the value
$b(\omega(0)\!) \cdot \prod_{i=1}^{t} \frac{1}{\lambda}c(\omega(i-1),\omega(i)\! )$
is equal for graph $G$ and $G'$.
It holds because, since $\omega(t) \neq u$, for every step $i \in \{0,\dots,t-1\}$ in which the walk arrives at node $u$, i.e., $\omega(i-1) \neq u$ and $\omega(i)=u$, there exist step $j > i$ in which the walk departs from $u$, i.e., $\omega(k)=u$ for every $k \in \{i,\dots,j-1\}$ and $\omega(j)\neq u$.
Now, the factor for step $i$ decreases by $x$, i.e., $c'(\omega(i-1),\omega(i))=c(\omega(i-1),\omega(i))/x$ (or $b'(i) = b(i)/x$ if $i=0$), but at the same time the factor for step $j$ increases by $x$, i.e., $c'(\omega(j-1),\omega(j))=c(\omega(j-1),\omega(j)) \cdot x$.
Thus, indeed $b(\omega(0)\!) \cdot \prod_{i=1}^{t} \frac{1}{\lambda}c(\omega(i-1),\omega(i)\! )$
is equal for graph $G$ and $G'$.
Hence, we have also that $w^{1/\lambda}_{v,G}(t) = w^{1/\lambda}_{v,G'}(t)$ and from Equation~\eqref{eq:walk:ev} we obtain $EV_v(G)=EV_v(G')$.
Now, for node $u$, for every $t \in \{1,\dots\}$ and walk $\omega \in \Omega_t(G)$ such that $\omega(t)=u$ the value 
$b(\omega(0)\!) \cdot \prod_{i=0}^{t-2} \frac{1}{\lambda}c(\omega(i),\omega(i-1)\! )$
also does not change for the same reason.
However, since $\omega(t)=u$ we have that $c'(\omega(t-1),\omega(t))=c(\omega(t-1),\omega(t))/x$.
Thus, $w^{1/\lambda}_{u,G'}(t) = w^{1/\lambda}_{v,G}(t) /x$.
Similarly, for $t=0$ we have
$w^{1/\lambda}_{u,G'}(0) = b'(u) = b(u)/x = w^{1/\lambda}_{v,G}(t) /x.$
Hence, from Equation~\eqref{eq:walk:ev} we have $EV_u(G') = EV_u(G)/x$.

Finally, for CY from Lemmas~\ref{lemma:ev-si:equal-on-regular} and~\ref{lemma:axioms:si} for every $v \in V$ we have 
\[
  EV_v(G)=SI_v(G)=\sum_{u \in V}b(u)/|V|,
\]
which concludes the proof.
\end{proof}



\begin{lemma}
\label{lemma:ev-si:super-regular-graphs}
If a centrality measure $F$ defined on $\mathcal{G}^{SI}$ (or $\mathcal{G}^{EV}$) satisfies LOC, ED, NC, and CY, then for every $\lambda>0$ and every strongly connected $\lambda$-out-regular graph $G=(V,E, b, c)$ such that $G \in \mathcal{G}^{SI}$ (or $G \in \mathcal{G}^{EV}$), $c(e)/c(e') \in \mathbb{Q}$ for every $e,e' \in E$, and $\sum_{v \in V}b(v)=1$, we have
\begin{equation}
    \label{eq:lemma:ev-si:super-regular-graphs:0}
    F_v(G) = SI_v(G) = EV_v(G) \quad \mbox{for every } v \in V
\end{equation}
and $F_v(V,E,x \cdot b, c) = x \cdot F_v(G)$ for every $x > 0$.
\end{lemma}
\begin{proof}
Let us begin with Equation~\eqref{eq:lemma:ev-si:super-regular-graphs:0}.
The second equality, i.e., that $SI_v(G)=EV_c(G)$, comes from Lemma~\ref{lemma:ev-si:equal-on-regular}, thus let us focus on proving that $F_v(G)=SI_v(G)$ for every $v \in V$.
To this end, let us first define the \emph{impact} of an edge.
For any strongly connected graph $G=(V,E,b,c)$ and edge $(u,v) \in E$ let the impact of $(u,v)$ be equal to
$I_G(u,v) = SI_u(G) \cdot c(u,v) / \deg^+_u(G)$.
Intuitively, impact measures the amount of centrality that node $u$ transfers to node $v$.
Indeed, from Seeley index recursive equation (Equation~\eqref{eq:rec:si}) we see that the centrality of a node is equal to both the sum of impacts of its outgoing edges and the sum of impacts of its incoming edges, i.e.,
\begin{equation}
    \label{eq:lemma:ev-si:super-regular-graphs:1}
    \sum_{e \in \Gamma^-_v(G)} I_G(e) = SI_v(G) = \sum_{e \in \Gamma^+_v(G)} I_G(e).
\end{equation}
Another property that we will use in the proof is that proportional combining preserves the impact of edges.
More in detail, for any $(u,v),(u',v') \in E$ such that $v' \not \in \{u,u'\}$ and graph $G'=C^{SI}_{u' \rightarrow u}(G)$ we have
\begin{equation}
    \label{eq:lemma:ev-si:super-regular-graphs:2}
    I_{G'}(u,v) =
    \begin{cases}
        I_G(u,v) & \mbox{if } v \neq v', \\
        I_G(u,v) + I_G(u',v')  & \mbox{otherwise.}
    \end{cases}
\end{equation}
This comes from the fact, that the node resulting from the combination has the weight of its outgoing edges decreased by $SI_u(G)/(SI_u(G)+SI_{u'}(G))$, but at the same time its centrality increases by the same value.
Hence, the impact of its outgoing edges is unaffected.

From Proposition~\ref{proposition:walk:rec:si} we know that Seeley index of nodes in $V$ can be equivalently defined as the solution of the system of recursive equations and normalization equation $\sum_{v \in V} SI_v(G) = \sum_{v \in V} b(v) = 1$.
Observe that since proportions of the weights of edges are rational, i.e., it holds that $c(e)/c(e') \in \mathbb{Q}$ for every $e,e' \in E$, then also the coefficients in the system of equations, i.e., $c(u,v)/\deg^+_u(G)$, are rational (they are reciprocal of  $\deg^+_u(G)/c(u,v)$, which are the sums of proportions $c(e)/c(u,v)$ for all $e \in \Gamma^+_u(G)$).
If all coefficients are rational, then the solution, i.e., $SI_v(G)$ for every $v \in V$, is also rational.
Moreover, since both $SI_v(G)$ for every $v \in V$ and $c(u,v)/\deg^+_u(G)$ for every $(u,v) \in E$ are rational,
the impact of every edge is rational as well, i.e., $I_G(e) \in \mathbb{Q}$ for every $e \in E$.
Building upon this, we will consider a walk on graph $G$ that follows each edge the number of times that is proportional to its impact.
Next, we will construct the cycle graph based on this walk and by proportional combining of its nodes transform it into the original graph $G$.
Hence, based on CY and NC, we will establish centrality $F$ of each node.

\begin{figure}[t]
\centering
\begin{tikzpicture}
  \def\sx{0.7cm} 
  \def\sy{0.6cm} 
  \def\x{0cm} 
  \def\y{0cm} 
  \def\arrdist{0.35cm}

  \tikzset{
    node_blank/.style={circle,draw,minimum size=0.5cm,inner sep=0, color=white}, 
    node/.style={circle,draw,minimum size=0.5cm,inner sep=0, fill = black!05, font=\footnotesize}, 
    node_emph/.style={circle, minimum size=0.65cm, inner sep=0, fill = black!15, font=\footnotesize}, 
    edge/.style={-latex,above,font=\footnotesize}, 
    el/.style={below,font=\footnotesize}, 
    operation/.style={sloped,>=stealth,above,font=\footnotesize},
    arrow/.style={draw, single arrow, minimum width = 0.9cm, minimum height=\x-6*\s+\s, fill=black!10},
    blank/.style={}
  } 
  
    
  \node[node] (a) at (\x+1*\sx, 4*\sy + \y) {$a$};
  \node[node] (b) at (\x+3*\sx, 4*\sy + \y) {$b$};
  \node[node] (c) at (\x+4*\sx, 2*\sy + \y) {$c$};
  \node[node] (d) at (\x+3*\sx, 0*\sy + \y) {$d$};
  \node[node] (e) at (\x+1*\sx, 0*\sy + \y) {$e$};
  \node[node] (f) at (\x+0*\sx, 2*\sy + \y) {$f$};
  \node[blank] (blank) at (\x+4*\sx+0.2cm, -0.75cm + \y) {$G$};
  
  \path[->,draw,thick]
  (a) edge[edge, bend left=20, looseness = 0.9] node[above, pos=0.4] {$\lambda$} (b)
  (b) edge[edge, bend left=20, looseness = 0.9] node[right, pos=0.2] {$\lambda$} (c)
  (c) edge[edge, bend left=20, looseness = 0.9] node[right] {$\lambda$} (d)
  (d) edge[edge, bend left=20, looseness = 0.9] node[below] {$\lambda/2$} (e)
  (e) edge[edge, bend left=20, looseness = 0.9] node[left, pos =0.2] {$\lambda/2$} (f)
  (d) edge[edge] node[right, pos=0.6] {$\lambda/2$} (a)
  (e) edge[edge] node[above, pos=0.1] {$\lambda/2$} (c)
  (f) edge[edge] node[above, pos=0.1] {$\lambda$} (b)
  ;

  \def\x{4.5cm} 
  
  \node[node] (a) at (\x+1*\sx, 4*\sy + \y) {$a$};
  \node[node] (b) at (\x+3*\sx, 4*\sy + \y) {$b$};
  \node[node] (c) at (\x+4*\sx, 2*\sy + \y) {$c$};
  \node[node] (d) at (\x+3*\sx, 0*\sy + \y) {$d$};
  \node[node] (e) at (\x+1*\sx, 0*\sy + \y) {$e$};
  \node[node] (f) at (\x+0*\sx, 2*\sy + \y) {$f$};
  \node[blank] (blank) at (\x+4*\sx+0.2cm, -0.75cm + \y) {$I_{G}$};
  
  \path[->,draw,thick]
  (a) edge[edge, bend left=20, looseness = 0.9] node[above, pos=0.4] {$\frac{2}{16}$} (b)
  (b) edge[edge, bend left=20, looseness = 0.9] node[right, pos=0.2] {$\frac{3}{16}$} (c)
  (c) edge[edge, bend left=20, looseness = 0.9] node[right] {$\frac{4}{16}$} (d)
  (d) edge[edge, bend left=20, looseness = 0.9] node[below] {$\frac{2}{16}$} (e)
  (e) edge[edge, bend left=20, looseness = 0.9] node[left, pos =0.2] {$\frac{1}{16}$} (f)
  (d) edge[edge] node[right, pos=0.6] {$\frac{2}{16}$} (a)
  (e) edge[edge] node[above, pos=0.1] {$\frac{1}{16}$} (c)
  (f) edge[edge] node[above, pos=0.1] {$\frac{1}{16}$} (b)
  ;
  
  
  
   \def\x{9cm} 

  \node[node] (a) at (\x+1*\sx, 4*\sy + \y) {$a$};
  \node[node] (b) at (\x+3*\sx, 4*\sy + \y) {$b$};
  \node[node] (c) at (\x+4*\sx, 2*\sy + \y) {$c$};
  \node[node] (d) at (\x+3*\sx, 0*\sy + \y) {$d$};
  \node[node] (e) at (\x+1*\sx, 0*\sy + \y) {$e$};
  \node[node] (f) at (\x+0*\sx, 2*\sy + \y) {$f$};
  \node[blank] (blank) at (\x+4*\sx+0.2cm, -0.75cm + \y) {$\hat{G}$};
  
  \path[->,draw,thick]
  (c) edge[edge, bend right=30, looseness = 0.9] (d)
  (c) edge[edge, bend right=10, looseness = 0.9] (d)
  (c) edge[edge, bend right=-10, looseness = 0.9] (d)
  (c) edge[edge, bend right=-30, looseness = 0.9] (d)
  (b) edge[edge, bend right=20, looseness = 0.9] (c)
  (b) edge[edge] (c)
  (b) edge[edge, bend right=-20, looseness = 0.9] (c)
  (a) edge[edge, bend right=-20, looseness = 0.9] (b)
  (a) edge[edge] (b)
  (f) edge[edge] (b)
  (e) edge[edge, bend right=-20, looseness = 0.9] (f)
  (e) edge[edge, bend right=-15, looseness = 0.9] (c)
  (d) edge[edge, bend right=-20, looseness = 0.9] (e)
  (d) edge[edge] (e)
  (d) edge[edge, bend right=10, looseness = 0.9](a)
  (d) edge[edge, bend right=-10, looseness = 0.9](a)
  ;

  

\end{tikzpicture}
\caption{An illustration to the proof of Lemma~\ref{lemma:ev-si:super-regular-graphs}.
The leftmost graph, $G$, is a $\lambda$-out-regular graph with the weight of each edge shown.
The middle graph is graph $G$ as well, but with impact of each edge shown instead of its weight.
The rightmost graph, $\hat{G}$, is an unweighted multi-graph obtained from $G$.}
\label{fig:lemma:impacts}
\end{figure}
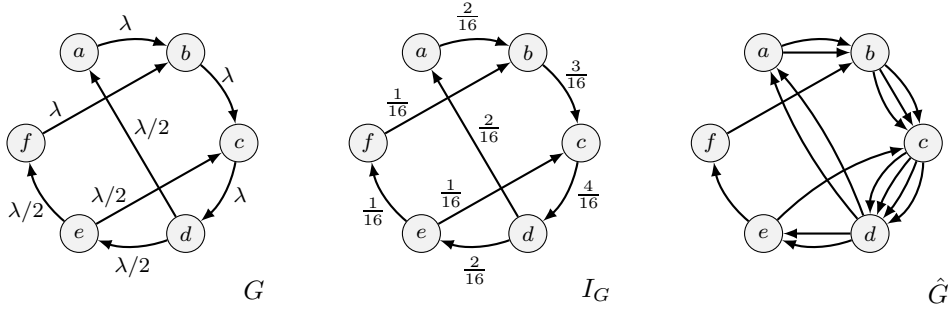

To this end, observe that since the impact of each edge is rational, there exist $N \in \mathbb{N}$ such that for every $e \in E$ the product $N \cdot I_G(e)$ is an integer.
Building upon this, let us define an auxiliary unweighted multi-graph, $\hat{G} = (V, \hat{E})$.
Its nodes are the nodes of graph $G$ and its multi-set of edges, $\hat{E} = (E,m)$, consists of edges in $E$ with the multiplicity of each edge $e \in E$ equal to $m(e) = N \cdot I_G(e)$ (see Figure~\ref{fig:lemma:impacts}).
Observe that
\begin{equation*}
    |\hat{E}| = \sum_{e \in E} m(e) = N \cdot \sum_{e \in E} I_G(e) = N \cdot \sum_{v \in V} SI_v(G) = N.
\end{equation*}

Now, from Equation~\eqref{eq:lemma:ev-si:super-regular-graphs:1} we get that in $\hat{G}$ every node has equal number of incoming and outgoing edges (when accounted for their multiplicity).
Hence, from Euler theorem for directed graphs, $\hat{G}$ is an Euler multi-graph.
This means that there exists an Euler walk $\varepsilon=(\varepsilon(0),\varepsilon(1),\dots,\varepsilon(N))$ of length $N$ in which $\varepsilon(0) = \varepsilon(N)$ and each edge is followed exactly once, i.e,
\[
	\big| \{ i : \varepsilon(i) \!= \! u \ \land \ \varepsilon(i+1)\! =\! v\}\big| \! =\! m(u,v) \ \ \mbox{for every } (u,v) \! \in  \! E.
\]
For each node $v \in V$ denote the indexes on which walk $\varepsilon$ visits node $v$, i.e., let
$E_v = \{ i \in \{1,\dots,N\} : \varepsilon(i)=v\}.$
Observe that the number of visits in $v$, is equal to the in-degree (or out-degree as it is equal) of $v$ in multi-graph $\hat{G}$, i.e., $|E_v| = N \cdot SI_v(G)$.

Next, based on Euler walk $\varepsilon$, let us construct a $\lambda$-out-regular cycle graph (see Figure~\ref{fig:lemma:euler-cycle-graph} for an illustration).
To this end, let us consider set of $N$ pairwise-distinct nodes $V' = \{v_1,\dots,v_N\}$ that will correspond to consecutive steps of $\varepsilon$.
For later convenience, let us take them in such a way that some of them are equal to particular nodes from $V$.
More in detail, for every $v \in V$ let node with index $i = \min (E_v)$, i.e., the index of the first step in which walk $\varepsilon$ visits node $v$, be equal to node $v$, i.e., $v_i=v$.
Now, the graph is given by $G'=(V',E',b',c')$,
where $E' = \{(v_1,v_2),\dots, (v_{N-1},v_N),(v_N,v_1)\}$,
node weights are weights of particular node visited by $\varepsilon$ divided by the total number of its visits, i.e.,
$b'(v_i) = b(\varepsilon(i))/|E_{\varepsilon(i)}|$ for every $i \in \{1,\dots,N\}$, and
$c'(e) = \lambda$ for every $e \in E'$.
Since $G'$ is a $\lambda$-out-regular cycle graph, from CY we get that
\begin{equation}
    \label{eq:lemma:ev-si:super-regular-graphs:3}
    F_v(G') = \sum_{u \in V'} b'(u) / N  = 1/N
\end{equation}

\begin{figure}[t]
\centering
\begin{tikzpicture}
  \def\sx{0.7cm} 
  \def\sy{0.6cm} 
  \def\x{0cm} 
  \def\y{0cm} 
  \def\arrdist{0.35cm}

  \tikzset{
    node_blank/.style={circle,draw,minimum size=0.5cm,inner sep=0, color=white}, 
    node/.style={circle,draw,minimum size=0.5cm,inner sep=0, fill = black!05, font=\footnotesize}, 
    node_emph/.style={circle, minimum size=0.65cm, inner sep=0, fill = black!15, font=\footnotesize}, 
    edge/.style={-latex,above,font=\footnotesize}, 
    el/.style={below,font=\footnotesize}, 
    operation/.style={sloped,>=stealth,above,font=\footnotesize},
    arrow/.style={draw, single arrow, minimum width = 0.9cm, minimum height=\x-6*\s+\s, fill=black!10},
    blank/.style={}
  }

  \node[node] (a) at (\x+1*\sx, 4*\sy + \y) {$a$};
  \node[node] (b) at (\x+3*\sx, 4*\sy + \y) {$b$};
  \node[node] (c) at (\x+5*\sx, 4*\sy + \y) {$c$};
  \node[node] (d) at (\x+7*\sx, 4*\sy + \y) {$d$};
  \node[node] (e) at (\x+9*\sx, 4*\sy + \y) {$e$};
  \node[node] (f) at (\x+11*\sx, 4*\sy + \y) {$f$};
  \node[node] (b1) at (\x+13*\sx, 4*\sy + \y) {$b'$};
  \node[node] (c1) at (\x+14*\sx, 2*\sy + \y) {$c'$};
  \node[node] (d1) at (\x+13*\sx, 0*\sy + \y) {$d'$};
  \node[node] (e1) at (\x+11*\sx, 0*\sy + \y) {$e'$};
  \node[node] (c2) at (\x+9*\sx, 0*\sy + \y) {$c''$};
  \node[node] (d2) at (\x+7*\sx, 0*\sy + \y) {$d''$};
  \node[node] (a1) at (\x+5*\sx, 0*\sy + \y) {$a'$};
  \node[node] (b2) at (\x+3*\sx, 0*\sy + \y) {$b''$};
  \node[node] (c3) at (\x+1*\sx, 0*\sy + \y) {$c'''$};
  \node[node] (d3) at (\x+0*\sx, 2*\sy + \y) {$d'''$};
  \node[blank] (blank) at (\x+14*\sx+0.2cm, -0.75cm + \y) {$G'$};
  
  \path[->,draw,thick]
  (c) edge[edge] (d)
  (c1) edge[edge, bend left=20, looseness = 0.9] (d1)
  (c2) edge[edge] (d2)
  (c3) edge[edge, bend left=20, looseness = 0.9] (d3)
  (b) edge[edge] (c)
  (b1) edge[edge, bend left=20, looseness = 0.9] (c1)
  (b2) edge[edge] (c3)
  (a) edge[edge] (b)
  (a1) edge[edge] (b2)
  (f) edge[edge] (b1)
  (e) edge[edge] (f)
  (e1) edge[edge] (c2)
  (d) edge[edge] (e)
  (d1) edge[edge] (e1)
  (d2) edge[edge] (a1)
  (d3) edge[edge, bend left=20, looseness = 0.9] (a)
  ;
  
\end{tikzpicture}
\caption{Cycle graph, $G'$, corresponding to an example Euler cycle on multi-graph $\hat{G}$ from  Fig.~\ref{fig:lemma:impacts}.
By proportional combining of nodes that are labeled with the same letter, we can obtain graph $G$ from Fig.~\ref{fig:lemma:impacts}.}
\label{fig:lemma:euler-cycle-graph}
\end{figure}

Now, we sequentially combine nodes in $G'$ that correspond to the same node in walk $\varepsilon$ to obtain graph isomorphic to $G$.
More in detail, for every $v \in V$, let us sequentially combine every node in $\{ v_i : i \in E_v \} \setminus \{v\}$ into $v$ (recall that $v$ is also $v_i$ with $i$ being the minimal index in $E_v$).
By $G''=(V'',E'',b'',c'')$ let us denote the graph resulting from conducting this sequential combining for all nodes $v \in V$.
Observe that from NC and Equation~\eqref{eq:lemma:ev-si:super-regular-graphs:3} we get
\begin{equation}
    \label{eq:lemma:ev-si:super-regular-graphs:4}
    F_{v}(G'') = \sum_{i \in E_v} F_{v_i}(G') = |E_v| \cdot 1/N = SI_v(G)    
\end{equation}
for every $v \in V.$
Hence, in order to prove that $F_v(G)=SI_v(G)$ it remains to prove that $G=G''$.

To this end, observe that indeed $V'' = V$ since all other nodes in $V'$ have been combined into one of the nodes in $V$.
As for edges, observe that for any edge $(u,v) \in E''$ there exists $i \in \{1,\dots,N\}$ such that in the construction of graph $G''$ node $v_{i-1}$ was combined into $u$ (or $u = v_{i-1}$) and node $v_i$ was combined into $v$ (or $v = v_i$).
In result, $(\varepsilon(i-1),\varepsilon(i))=(u,v)$, hence there exist $(u,v) \in E$.
Converse reasoning is analogous.
For node weights, for every $v \in V$ we have that 
$b''(v)=\sum_{v_i : i \in E_v} b'(v_i) = b(v).$
Finally, for edge weights observe that since combining nodes preserves the impact of edges (Equation~\eqref{eq:lemma:ev-si:super-regular-graphs:2}), the impact of edge $(u,v) \in E''$ is the sum of impacts of edges $(v_{i-1},v_i) \in E'$ such that $v_{i-1}$ has been combined into $u$ (or $v_{i-1}=u)$ and $v_i$ into $v$ (or $v_i=v$).
There are exactly $m(u,v)/N$ of such edges and the impact of every edge in graph $G'$ is equal to $1/N$, thus
\[
    I_{G''}(u,v) = m(u,v)/N = I_G(u,v).    
\]
Since, $(u,v)$ in $G''$ and $G$ have the same impact,
$u$ has the same Seeley index in both graphs (Equation~\eqref{eq:lemma:ev-si:super-regular-graphs:4}),
and both graphs are $\lambda$-out-regular (combining nodes preserves out-regularity),
we get that $c''(u,v)=c(u,v)$.
Thus, indeed $G''=G$ and $F_v(G)=SI_v(G)$.

It remains to prove that for any $x>0$ we have that $F_v(V,E,x \cdot b,c) = x \cdot F_v(G)$.
To this end, consider graph $x \cdot G'=(V',E',x \cdot b',c')$ which is just graph $G'$ with node weights scaled by $x$.
Then, from CY we get that $F_v(x \cdot G)=x/N$ for every $v \in V'$.
Thus, when we perform identical sequential proportional combining of nodes as to obtain graph $G$ from $G'$, but we start from $x \cdot G'$, then we obtain graph $(V,E,x \cdot b, c)$, which is $G$ with node weights scaled by $x$.
Therefore, from NC we get that $F_v(V,E,x \cdot b, c)= x \cdot F_v(G)$ for every $v \in V$.
\end{proof}



\begin{lemma}
\label{lemma:ev-si:regular-graphs}
If a centrality measure $F$ defined on $\mathcal{G}^{SI}$ (or $ \mathcal{G}^{EV}$) satisfies LOC, ED, NC, and CY, then for every $\lambda>0$ and every strongly connected $\lambda$-out-regular graph $G=(V,E, b, c)$ such that $G \in \mathcal{G}^{SI}$ (or $G \in \mathcal{G}^{EV}$), we have
$$F_v(G) = SI_v(G) = EV_v(G) \quad \mbox{for every } v \in V$$
and $F_v(V,E,x \cdot b,c) = x \cdot F_v(G)$ for every $x>0$.
\end{lemma}
\begin{proof}
First, let us restrict ourselves only to strongly connected $\lambda$-out-regular graphs with unit node weights, i.e., graphs $G=(V,E, \1_{\hat{v}}, c)$, where $\hat{v} \in V$, $\1_{\hat{v}}(\hat{v})=1$ and $\1_{\hat{v}}(u)=0$ for every $u \in V \setminus \{\hat{v}\}$.
For every such graph, let us distinguish one outgoing edge of each node $u \in V$, and denote it by $e_u$, in such a way that:
(1) there exist walk $\omega$ that begins with edge $e_u$, ends in $\hat{v}$ and does not visit $u$ again before reaching $\hat{v}$,
(2) among edges satisfying condition (1) the number of other outgoing edges of $u$ with weights that are not a rational multiple of the weight of $e_u$, i.e., $|\{ e \in \Gamma^+_u(G) : c(e)/c(e_u) \not \in \mathbb{Q}\}|$, is minimal.
By $k_G$ let us denote the sum of the numbers of not-rationally-proportional edges for all nodes, i.e., let
$$ k_G = \sum_{u \in V} |\{ e \in \Gamma^+_u(G) : c(e)/c(e_u) \not \in \mathbb{Q}\}|. $$
We will prove the thesis for every strongly connected $\lambda$-out-regular graph with unit node weights by induction on $k_G$.

To this end, observe that if $k_G = 0$, then for every node $u \in V$ the weight of each edge $e \in \Gamma^+_u(G)$ can be written as $c(e_u) \cdot q_e$ for some $q_e \in \mathbb{Q}$.
Thus,
$$\lambda = \deg^+_u(G) = \sum_{e \in \Gamma^+_u(G)} c(e) = c(e_u) \cdot \left(\sum_{e \in \Gamma^+_u(G)} q_e\right).$$
Hence, $\lambda/c(e_u) \in \mathbb{Q}$ which implies that also $\lambda/c(e) \in \mathbb{Q}$ for every $e \in \Gamma^+_u(G)$ for every $u \in V$.
Therefore, if we take any $e,e' \in E$, then $c(e)/c(e') = (c(e)/ \lambda)\cdot(\lambda/c(e') \in \mathbb{Q}$.
In result, the thesis follows from Lemma~\ref{lemma:ev-si:super-regular-graphs}.

Therefore, let us focus on the case in which $k_G > 0$.
Then, there exists a node $u \in V$ and its outgoing edge $\hat{e} \in E$ such that $c(\hat{e})/c(e_u) \not \in \mathbb{Q}$.
In what follows, we will construct two additional graphs: $G'$, in which edge $\hat{e}$ is removed (possibly along with a number of nodes), and $G''$, in which the weights of outgoing edges of $u$ are adjusted so that the weight of $\hat{e}$ is a rational multiple of the weight of $e_u$.
Next, we will construct graph $G$ from the combination of $G'$ and $G''$ and since both $k_{G'}$ and $k_{G''}$ are smaller than $k_G$, this will lead us to thesis from the inductive assumption.
See Figure~\ref{fig:lemma:regular} for an illustration.

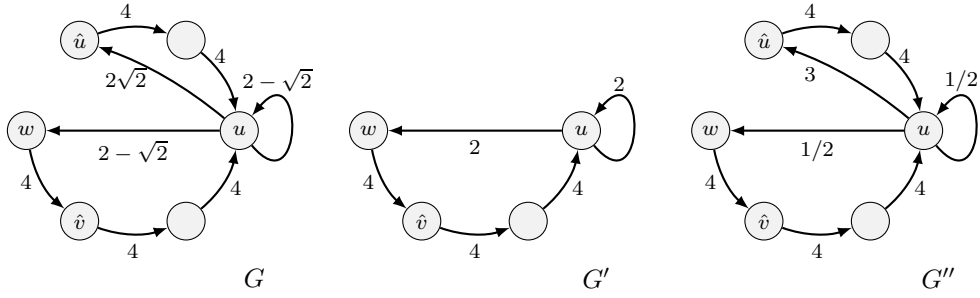
\begin{figure}[t]
\centering
\begin{tikzpicture}
  \def\sx{0.7cm} 
  \def\sy{0.6cm} 
  \def\x{0cm} 
  \def\y{0cm} 
  \def\arrdist{0.35cm}

  \tikzset{
    node_blank/.style={circle,draw,minimum size=0.5cm,inner sep=0, color=white}, 
    node/.style={circle,draw,minimum size=0.5cm,inner sep=0, fill = black!05, font=\footnotesize}, 
    node_emph/.style={circle, minimum size=0.65cm, inner sep=0, fill = black!15, font=\footnotesize}, 
    edge/.style={-latex,above,font=\footnotesize}, 
    el/.style={below,font=\footnotesize}, 
    operation/.style={sloped,>=stealth,above,font=\footnotesize},
    arrow/.style={draw, single arrow, minimum width = 0.9cm, minimum height=\x-6*\s+\s, fill=black!10},
    blank/.style={}
  } 
  
    
  \node[node] (a) at (\x+1*\sx, 4*\sy + \y) {$\hat{u}$};
  \node[node] (b) at (\x+3*\sx, 4*\sy + \y) {};
  \node[node] (c) at (\x+4*\sx, 2*\sy + \y) {$u$};
  \node[node] (d) at (\x+3*\sx, 0*\sy + \y) {};
  \node[node] (e) at (\x+1*\sx, 0*\sy + \y) {$\hat{v}$};
  \node[node] (f) at (\x+0*\sx, 2*\sy + \y) {$w$};
  \node[blank] (blank) at (\x+4*\sx+0.2cm, -0.75cm + \y) {$G$};
  
  \path[->,draw,thick]
  (a) edge[edge, bend left=20, looseness = 0.9] node[above, pos=0.4] {$4$} (b)
  (b) edge[edge, bend left=20, looseness = 0.9] node[above, pos=0.4] {$4$} (c)
  (d) edge[edge, bend left=-20, looseness = 0.9] node[right, pos =0.4] {$4$} (c)
  (e) edge[edge, bend left=-20, looseness = 0.9] node[below] {$4$} (d)
  (f) edge[edge, bend left=-20, looseness = 0.9] node[left] {$4$} (e)
  (c) edge[edge, bend left=-10, looseness = 0.9] node[below, pos = 0.8] {$2\sqrt{2}$} (a)
  (c) edge[edge] node[below] {$2 - \sqrt{2}$} (f)
  (c) edge[edge, out = -50, in = 50, looseness = 7] node[above, pos=0.8] {$2 - \sqrt{2}$} (c)
  ;

  \def\x{4.5cm} 
  
  \node[node] (c) at (\x+4*\sx, 2*\sy + \y) {$u$};
  \node[node] (d) at (\x+3*\sx, 0*\sy + \y) {};
  \node[node] (e) at (\x+1*\sx, 0*\sy + \y) {$\hat{v}$};
  \node[node] (f) at (\x+0*\sx, 2*\sy + \y) {$w$};
  \node[blank] (blank) at (\x+4*\sx+0.2cm, -0.75cm + \y) {$G'$};
  
  \path[->,draw,thick]
  (d) edge[edge, bend left=-20, looseness = 0.9] node[right, pos =0.4] {$4$} (c)
  (e) edge[edge, bend left=-20, looseness = 0.9] node[below] {$4$} (d)
  (f) edge[edge, bend left=-20, looseness = 0.9] node[left] {$4$} (e)
  (c) edge[edge] node[below] {$2$} (f)
  (c) edge[edge, out = -50, in = 50, looseness = 7] node[above, pos=0.8] {$2$} (c)
  ;
  
   \def\x{9cm} 

  \node[node] (a) at (\x+1*\sx, 4*\sy + \y) {$\hat{u}$};
  \node[node] (b) at (\x+3*\sx, 4*\sy + \y) {};
  \node[node] (c) at (\x+4*\sx, 2*\sy + \y) {$u$};
  \node[node] (d) at (\x+3*\sx, 0*\sy + \y) {};
  \node[node] (e) at (\x+1*\sx, 0*\sy + \y) {$\hat{v}$};
  \node[node] (f) at (\x+0*\sx, 2*\sy + \y) {$w$};
  \node[blank] (blank) at (\x+4*\sx+0.2cm, -0.75cm + \y) {$G''$};
  
  \path[->,draw,thick]
  (a) edge[edge, bend left=20, looseness = 0.9] node[above, pos=0.4] {$4$} (b)
  (b) edge[edge, bend left=20, looseness = 0.9] node[above, pos=0.4] {$4$} (c)
  (d) edge[edge, bend left=-20, looseness = 0.9] node[right, pos =0.4] {$4$} (c)
  (e) edge[edge, bend left=-20, looseness = 0.9] node[below] {$4$} (d)
  (f) edge[edge, bend left=-20, looseness = 0.9] node[left] {$4$} (e)
  (c) edge[edge, bend left=-10, looseness = 0.9] node[below, pos = 0.8] {$3$} (a)
  (c) edge[edge] node[below] {$1/2$} (f)
  (c) edge[edge, out = -50, in = 50, looseness = 7] node[above, pos=0.8] {$1/2$} (c)
  ;

\end{tikzpicture}
\caption{An illustration to the proof of Lemma~\ref{lemma:ev-si:regular-graphs}.
The proportion of weights of edges $\hat{e}=(u,\hat{u})$ and $e_u = (u,w)$ in $4$-out-regular graph $G$ is not rational.
Graph $G'$ is graph $G$ with edge $\hat{e}$ removed along with a part of a graph that is then disconnected from $\hat{v}$.
Graph $G''$ is graph $G$ with edge weights adjusted so that the proportion of weights of $\hat{e}$ and $e_u$ is now rational.
Since $3 > 2\sqrt{2}$, it is possible to take graphs $G'$ and $G''$ and ``combine'' them (with their node weights properly scaled) to obtain graph $G$.}
\label{fig:lemma:regular}
\end{figure}

Let us begin with graph $G'$.
Since removing just $\hat{e}$ can result in a graph that is not strongly connected, we remove $\hat{e}$ and all nodes that would not be in the same strongly connected component of the graph as node $\hat{v}$.
Formally, let $G_{-\hat{e}} = (V,E \setminus \{\hat{e}\}, \1_{\hat{v}}, c_{-\hat{e}})$ be a graph with just $\hat{e}$ removed.
Observe that since in $G$ there exists a walk that begins with $e_u$, ends in $\hat{v}$, and not passes through $u$ before it reaches $\hat{v}$, then after removal of $\hat{e}$ it is still possible to reach $\hat{v}$ from any other node, i.e., $P_{G_{-\hat{e}}}(\hat{v}) = P_G(\hat{v})=V$.
Thus, we want to remove exactly these nodes that cannot be reached from $\hat{v}$ without edge $\hat{e}$.
Hence, let $V' = S_{G_{-\hat{e}}}(\hat{v})$.
Since we remove outgoing edge of $u$, then for sure $u \in V'$.
Moreover, for every node $w \in V' \setminus \{u\}$ its successors in $G_{-\hat{e}}$ are also successors of $\hat{v}$.
Thus, we preserve all of the outgoing edges of all nodes in $V' \setminus \{u\}$, i.e., let $E' = \{(s,t) \in E : s \in V'\} \setminus \{\hat{e}\}$.
Building upon this, let us define graph $G'=(V',E',\1_{\hat{v}},c')$ in which weights of outgoing edges of $u$ are scaled so that graph is still $\lambda$-out-regular, i.e.,  $c'(e) = c(e) \cdot \lambda/(\lambda - c(\hat{e}))$ for $e \in \Gamma^+_u(G')$ and the weights of remaining edges remain unchanged, i.e, $c'(e) = c(e)$ for every $e \in E' \setminus \Gamma^+_u(G')$.
Since $\hat{e}$ is not an edge in $G'$ and the proportions of weights between remaining outgoing edges of $u$ are unchanged, we obtain that $k_{G'} < k_{G}$.
Hence, from the inductive assumption
\begin{equation}
    \label{eq:lemma:ev-si:regular-graphs:1}
    F_v(G')=SI_v(G')=EV_v(G') \quad \mbox{for every } v \in V.
\end{equation}

Now, let us construct graph $G''=(V,E,\1_{\hat{v}},c'')$ in which weight of edge $\hat{e}$ is scaled by $x>1$ and the weights of remaining outgoing edges of $u$ are scaled by $y<1$ in such a way that: (1) proportion $(x \cdot c(\hat{e}))/(y \cdot c(e_u))$ is now rational, and (2) the sum of the weights of outgoing edges of $u$ is still equal to $\lambda$ so that $G''$ is still $\lambda$-out-regular.
To this end, take any $q \in \mathbb{Q}$ such that $q > c(\hat{e})/c(e_u)$.
Then, the new edge weights are given by: $c''(\hat{e})=q \cdot c(e_u) \cdot \lambda / (\lambda - c(\hat{e}) + qc(e_u))$, $c''(e)= c(e) \cdot \lambda / (\lambda - c(\hat{e}) + qc(e_u))$ for every $e \in \Gamma^+_u(G) \setminus \{\hat{e}\}$ and $c''(e)=c(e)$ for every $e \in E \setminus \Gamma^+_u(G)$.
Observe that indeed $c''(\hat{e})/c''(e_u)=q \in \mathbb{Q}$ and that the sum of the weights of the outgoing edges of $u$ is equal to $\lambda$ which makes $G''$ $\lambda$-our-regular.
Also, it can be calculated that since $q> c(\hat{e})/c(e_u)$, we have that $c''(\hat{e})>c(\hat{e})$.
As $c''(\hat{e})/c''(e_u) \in \mathbb{Q}$ and the proportions of other edge weights did not change, it means that $k_{G''}<k_G$.
Hence, from the inductive assumption
\begin{equation}
    \label{eq:lemma:ev-si:regular-graphs:2}
    F_v(G'')=SI_v(G'')=EV_v(G'') \quad \mbox{for every } v \in V.
\end{equation}

Now, through a combination of graph $G'$ and $G''$ we will obtain graph $G$.
To this end, let us denote graph $p \cdot G' = (V',E',p \cdot \1_{\hat{v}},c')$, i.e., graph $G'$ with node weights scaled by $p$.
From inductive assumption we know that $F_u(p \cdot G') = p \cdot F_u(G')$.
In order to combine graphs $p \cdot G'$ and $G''$ we have to be able to add them together.
For this purpose, let us consider graph isomorphic to $p \cdot G'$, i.e.,
$p \cdot G^\dagger=(V^\dagger,E^\dagger,p \cdot \1_{\hat{v}^\dagger},c^\dagger)$, where
$V^\dagger = \{v^\dagger : v \in V'\}$ such that $V^\dagger \cap V = \emptyset$,
$E^\dagger = \{(s^\dagger,t^\dagger) : (s,t) \in E'\}$, and
$c^\dagger(s^\dagger,t^\dagger)=c(s,t)$ for every $(s,t) \in E'$.
It is clear that $SI_{v^\dagger}(p \cdot G^\dagger)=SI_v(p \cdot G')$ for every $v \in V'$.
Thus, from inductive assumption we have that also $F_{v^\dagger}(p \cdot G^\dagger) = p \cdot F_v(G')$.

Since $c(\hat{e}) < c''(\hat{e})$, we will combine graphs $p \cdot G^\dagger$ and $G''$ in order to obtain our original graph $G$.
To this end, let us consider the sum of graphs $p \cdot G^\dagger$ and $G''$, i.e., $p \cdot G^\dagger + G''$.
Since for every $v \in V'$ we have that $\deg^+_{v^\dagger}(G^\dagger) = \deg^+_v(G'')$,
let us sequentially combine each node $v^\dagger$ into $v$ and denote the obtained graph by $G^*=(V,E, (1+p)\1_{\hat{v}},c^*)$.
From LOC, NC and the fact that $F_{v^\dagger}(p \cdot G^\dagger)=p \cdot F_v(G')$ we get that
\begin{equation}
    \label{eq:lemma:ev-si:regular-graphs:3}
    F_v(G^*) =
    \begin{cases}
        p \cdot F_v(G') + F_v(G'')  & \mbox{if } v \in V',\\
        F_v(G'')                      & \mbox{otherwise.}
    \end{cases}
\end{equation}

In what follows, we will prove that if we take the value of $p = F_u(G'')/F_u(G') ( c''(\hat{e})/c(\hat{e}) - 1)$, then the edge weights in the obtained graph are equal to the edge weights of graph $G$, i.e., $c^* = c$.
For every $(s,t) \in E \setminus \Gamma^+_u(G^*)$ observe that $c(s,t)=c'(s,t)=c''(s,t)$.
Thus, when we combine both nodes $s^\dagger$ into $s$ and $t^\dagger$ into $t$ in $p \cdot G^\dagger + G''$, the weight of edge $(s,t)$ will be preserved.
Hence, $c^*(s,t)=c(s,t)$.
For $\hat{e}$ we have
\begin{align*}
    c^*(\hat{e}) &= \frac{F_u(G'') \cdot c''(\hat{e})}{p \cdot F_u(G') + F_u(G'')} =\\
             &= \frac{F_u(G'') \cdot c''(\hat{e})}{F_u(G'')(\frac{c''(\hat{e})}{c(\hat{e})} - 1) + F_u(G'')} =\\
             &= c(\hat{e}).
\end{align*}
For other outgoing edges of $u$, i.e., $e, e' \in \Gamma^+_u(G^*) \setminus \{\hat{e}\}$, observe that the proportions of their weights are equal in all three graphs, i.e., $c(e)/c(e') = c'(e)/c'(e') = c''(e)/c''(e')$.
Thus, similarly when we combine node $u^\dagger$ into $u$ and the corresponding ends of this edges in graph $p \cdot G^\dagger + G''$ this proportions are also preserved, i.e., $c^*(e)/c^*(e')=c(e)/c(e')$.
Moreover, observe that $p \cdot G^\dagger + G''$ is $\lambda$-out-regular and proportional combining preserves out-regularity, thus graph $G^*$ is $\lambda$-out-regular as well.
Hence, the sum of weights of edges in $\Gamma^+_u(G^*) \setminus \{\hat{e}\}$ is equal to $\lambda - c(\hat{e})$.
Since the sum and the proportions of the weights of these edges are the same in both $G$ and $G^*$, weights themselves are equal as well.
In result, we obtain that $c = c^*$ which means that graph $G^*=(V,E,(1+p)\cdot \1_{\hat{v}}, c)$ is graph $G$ with node weights scaled by $(1+p)$.

Now, if we repeat the same operation, but instead of graphs $p \cdot G'$ and $G''$ we take graphs $x \cdot p \cdot G'=(V',E',x\cdot p\cdot \1_{\hat{v}},c')$ and $x \cdot G'' = (V,E,x \cdot \1_{\hat{v}},c'')$ in the same way we obtain graph $x \cdot G^* = (V,E,x \cdot (1+p) \cdot \1_{\hat{v}},c)$.
From inductive assumption we get that $F_v(x \cdot p \cdot G')=x \cdot p \cdot F_v(G')$ and $F_v(x \cdot G'')=x \cdot F_v(G'')$.
Thus, in the same way we obtained Equation~\eqref{eq:lemma:ev-si:regular-graphs:3} we get that
\begin{equation}
    \label{eq:lemma:ev-si:regular-graphs:4}
    F_v(x \cdot G^*) =
    \begin{cases}
        x \cdot p \cdot F_v(G') + x \cdot F_v(G'')  & \mbox{if } v \in V',\\
        x \cdot F_v(G'')                      & \mbox{otherwise.}
    \end{cases}
\end{equation}
In particular, for $x = \frac{1}{1+p}$ we get that $ \frac{1}{1+p} \cdot G^* = G$.
Therefore,
\begin{equation}
    \label{eq:lemma:ev-si:regular-graphs:5}
    F_v(G) =
    \begin{cases}
        (p \cdot F_v(G') + F_v(G''))/(1+p)  & \mbox{if } v \in V',\\
        F_v(G'')/(1+p)                      & \mbox{otherwise.}
    \end{cases}
\end{equation}
Since $p = F_u(G'')/F_u(G')\left(c''(\hat{e})/c(\hat{e}) - 1\right)$, then based on inductive assumption (Equations~\eqref{eq:lemma:ev-si:regular-graphs:1} and~\eqref{eq:lemma:ev-si:regular-graphs:2}) its value does not depend on the choice of centrality $F$.
Therefore, from Equations~\eqref{eq:lemma:ev-si:regular-graphs:1}, \eqref{eq:lemma:ev-si:regular-graphs:2} and~\eqref{eq:lemma:ev-si:regular-graphs:5} and the fact that Seeley index and Eigenvector centrality also satisfy our axioms (Lemmas~\ref{lemma:axioms:si} and~\ref{lemma:axioms:ev}), we get that
$F_v(G) = SI_v(G) = EV_v(G)$
for every $v \in V$.
Moreover, from Equation~\eqref{eq:lemma:ev-si:regular-graphs:4} we obtain that
$F_v(V,E,x \cdot \1_{\hat{v}},c) = x \cdot F_v(G)$ for every $x > 0$ and $v \in V$.

It remains to relax the additional assumption of unit node weights.
Let us consider an arbitrary strongly connected $\lambda$-out-regular graph $G=(V,E,b,c)$ and two cases: the first in which $\sum_{v \in V}b(v)>0$ (I), and the second in which $\sum_{v \in V}b(v)=0$ (II).

(I) If $\sum_{v \in V}b(v)>0$, then let us denote the set of nodes that have positive weight by $V^* = \{ v \in V : b(v)>0\}$.
For every $v \in V^*$ let us construct $G_v = (V,E, b(v) \cdot \1_v ,c)$.
Observe that each graph $G_v$ is a strongly connected and $\lambda$-out-regular with node weights multiplied by a constant.
Thus, from the previous part of the proof for every $u \in V$ we have
$$F_u(G_v) = b(v) \cdot F_u(V,E,\1_v,c) = SI_u(G_v) = EV_u(G_v).$$

In order to combine all graphs $G_v$ into one graph $G$, for each graph $G_v$ let us define graph $G'_v$ isomorphic to it.
More in detail, let $V'$ be a set of nodes such that $V \cap V' = \emptyset$ and that $V' = \{ u' : u \in V\}$.
Let $G'_v = (V',E',b(v) \cdot \1_{v'},c')$, where $E'=\{(u',w') : (u,w) \in E\}$ and $c'(u',w')=c(u,w)$ for every $(u,w) \in E$.
Graph $G'_v$ is also $\lambda$-out-regular with unit node weights with node weights multiplied by a constant, thus from first part of the proof we have
$$F_u(G'_v) = SI_u(G'_v) = SI_u(G_v) = F_u(G_v).$$

Building upon this, let us consider the following operation:
Let us choose one node $v \in V^*$ and take graph $G_v$ and say that at the beginning it is our current graph.
Next, for node $u \in V^* \setminus \{v\}$ let us take graph $G'_u$,
add it to the current graph,
sequentially combine node $w'$ into $w$ for all $w \in V$, and
say that the resulting graph is now the current graph.
Then, let us perform this for all nodes $u \in V^* \setminus \{v\}$ ones.
Observe that after each such addition of graph $G'_u$, the nodes, edges and edge weight of the current graph remain unchanged, only the node weights of the current graph are summed with node weights of just added graph $G'_u$
Hence, the graph that we obtain after such operation for all $u \in V^* \setminus \{v\}$ is the original graph $G$.
Now, from LOC and NC we obtain that
$$F_u(G) = \sum_{v \in V^*} F_u(G_v) = \sum_{v \in V^*} SI_u(G_v) = SI_u(G)$$
for every $u \in V$.

(II) Finally, let us consider strongly connected $\lambda$-out-regular graph $G=(V,E,b,c)$ such that $\sum_{v \in V}b(v)=0$.
For such graph Seeley index and Eigenvector centrality is equal to zero for every node.
We prove that the same is true for centrality $F$.
Assume otherwise, i.e., there exists strongly connected $\lambda$-out-regular graph $G=(V,E,b,c)$ and node $v \in V$ such that $\sum_{u \in V}b(u)=0$ and $F_v(G) > 0$.
Then, let us take node $v' \not \in V$ and consider graph 
\[
  G' = G+(\{v'\},\{(v',v')\},\1_{v'},c'),
\]
i.e., graph $G$ with an additional node with a loop.
Let $c'(v',v') = \lambda$, so that $G'$ is still $\lambda$-out-regular.
From CY we have that $F_{v'}(\{v'\},\{(v',v')\},\1_{v'},c') = 1$, thus also $F_{v'}(G')=1$ from LOC.
Now, let us combine node $v'$ into node $v$ in graph $G'$, i.e., let us take $G''=C^F_{v' \rightarrow v}(G')$.
From NC we have that $F_v(G'')=F_{v'}(G')+F_v(G')=1+F_v(G)$.
Since $G''$ is a strongly connected $\lambda$-out-regular graph with unit node weights, then from first part of the proof we get that
$1 + F_v(G) = F_v(G'')=SI_v(G'')$.
However, we know that the sum of Seeley index in a graph is the sum of weights in that graph (Proposition~\ref{proposition:walk:rec:si}), thus $\sum_{u \in V} SI_u(G'') = 1 + \sum_{u \in V} b(v) = 1$.
As $SI_v(G'')= 1 + F_v(G) > 1$ we arrive at a contradiction.
\end{proof}



\begin{lemma}
\label{lemma:si:all-graphs}
If a centrality measure $F$ defined on $\mathcal{G}^{SI}$ satisfies LOC, ED, NC, EM, and CY, then for every graph $G=(V,E, b, c) \in \mathcal{G}^{SI}$, we have
$$F_v(G) = SI_v(G) \quad \mbox{for every } v \in V.$$
\end{lemma}
\begin{proof}
Because $F$ satisfies LOC, without loss of generality, we can assume that the graph consists of one connected component.
Since we consider graphs in $\mathcal{G}^{SI}$ this means that it is strongly connected.
Let us then, take an arbitrary such graph, $G=(V,E,b,c)$, and consider $G'=(V,E,b,c')$ in which $c'(u,v) = c(u,v) / \deg^+_u(G)$ for every $(u,v) \in E$ (see Figure~\ref{fig:lemma:si:all-graphs}).
Observe that graph $G'$ is $1$-out-regular, hence from Lemma~\ref{lemma:ev-si:regular-graphs} we get that
$F_v(G')=SI_v(G')$ for every $v \in V$.
Now, graph $G$ can be obtained from $G'$ by multiplying outgoing edges of every node $v \in V$ by $\deg^+_v(G)$.
Hence, since both $F$ and Seeley index satisfy EM, we get that
$F_v(G)=F_v(G')=SI_v(G')=SI_v(G)$ for every $v \in V$.
This concludes the proof.
\end{proof}

\begin{figure}[t]
\centering
\begin{tikzpicture}
  \def\sx{0.7cm} 
  \def\sy{0.6cm} 
  \def\x{0cm} 
  \def\y{0cm} 
  \def\arrdist{0.35cm}

  \tikzset{
    node_blank/.style={circle,draw,minimum size=0.5cm,inner sep=0, color=white}, 
    node/.style={circle,draw,minimum size=0.5cm,inner sep=0, fill = black!05, font=\footnotesize}, 
    node_emph/.style={circle, minimum size=0.65cm, inner sep=0, fill = black!15, font=\footnotesize}, 
    edge/.style={-latex,above,font=\footnotesize}, 
    el/.style={below,font=\footnotesize}, 
    operation/.style={sloped,>=stealth,above,font=\footnotesize},
    arrow/.style={draw, single arrow, minimum width = 0.9cm, minimum height=\x-6*\s+\s, fill=black!10},
    blank/.style={}
  } 
    
  \node[node] (a) at (\x+1*\sx, 4*\sy + \y) {$a$};
  \node[node] (b) at (\x+3*\sx, 4*\sy + \y) {$b$};
  \node[node] (c) at (\x+4*\sx, 2*\sy + \y) {$c$};
  \node[node] (d) at (\x+3*\sx, 0*\sy + \y) {$d$};
  \node[node] (e) at (\x+1*\sx, 0*\sy + \y) {$e$};
  \node[node] (f) at (\x+0*\sx, 2*\sy + \y) {$f$};
  \node[blank] (blank) at (\x+4*\sx+0.2cm, -0.75cm + \y) {$G$};
  
  \path[->,draw,thick]
  (a) edge[edge, bend left=20, looseness = 0.9] (b)
  (a) edge[edge, bend left=20, looseness = 0.9] (f)
  (b) edge[edge, bend left=20, looseness = 0.9] (c)
  (c) edge[edge] (f)
  (d) edge[edge, bend left=-20, looseness = 0.9] (c)
  (e) edge[edge] (c)
  (e) edge[edge, bend left=-20, looseness = 0.9] (d)
  (f) edge[edge, bend left=20, looseness = 0.9] (a)
  (f) edge[edge, bend left=-20, looseness = 0.9] (e)
  (f) edge[edge, out = 230, in = 130, looseness = 7] (f)
  ;

   \def\x{6cm} 

  \node[node] (a) at (\x+1*\sx, 4*\sy + \y) {$a$};
  \node[node] (b) at (\x+3*\sx, 4*\sy + \y) {$b$};
  \node[node] (c) at (\x+4*\sx, 2*\sy + \y) {$c$};
  \node[node] (d) at (\x+3*\sx, 0*\sy + \y) {$d$};
  \node[node] (e) at (\x+1*\sx, 0*\sy + \y) {$e$};
  \node[node] (f) at (\x+0*\sx, 2*\sy + \y) {$f$};
  \node[blank] (blank) at (\x+4*\sx+0.2cm, -0.75cm + \y) {$G'$};
  
  \path[->,draw,thick]
  (a) edge[edge, bend left=20, looseness = 0.9] node[above] {$\frac{1}{2}$} (b)
  (a) edge[edge, bend left=20, looseness = 0.9] node[right] {$\frac{1}{2}$} (f)
  (b) edge[edge, bend left=20, looseness = 0.9] node[right, pos =0.3] {$1$} (c)
  (c) edge[edge] node[above] {$1$} (f)
  (d) edge[edge, bend left=-20, looseness = 0.9] node[right, pos =0.3] {$1$} (c)
  (e) edge[edge] node[above, pos=0.2] {$\frac{1}{2}$} (c)
  (e) edge[edge, bend left=-20, looseness = 0.9] node[below] {$\frac{1}{2}$} (d)
  (f) edge[edge, bend left=20, looseness = 0.9] node[left, pos=0.7] {$\frac{1}{3}$} (a)
  (f) edge[edge, bend left=-20, looseness = 0.9] node[left] {$\frac{1}{3}$} (e)
  (f) edge[edge, out = 230, in = 130, looseness = 7] node[left] {$\frac{1}{3}$} (f)
  ;

\end{tikzpicture}
\caption{An illustration to the proof of Lemma~\ref{lemma:si:all-graphs}.
The graph on the left hand side, $G$, is a strongly connected graph that is not out-regular.
Each edge of $G)$ has weight 1.
The graph on the right hand side, $G'$, is a graph obtained from $G$ by dividing the weights of outgoing edges of $v$ by $\deg^+_v(G)$ for every node $v$.
Note that $G'$ is now $1$-out-regular.}
\label{fig:lemma:si:all-graphs}
\end{figure}
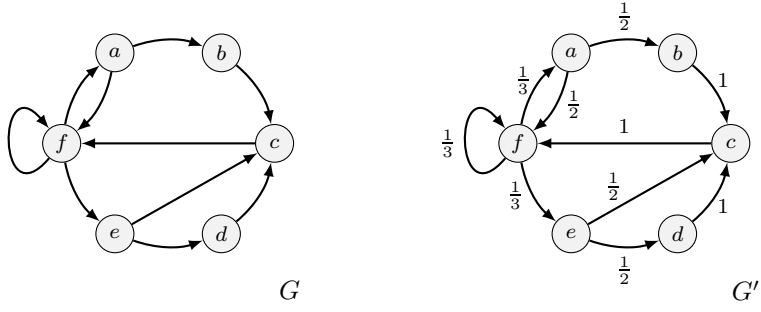


\begin{lemma}
\label{lemma:ev:all-graphs}
If a centrality measure $F$ defined on $\mathcal{G}^{EV}$ satisfies LOC, ED, NC, EC, and CY, then for every graph $G=(V,E, b, c) \in \mathcal{G}^{EV}$, we have
$$F_v(G) = EV_v(G) \quad \mbox{for every } v \in V.$$
\end{lemma}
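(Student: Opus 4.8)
The plan is to mirror the structure of the proof of Lemma~\ref{lemma:kp:all-graphs}, reducing an arbitrary graph to the out-regular case already settled in Lemma~\ref{lemma:ev-kp:regular-graphs}, but using Edge Compensation in place of Edge Multiplication. First, since $F$ satisfies LOC and every graph in $\mathcal{G}^{EV}$ is a sum of strongly connected components, I would assume without loss of generality that $G$ is strongly connected (and that its principal eigenvalue $\lambda$ is positive, the degenerate single-node case being handled directly). Recall that every strongly connected graph lies in $\mathcal{G}^{EV}$, so all graphs produced below will remain in the domain and the restricted axioms will apply at each step.

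The crux is to turn $G$ into a $\lambda$-out-regular graph using only EC. Observe that applying EC at a node $w$ with constant $x_w$ multiplies every non-loop outgoing edge of $w$ by $x_w$, divides every non-loop incoming edge of $w$ by $x_w$, and leaves the loop $(w,w)$ unchanged. Consequently, if I apply EC once at every node $w$ with a constant $x_w>0$, the weight of each edge $(u,v)$ (including loops, for which the two factors cancel) becomes $c^*(u,v)=c(u,v)\,x_u/x_v$, while $F_v$ is divided exactly once, by $x_v$, since EC at a node $w \neq v$ leaves $F_v$ fixed; that is, $F_v(G^*)=F_v(G)/x_v$, where $G^*$ denotes the resulting graph. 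The out-degree of $u$ in $G^*$ is then $\deg^+_u(G^*)=x_u\sum_{(u,v)\in E} c(u,v)/x_v$, independently of the order in which the operations are performed.

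I would then choose the $x_w$ so that $G^*$ is $\lambda$-out-regular. Writing $y_w=1/x_w$, the requirement $\deg^+_u(G^*)=\lambda$ becomes $\sum_{(u,v)\in E} c(u,v)\,y_v=\lambda\, y_u$ for every $u$, i.e. $y$ is a positive eigenvector of the matrix $(c(u,v))_{u,v}$ --- equivalently, a left Perron eigenvector of the adjacency matrix $A$ --- associated with the eigenvalue $\lambda$. Since $G$ is strongly connected, $A$ is non-negative and irreducible, so by the Perron--Frobenius theorem such a strictly positive $y$ exists and its eigenvalue is exactly the principal eigenvalue $\lambda$ (transposition preserves the spectrum). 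Setting $x_w=1/y_w>0$ makes $G^*$ a strongly connected $\lambda$-out-regular graph, to which Lemma~\ref{lemma:ev-kp:regular-graphs} applies: $F_v(G^*)=EV_v(G^*)$ for every $v$.

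Finally, I would transfer this equality back to $G$. Because Eigenvector centrality itself satisfies EC (Lemma~\ref{lemma:axioms:ev}), the same sequence of operations gives $EV_v(G^*)=EV_v(G)/x_v$. Combining the three relations yields
\[
F_v(G) = x_v\, F_v(G^*) = x_v\, EV_v(G^*) = x_v\cdot \frac{EV_v(G)}{x_v} = EV_v(G)
\]
for every $v\in V$, which is the claim. The main obstacle is the third step: recognizing that the rescaling needed to achieve out-regularity is governed by the \emph{left} Perron eigenvector and verifying that the induced out-degrees are uniformly $\lambda$; once the compounded effect of the per-node EC operations on edge weights and on $F$ is computed, the reduction to the already-proved out-regular case is immediate.
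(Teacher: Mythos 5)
Your proof is correct and follows essentially the same route as the paper: rescale the graph by applying EC at every node so that it becomes $\lambda$-out-regular, invoke Lemma~\ref{lemma:ev-kp:regular-graphs} on the resulting graph, and transfer the equality back using the fact that Eigenvector centrality itself satisfies EC (Lemma~\ref{lemma:axioms:ev}). The only cosmetic difference is that the paper obtains your left Perron eigenvector $y$ as the Eigenvector centrality of the opposite graph $\bar{G}$ (all edges reversed) rather than citing Perron--Frobenius directly; these are exactly the same vector, so the two arguments coincide.
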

\begin{proof}
Because $F$ satisfies LOC, without loss of generality, we can assume that the graph consists of one connected component.
Since we consider graphs in $\mathcal{G}^{EV}$ this means that it is strongly connected.
Let us then, take an arbitrary strongly connected graph $G=(V,E,b,c)$ and consider the opposite graph, i.e., graph $\bar{G} = (V, \bar{E}, b, \bar{c})$, where $\bar{E} = \{ (u,v) : (v,u) \in E\}$ and $\bar{c}(u,v) = c(v,u)$ for every $(u,v) \in \bar{E}$ (see Figure~\ref{fig:lemma:ev:all-graphs}).
Now, in graph $\bar{G}$ let us multiply the weights of outgoing edges of node $v \in V$ by $EV_v(\bar{G})$ and divide the weights of its incoming edges as well as the weight of $v$ also by $EV_v(\bar{G})$.
Because Eigenvector centrality satisfies EC, we know that this operation does not affect the centralities of nodes other than $v$ and divides the centrality of $v$ by $EV_v(\bar{G})$, making it equal to 1.
If we proceed with this operation for each node $v \in V$, then we obtain graph $\bar{G}'$ in which all nodes have Eigenvector centrality equal to 1.
Formally, $\bar{G}' = (V,\bar{E},b',\bar{c}')$ where $b'(v) = b(v) / EV_v(\bar{G})$ for every $v \in V$ and $\bar{c}'(u,v)=\bar{c}(u,v) \cdot EV_u(\bar{G})/EV_v(\bar{G})$.

Observe that if all nodes in a graph have equal centrality, then from Eigenvector centrality recursive equation~\eqref{eq:rec:ev} we get that in-degree of each node is equal to $\lambda$, i.e, $\deg^-_v(\bar{G}') = \lambda$ for every $v \in V$.
Hence, the opposite graph to $\bar{G}'$ would be $\lambda$-out-regular.
Let us define graph $G'$ as opposite to $\bar{G}'$, but with different node weights, i.e., let $G'=(V,E,b'',c')$ where $b''(v)= b(v) \cdot EV_v(\bar{G})$ for every $v \in V$ and $c'(u,v) = c(u,v) \cdot EV_v(\bar{G})/EV_u(\bar{G})$ for every $(u,v) \in E$ (see Figure~\ref{fig:lemma:ev:all-graphs} for an illustration).
Graph $G'$ is $\lambda$-out-regular, hence from Lemma~\ref{lemma:ev-si:regular-graphs} we get that
\begin{equation}
    \label{eq:lemma:ev:all-graphs:1}
    F_v(G') = EV_v(G') \quad \mbox{for every} v \in V.
\end{equation}

Now, to obtain $G$ from $G'$, for each node $v \in V$ we have to multiply the weights of outgoing edges of $v$ by $EV_v(\bar{G})$ and divide the weights of its incoming edges as well as its weight also by $EV_v(\bar{G})$.
Since both $F$ and Eigenvector centrality satisfy EC, from Equation~\eqref{eq:lemma:ev:all-graphs:1} we get that
$$ F_v(G) \! = \! F_v(G')/EV_v(\bar{G}) \! = \! EV_v(G')/EV_v(\bar{G}) \! = \! EV_v(G)$$
for every $v \in V.$
This concludes the proof.
\end{proof}

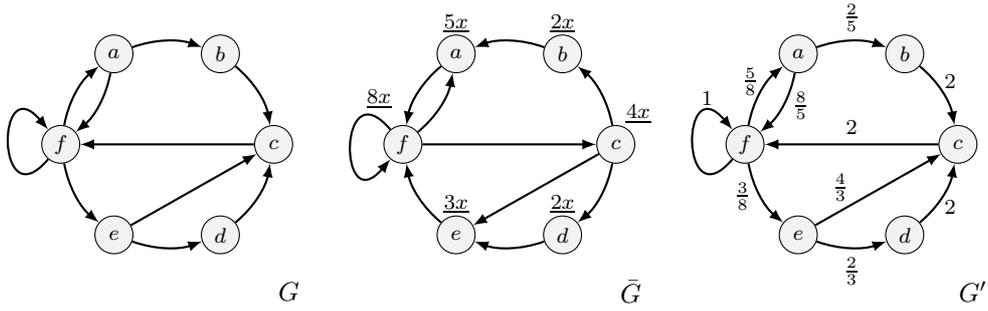
\begin{figure}[t]
\centering
\begin{tikzpicture}
  \def\sx{0.7cm} 
  \def\sy{0.6cm} 
  \def\x{0cm} 
  \def\y{0cm} 
  \def\arrdist{0.35cm}

  \tikzset{
    node_blank/.style={circle,draw,minimum size=0.5cm,inner sep=0, color=white}, 
    node/.style={circle,draw,minimum size=0.5cm,inner sep=0, fill = black!05, font=\footnotesize}, 
    node_emph/.style={circle, minimum size=0.65cm, inner sep=0, fill = black!15, font=\footnotesize}, 
    edge/.style={-latex,above,font=\footnotesize}, 
    el/.style={below,font=\footnotesize}, 
    operation/.style={sloped,>=stealth,above,font=\footnotesize},
    arrow/.style={draw, single arrow, minimum width = 0.9cm, minimum height=\x-6*\s+\s, fill=black!10},
    blank/.style={}
  } 
    
  \node[node] (a) at (\x+1*\sx, 4*\sy + \y) {$a$};
  \node[node] (b) at (\x+3*\sx, 4*\sy + \y) {$b$};
  \node[node] (c) at (\x+4*\sx, 2*\sy + \y) {$c$};
  \node[node] (d) at (\x+3*\sx, 0*\sy + \y) {$d$};
  \node[node] (e) at (\x+1*\sx, 0*\sy + \y) {$e$};
  \node[node] (f) at (\x+0*\sx, 2*\sy + \y) {$f$};
  \node[blank] (blank) at (\x+4*\sx+0.2cm, -0.75cm + \y) {$G$};
  
  \path[->,draw,thick]
  (a) edge[edge, bend left=20, looseness = 0.9] (b)
  (a) edge[edge, bend left=20, looseness = 0.9] (f)
  (b) edge[edge, bend left=20, looseness = 0.9] (c)
  (c) edge[edge] (f)
  (d) edge[edge, bend left=-20, looseness = 0.9] (c)
  (e) edge[edge] (c)
  (e) edge[edge, bend left=-20, looseness = 0.9] (d)
  (f) edge[edge, bend left=20, looseness = 0.9] (a)
  (f) edge[edge, bend left=-20, looseness = 0.9] (e)
  (f) edge[edge, out = 230, in = 130, looseness = 7] (f)
  ;

  \def\x{4.5cm} 
  
  \node[node, label = {[label distance=-0.1cm]\footnotesize \underline{$5x$}}] (a) at (\x+1*\sx, 4*\sy + \y) {$a$};
  \node[node, label = {[label distance=-0.1cm]\footnotesize \underline{$2x$}}] (b) at (\x+3*\sx, 4*\sy + \y) {$b$};
  \node[node, label = {[label distance=-0.1cm]85:\footnotesize \underline{$4x$}}] (c) at (\x+4*\sx, 2*\sy + \y) {$c$};
  \node[node, label = {[label distance=-0.1cm]\footnotesize \underline{$2x$}}] (d) at (\x+3*\sx, 0*\sy + \y) {$d$};
  \node[node, label = {[label distance=-0.1cm]\footnotesize \underline{$3x$}}] (e) at (\x+1*\sx, 0*\sy + \y) {$e$};
  \node[node, label = {[label distance=0.1cm]95:\footnotesize \underline{$8x$}}] (f) at (\x+0*\sx, 2*\sy + \y) {$f$};
  \node[blank] (blank) at ([label distance=-0.1cm]\x+4*\sx+0.2cm, -0.75cm + \y) {$\bar{G}$};
  
  \path[->,draw,thick]
  (b) edge[edge, bend left=-20, looseness = 0.9] (a)
  (f) edge[edge, bend left=-20, looseness = 0.9] (a)
  (c) edge[edge, bend left=-20, looseness = 0.9] (b)
  (f) edge[edge] (c)
  (c) edge[edge, bend left=20, looseness = 0.9] (d)
  (c) edge[edge] (e)
  (d) edge[edge, bend left=20, looseness = 0.9] (e)
  (a) edge[edge, bend left=-20, looseness = 0.9] (f)
  (e) edge[edge, bend left=20, looseness = 0.9] (f)
  (f) edge[edge, out = 130, in = 230, looseness = 7] (f)
  ;

   \def\x{9cm} 

  \node[node] (a) at (\x+1*\sx, 4*\sy + \y) {$a$};
  \node[node] (b) at (\x+3*\sx, 4*\sy + \y) {$b$};
  \node[node] (c) at (\x+4*\sx, 2*\sy + \y) {$c$};
  \node[node] (d) at (\x+3*\sx, 0*\sy + \y) {$d$};
  \node[node] (e) at (\x+1*\sx, 0*\sy + \y) {$e$};
  \node[node] (f) at (\x+0*\sx, 2*\sy + \y) {$f$};
  \node[blank] (blank) at (\x+4*\sx+0.2cm, -0.75cm + \y) {$G'$};
  
  \path[->,draw,thick]
  (a) edge[edge, bend left=20, looseness = 0.9] node[above] {$\frac{2}{5}$} (b)
  (a) edge[edge, bend left=20, looseness = 0.9] node[right] {$\frac{8}{5}$} (f)
  (b) edge[edge, bend left=20, looseness = 0.9] node[right, pos=0.3] {$2$} (c)
  (c) edge[edge] node[above] {$2$} (f)
  (d) edge[edge, bend left=-20, looseness = 0.9] node[right, pos=0.3] {$2$} (c)
  (e) edge[edge] node[above, pos=0.2] {$\frac{4}{3}$} (c)
  (e) edge[edge, bend left=-20, looseness = 0.9] node[below] {$\frac{2}{3}$} (d)
  (f) edge[edge, bend left=20, looseness = 0.9] node[left, pos=0.7] {$\frac{5}{8}$} (a)
  (f) edge[edge, bend left=-20, looseness = 0.9] node[left] {$\frac{3}{8}$} (e)
  (f) edge[edge, out = 230, in = 130, looseness = 7] node[above, pos=0.8] {$1$} (f)
  ;

\end{tikzpicture}
\caption{An illustration to the proof of Lemma~\ref{lemma:ev:all-graphs}.
The leftmost graph, $G$, is a strongly connected graph that is not out-regular.
Each edge of $G$ has weight 1.
The middle graph, $\bar{G}$, is an opposite graph to $G$, i.e., the direction of every edge is inverted. 
Underlined values near each node are equal to eigenvector centrality of this node (note that we only need the relative values).
The rightmost graph, $G'$, is a graph obtained from $G$ by dividing the weights of outgoing edges of $v$ by $EV_v(\bar{G})$ and multiplying the incoming edges of $v$ by the same value for every node $v$.
The weight of each edge is shown.
Note that graph $G'$ is now $2$-out-regular.}
\label{fig:lemma:ev:all-graphs}
\end{figure}


\subsection{Proof of Theorems~\ref{theorem:pr} and~\ref{theorem:katz}}
\label{sec:proofs:katz:pr}
In this section, we present the full proofs of Theorems~\ref{theorem:pr} and~\ref{theorem:katz} which state that our axioms uniquely characterize PageRank and Katz centrality.
We begin with Lemmas~\ref{lemma:axioms:pr} and~\ref{lemma:axioms:katz} in which we show that PageRank and Katz centrality indeed satisfy our axioms.
Then, we move to the part of the proof in which we show that our axioms are sufficient to uniquely characterize both centrality measures.
More in detail, in Lemmas~\ref{lemma:pr-katz:nc-in-sor}--\ref{lemma:pr-katz:one-arrow-graphs} we prove some basic properties that stem from our axioms.
Next, we introduce the \emph{profit function} in Definition~\ref{def:profit-function} and in Lemmas~\ref{lemma:pr-katz:two-arrow-graphs:pr}--\ref{lemma:pr-katz:recursive} show how it determines the centrality of every node.
Finally, we prove that LOC, ED, NC, EM, and BL uniquely characterize PageRank on semi-out-regular graphs (Lemma~\ref{lemma:pr:semi-out-regular}) and all graphs (Lemma~\ref{lemma:pr:final}) and that LOC, ED, NC, EC, and BL uniquely characterize Katz centrality on semi-out-regular graphs (Lemma~\ref{lemma:katz:semi-out-regular}) and all graphs (Lemma~\ref{lemma:katz:final}).


\begin{lemma}
\label{lemma:axioms:pr}
For every decay factor $a \in [0,1)$ PageRank defined by Equation~\eqref{eq:walk:pr} satisfies LOC, ED, NC, EM, and BL.
\end{lemma}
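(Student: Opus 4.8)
The plan is to verify the five axioms one at a time, working directly from the walk-based definition in Equation~\eqref{eq:walk:pr} and reusing, almost verbatim, the arguments already assembled for Katz prestige in Lemma~\ref{lemma:axioms:kp}. Four of the five axioms (LOC, ED, EM, BL) are immediate consequences of how the per-step factor $a \cdot c(\omega(i),\omega(i+1))/\deg^+_{\omega(i)}(G)$ behaves under the relevant graph operation, so the only real work is Node Combination.

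For LOC I would note that in $G+G'$ no walk ending at a node $v \in V$ can leave the component of $G$, so the set of contributing walks and all their factors coincide with those in $G$; hence $p^a_{v,G+G'}(t) = p^a_{v,G}(t)$ for every $t$, and LOC follows by summing over $t$. For ED, given $(u,w) \in E$ and $v \notin S(u)$, I would observe that no walk ending at $v$ can traverse $(u,w)$ or even visit $u$ at a non-terminal step, since either would force $v \in S(u)$; consequently neither deleting the edge nor the resulting change in $\deg^+_u(G)$ alters any factor in $p^a_{v,G}(t)$. For EM, multiplying all outgoing edges of $u$ by $x$ scales both $c(u,\cdot)$ and $\deg^+_u(G)$ by $x$, leaving every factor with $\omega(i)=u$ unchanged, so each $p^a_{v,G}(t)$ and hence $PR^a_v$ is preserved. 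Finally, BL is immediate: for an isolated node $v$ the only contributing walk is the length-$0$ walk, giving $p^a_{v,G}(0)=b(v)$ and $p^a_{v,G}(t)=0$ for $t\ge 1$, so $PR^a_v(G)=b(v)$.

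The main obstacle is NC. Here I would exploit that, because $a<1$, PageRank is the unique solution of the recursive system in Equation~\eqref{eq:rec:pr} (the operator $x \mapsto aMx+b$ is a contraction, since the matrix $M$ with entries $c(u,v)/\deg^+_u(G)$ is column-substochastic and thus has spectral radius at most one). It therefore suffices to check that the candidate values $x_v = PR^a_v(G)$ for $v \in V \setminus \{u,w\}$ and $x_w = PR^a_u(G)+PR^a_w(G)$ satisfy Equation~\eqref{eq:rec:pr} on the combined graph $C^F_{u\rightarrow w}(G)$, exactly as in the NC argument of Lemma~\ref{lemma:axioms:kp}. I would split into the same three cases---$v$ not a successor of $u$ or $w$; $v$ a successor but $v \neq w$; and $v = w$---and in each case rewrite the recursive equation for $G$ using that proportional combining leaves incoming edges from outside $\{u,w\}$ untouched, merges the contributions of the edges leaving $u$ and $w$ through the defining convex combination of their weights, and (for $w$) sets $b'(w)=b(u)+b(w)$. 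The hypothesis $\deg^+_u(G)=\deg^+_w(G)=\deg^+_s(G)$ for all $s \in S(u)\cup S(w)$ is what makes the denominators match after combining, so that the merged equation is precisely Equation~\eqref{eq:rec:pr} for $C^F_{u\rightarrow w}(G)$; uniqueness of the solution then yields NC.
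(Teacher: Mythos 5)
Your proposal is correct and follows essentially the same route as the paper's proof: walk-based verification of LOC, ED, EM (and BL), and for NC the verification that the candidate values satisfy the recursive Equation~\eqref{eq:rec:pr} on $C^F_{u\rightarrow w}(G)$ in the same three cases, combined with uniqueness of the solution. Your explicit contraction argument for uniqueness (via the column-substochastic matrix and $a<1$) is a detail the paper leaves implicit, but it does not change the structure of the proof.
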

\begin{proof}
Let us take an arbitrary graph $G=(V,E,b,c) \in \mathcal{G}$ and consider axioms one by one.

For LOC the proof is analogous to the proof of LOC for Seeley index (Lemma~\ref{lemma:axioms:si}).

For ED consider edge $(u,w) \in E$ and fix $v \in V \setminus S(u)$.
By $G' = (V, E \setminus \{(u,w)\},b,c_{-(u,w)})$ let us denote graph $G$ with edge $(u,w)$ removed.
Observe that since $v$ is not a successor of $u$ in graph $G$, then a walk on $G$ that has visited node $u$ may not visit $v$ later on.
Thus, removing edge $(u,w)$ does not affect the walks of length $t$ that ends in $v$, i.e.,
$\{\omega \in \Omega_t(G) : \omega(t)=v\} =\{\omega \in \Omega_t(G') : \omega(t)=v\}$.
Moreover, for each walk $\omega \in \Omega_t(G)$ that does not visit $u$, i.e., $\omega(i)\neq u$ for every $i \in \{0,\dots,t\}$, we have that the value
\(
    b(\omega(0))\prod_{i=0}^{t-1}a \cdot c(\omega(i),\omega(i+1))/\deg^+_{\omega(i)}(G)
\)
does not change between $G$ and $G'$.
Thus, also $p^a_{v,G}(t)=p^a_{v,G'}(t)$.
Hence, $PR^a_v(G)=PR^a_v(G')$ from Equation~\eqref{eq:walk:pr}.

For NC consider nodes $u,w \in V$ such that $\deg^+_u(G)=\deg^+_w(G)=\deg^+_s(G)$ for every $s \in S(u) \cup S(v)$ and graph $G'=(V',E',b',c')=C^{PR^a}_{u \rightarrow w}(G)$.
Observe that PageRank is uniquely characterised by PageRank recursive equation (Equation~\eqref{eq:rec:pr}), thus it suffices if we prove that $(x_v)_{v \in V \setminus \{u\}}$ defined as $x_v = PR^a_v(G)$ for every $v \in V \setminus \{u,w\}$ and $x_w = PR^a_u(G) + PR^a_w(G)$ satisfy PageRank recursive equation for graph $G'$ and every node $v \in V \setminus \{u\}$.
To this end, take an arbitrary node $v \in V \setminus \{u\}$ and observe that from Equation~\eqref{eq:rec:pr} for graph $G$ and node $v$ we have
\begin{equation}
    \label{eq:lemma:axioms:pr:1}
    PR^a_v(G) = b(v) + \sum_{(s,v) \in \Gamma^-_v(G)} \frac{a \cdot c(s,v)}{\deg^+_s(G)}PR^a_s(G).
\end{equation}
If $v$ is not a direct successor of $u$ or $w$, i.e., $(u,v),(w,v) \not \in \Gamma^-_v(G)$ and $v \neq w$, then proportional combining of $u$ into $w$ does not affect the incoming edges of $v$, hence $\Gamma^-_v(G)=\Gamma^-_v(G')$.
Moreover, for every edge $(s,v) \in \Gamma^-_v(G)$ its weight is unchanged, $c'(s,v)=c(s,v)$, as well as the out-degree of node $s$, i.e., $\deg^+_s(G)=\deg^+_s(G)$.
Also, $b'(v)=b(v)$.
Thus, Equation~\eqref{eq:lemma:axioms:pr:1} leads to
\[
    x_v = b'(v) + \sum_{(s,v) \in \Gamma^-_v(G')} \frac{a c'(s,v)}{\deg^+_s(G')}x_s,
\]
which is the recursive equation for graph $G'$ and node $v$.
If, on the other hand, $v$ is a direct successor of either $u$ or $w$, but still $v \neq w$, then Equation~\eqref{eq:lemma:axioms:pr:1} can be transformed into
\begin{equation}
    \label{eq:lemma:axioms:pr:2}
    PR^a_v(G) = \frac{a  \tilde{c}(u,v)}{\deg^+_u(G)}PR^a_u(G) + \frac{a \tilde{c}(w,v)}{\deg^+_w(G)}PR^a_w(G) +
      b(v) + \sum_{\substack{(s,v) \in \Gamma^-_v(G):\\ s \not \in \{u,w\} }} \frac{a  c(s,v)}{\deg^+_s(G)}PR^a_s(G),
\end{equation}
where $\tilde{c}(e) = c(e)$ if $e \in E$ and $\tilde{c}(e)=0$ otherwise.
From the axiom condition we know that $\deg^+_u(G)=\deg^+_w(G)$.
Observe that this implies that also $\deg^+_w(G)=\deg^+_w(G')$.
Furthermore, for any $(s,v) \in \Gamma^-_v(G)$ such that $s \not \in \{u,w\}$ we have $c'(s,v)=c(s,v)$ and $\deg^+_s(G)=\deg^+_s(G')$.
Moreover, $\{(s,v) \in \Gamma^-_v(G) : s \not \in \{u,w\} \} = \{(s,v) \in \Gamma^-_v(G):s \not \in \{u,w\} \}$.
Also, $b'(v)=b(v)$.
Finally, by the definition of proportional combining we have 
$c'(w,v)=(PR^a_u(G)\tilde{c}(u,v) + PR^a_w(G)\tilde{c}(w,v))/(PR^a_u(G) + PR^a_w(G))$.
Combining these facts with Equation~\eqref{eq:lemma:axioms:pr:2} we get
\[
    x_v = b'(v) + \frac{a  c'(w,v)}{\deg^+_w(G')} x_w + \sum_{\substack{(s,v) \in \Gamma^-_v(G)\\ s \not \in \{u,w\} }} \frac{a  c'(s,v)}{\deg^+_s(G')}x_s,
\]
which is the recursive equation for graph $G'$ and node $v$.
It remains to consider node $w$.
Here, from Equation~\eqref{eq:rec:pr} we have
\[
    \label{eq:lemma:axioms:pr:3}
    PR^a_w(G) = \frac{a \tilde{c}(u,w)}{\deg^+_u(G)}PR^a_u(G) + \frac{a \tilde{c}(w,w)}{\deg^+_w(G)}PR^a_w(G) +
     b(w) + \sum_{\substack{(s,w) \in \Gamma^-_w(G):\\ s \not \in \{u,w\} }} \frac{a  c(s,w)}{\deg^+_s(G)}PR^a_s(G)
\]
and analogous equation for $u$.
From the definition of proportional combining we get
$c'(w,w) = (\! PR^a_u(G)(\tilde{c}(u,\! u) + \tilde{c}(u,\! w)\!) + PR^a_w(G)(\tilde{c}(w,\! u) + \tilde{c}(w,\! w)\!)\!)/(\! PR^a_u(G) + PR^a_w(G)\!)$.
Moreover, we have that $\deg^+_u(G)=\deg^+_w(G)=\deg^+_w(G')$,
$\{\! (s,w) \!\in \!\Gamma^-_w(G')\! :\! s \neq w\} \! =\! \{\! (s,u) \!\in \!\Gamma^-_u(G):s \!\not \in \!\{u,w\}\!\} \cup 
\{\!(s,w) \!\in \!\Gamma^-_w(G)\! :\! s \not \in \{u,w\}\!\}$,
$c'(s,w) = \tilde{c}(s,u) + \tilde{c}(s,w)$ and $\deg^+_s(G)=\deg^+_s(G')$ for every $(s,w) \in \Gamma^-_w(G')$ such that $s \neq w$.
Also, $b'(w) = b(u) + b(w)$
Hence,
\[
    x_w = b'(w) + \frac{a c'(w,w)}{\deg^+_w(G')} x_w + \sum_{\substack{(s,w) \in \Gamma^-_v(G')\\ s \neq w }} \frac{a  c'(s,w)}{\deg^+_s(G')}x_s,
\]
which is recursive equation for graph $G'$ and node $w$.

For EM the proof is analogous to the proof of EM for Seeley index (Lemma~\ref{lemma:axioms:si}).

Finally, BL follows directly from PageRank recursive equation (Equation~\eqref{eq:rec:pr}).
\end{proof}

\begin{lemma}
\label{lemma:axioms:katz}
For every decay factor $a \in \mathbb{R}_{\ge 0}$ Katz centrality defined on $\mathcal{G}^{K(a)}$ by Equation~\eqref{eq:walk:katz} satisfies LOC, ED, NC, EC, and BL.
\end{lemma}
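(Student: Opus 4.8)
The plan is to verify each of the five axioms in turn, working from the walk-based definition in Equation~\eqref{eq:walk:katz} together with the recursive equation in Equation~\eqref{eq:rec:katz}, which are equivalent on $\mathcal{G}^{K(a)}$ by Proposition~\ref{proposition:walk:rec:katz}. For LOC, ED, NC and BL I expect the arguments to mirror those already written for Katz prestige and PageRank. Concretely, LOC follows exactly as in Lemma~\ref{lemma:axioms:kp}: since $V \cap V' = \emptyset$, no walk ending at a node $v \in V$ ever visits $V'$, so the parallel walk weights satisfy $w^a_{v,G+G'}(t) = w^a_{v,G}(t)$ for every $t$, and summing gives the claim. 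ED follows as in Lemma~\ref{lemma:axioms:pr}: for $v \notin S(u)$ no walk terminating at $v$ can traverse the deleted edge $(u,w)$ (such a walk would make $v$ a successor of $u$), so the walk weights are unchanged; moreover deleting an edge only reduces entries of the nonnegative adjacency matrix and hence does not increase its principal eigenvalue, so the reduced graph remains in $\mathcal{G}^{K(a)}$.

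For NC I would follow the recursive-equation route of Lemmas~\ref{lemma:axioms:kp} and~\ref{lemma:axioms:pr}. Since $\lambda < 1/a$ on $\mathcal{G}^{K(a)}$, Katz centrality is the unique solution of Equation~\eqref{eq:rec:katz}, so it suffices to check that the values $x_v = K^a_v(G)$ for $v \neq u,w$ and $x_w = K^a_u(G) + K^a_w(G)$ satisfy that equation on $C^{K^a}_{u \rightarrow w}(G)$. The verification splits into the same three cases as before (nodes not adjacent to $u$ or $w$; other successors of $u$ or $w$; and the merged node $w$ itself), using $\deg^+_u(G) = \deg^+_w(G)$ and the definition of proportional combining to collapse the contributions of the two merged edges into the single edge of the combined graph. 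I expect this to be entirely routine, though lengthy; it is essentially the PageRank computation with the division by out-degree removed.

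The conceptually new work is EC, and I expect it to be the main obstacle. Fix $u$ and a constant $x>0$ and let $G' = (V,E,b',c')$ be the compensated graph, noting that self-loops at $u$ are left untouched by construction. First I would record that EC acts on the adjacency matrix as a diagonal similarity: scaling the incoming entries (row $u$) by $1/x$ and the outgoing entries (column $u$) by $x$ is exactly $A' = DAD^{-1}$ with $D_{uu} = 1/x$ and $D_{vv}=1$ otherwise, so $A'$ and $A$ share the principal eigenvalue and $G' \in \mathcal{G}^{K(a)}$. The heart of the argument is a pairing on walks: for a walk $\omega$ with $\omega(t)=v$, group its visits to $u$ into maximal blocks, and match each step entering a block (from a node other than $u$) with the subsequent step leaving that block; the entering edge is divided by $x$ and the leaving edge multiplied by $x$, so the two changes cancel, while self-loop steps are unchanged and the starting weight $b(\omega(0))$ is divided by $x$ only when $\omega(0)=u$, in which case it is matched by the first departure. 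When $v \neq u$ every entry into $u$ has a matching departure, so $w^a_{v,G'}(t) = w^a_{v,G}(t)$ and hence $K^a_v(G') = K^a_v(G)$; when $v = u$ the final arrival at $u$ is unmatched, leaving a single net factor $1/x$, so $w^a_{u,G'}(t) = w^a_{u,G}(t)/x$ and $K^a_u(G') = K^a_u(G)/x$. Summing over $t$ and invoking convergence on $\mathcal{G}^{K(a)}$ then gives EC. This bookkeeping --- in particular the careful handling of the walk's two endpoints and of the self-loop --- is the delicate point; it is the Katz analogue of the EC argument for Eigenvector centrality in Lemma~\ref{lemma:axioms:ev}, with the constant $1/\lambda$ there replaced by the fixed decay factor $a$.

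Finally, BL is immediate: an isolated node $v$ has $\Gamma^-_v(G) = \emptyset$, so the only walk ending at $v$ is the trivial one, giving $w^a_{v,G}(0) = b(v)$ and $w^a_{v,G}(t) = 0$ for $t \ge 1$, whence $K^a_v(G) = b(v)$; equivalently this reads off directly from Equation~\eqref{eq:rec:katz}.
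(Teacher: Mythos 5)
Your proposal is correct and follows essentially the same route as the paper, whose proof of this lemma consists precisely of the references you anticipate: LOC as for Katz prestige (Lemma~\ref{lemma:axioms:kp}), ED and NC as for PageRank (Lemma~\ref{lemma:axioms:pr}, i.e.\ the recursive-equation verification with the out-degree division removed), EC as for Eigenvector centrality (Lemma~\ref{lemma:axioms:ev}, the walk-pairing/cancellation argument), and BL directly from Equation~\eqref{eq:rec:katz}. Your added observations --- that EC is a diagonal similarity preserving the principal eigenvalue and that edge deletion cannot increase it --- are correct refinements that the paper leaves implicit.
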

\begin{proof}
Let us take an arbitrary graph $G=(V,E,b,c) \in \mathcal{G}^{K(a)}$ and consider axioms one by one.

For LOC the proof is analogous to the proof of LOC for Seeley index (Lemma~\ref{lemma:axioms:si}).

For ED and NC the proof is analogous to the proof that PageRank satisfies ED and NC (Lemma~\ref{lemma:axioms:pr}).

For EC the proof is analogous to the proof that Eigenvector centrality satisfies EC (Lemma~\ref{lemma:axioms:ev}).

Finally, BL follows directly from Katz centrality recursive equation (Equation~\eqref{eq:rec:katz}).
\end{proof}

\begin{definition}
Graph $G=(V,E,b,c)$ is \emph{semi-out-regular} if there exists constant $r \in \mathbb{R}_{>0}$ such that
for every $v \in V$ it holds that $\deg^+_v(G) = r$ or $\deg^+_v(G) = 0$.
\end{definition}

\begin{lemma}
\label{lemma:pr-katz:nc-in-sor}
If a centrality measure $F$ defined on $\mathcal{G}^{K(a)}$ (or $\mathcal{G}$) satisfies LOC, ED, and NC, then for every $G=(V,E,b,c)$ and nodes $u,w \in V$ such that $G \in \mathcal{G}^{K(a)}$ (or $G \in \mathcal{G}$), and
\begin{itemize}
    \item[a.] $\deg^+_u(G) = \deg^+_w(G)$ and for every $s \in S(u) \cup S(w)$ either $\deg^+_s(G)=\deg^+_u(G)$ or $\deg^+_s(G)=0$, or
    \item[b.] graph $G$ is semi-out-regular
\end{itemize}
we have that
$$F_v(C^F_{u \rightarrow w}(G))=F_v(G) \quad \mbox{for every } v \in V \setminus \{u,w\}$$
and $F_w(C^F_{u \rightarrow w}(G)) = F_u(G) + F_w(G)$.
\end{lemma}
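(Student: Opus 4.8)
The plan is to reduce the generalized Node Combination to the original NC axiom by \emph{regularizing} the sink successors, and then to strip away the auxiliary structure using ED and LOC. Throughout, set $r=\deg^+_u(G)=\deg^+_w(G)$ and treat case~(a) first. Let $s_1,\dots,s_k$ be the successors of $u$ or $w$ with $\deg^+_{s_i}(G)=0$. I build an augmented graph $\hat G$ by adjoining one fresh node $t$ with a self-loop $(t,t)$ of weight $r$ and an edge $(s_i,t)$ of weight $r$ for each sink $s_i$. In $\hat G$ every node reachable from $u$ or $w$ now has out-degree exactly $r$ (the former sinks via their edge to $t$, and $t$ via its loop), and $\deg^+_u(\hat G)=\deg^+_w(\hat G)=r$, so the hypotheses of NC are met and NC gives $F_v(C^F_{u\rightarrow w}(\hat G))=F_v(\hat G)$ for $v\neq u,w$ together with $F_w(C^F_{u\rightarrow w}(\hat G))=F_u(\hat G)+F_w(\hat G)$.

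Next I would strip the gadget. For every sink $s_i$ one has $S_{\hat G}(s_i)=\{t\}$, so ED applied to each edge $(s_i,t)$ leaves $F_v$ unchanged for all $v\neq t$; in particular each sink's own centrality is preserved, since $s_i\notin S_{\hat G}(s_i)$. After deleting all these edges and then the loop $(t,t)$ (which affects only $t$), node $t$ is isolated and LOC discards it, returning $G$; hence $F_v(\hat G)=F_v(G)$ for every original node $v$, and in particular $F_u(\hat G)=F_u(G)$, $F_w(\hat G)=F_w(G)$. The crucial point is that stripping commutes with the combination: the gadget edges are incident only to the $s_i$ and $t$, never to $u$ or $w$, and the scaling factors used by $C^F_{u\rightarrow w}$ depend only on $F_u,F_w$, which the gadget leaves unchanged; therefore applying the same deletions to $C^F_{u\rightarrow w}(\hat G)$ returns exactly $C^F_{u\rightarrow w}(G)$. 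Chaining the three identities gives case~(a), and case~(b) then reduces to it when $u,w$ are both non-sinks, since a semi-out-regular graph already satisfies the hypotheses of~(a); the degenerate sub-case in which one of $u,w$ is itself a sink requires a separate direct argument, as the combination then merely rescales the non-sink's edges and merges node weights.

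The main obstacle is class membership for Katz centrality. The gadget's sink component $\{t\}$ is out-regular with out-degree $r$, so its adjacency block has eigenvalue $r$ and hence $\lambda(\hat G)\ge r$; but $r=\deg^+_u(G)$ may exceed $1/a$ even when $G\in\mathcal{G}^{K(a)}$ (a single weight-$r$ edge into a sink already has $\lambda=0<1/a$), so $\hat G$ can fall outside $\mathcal{G}^{K(a)}$, where only the \emph{restricted} NC is available. This is unavoidable for any regularization: forcing every node of the successor-closure $S(u)$ to have out-degree $r$ makes the induced subgraph on $S(u)$ out-regular with column sums $r$, whence $\lambda\ge r$ as a principal submatrix. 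To close this gap for $\mathcal{G}^{K(a)}$ I would replace the one-shot augmentation by an induction on the number $k$ of sink successors that never leaves the class: the base case $k=0$ is the original NC, and in the inductive step I would peel off a single sink using only \emph{local} edge deletions (which cannot increase $\lambda$, so class membership is automatic), together with the fact that combining two sinks is already licensed by NC (all relevant out-degrees being $0$), thereby reducing to a configuration with fewer sink successors. Verifying that this local surgery tracks the combination identity exactly is the delicate part, and is where the real work of the lemma lies.
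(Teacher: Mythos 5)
Your gadget construction is, step for step, the paper's own proof of Lemma~\ref{lemma:pr-katz:nc-in-sor}: the paper likewise adjoins a single fresh node $t$ with a weight-$r$ self-loop and weight-$r$ edges from every sink in $S(u)\cup S(w)$, applies NC in the augmented graph (your $\hat G$, the paper's $G^t$), observes that augmentation commutes with $C^F_{u\rightarrow w}$ because the scaling factors $F_u,F_w$ are unchanged, and strips the gadget with ED and LOC. On the class $\mathcal{G}$ this argument is complete and your rendition of it is correct, as is your reduction of case (b) to case (a) for two non-sinks and the trivial sub-case $\deg^+_u(G)=\deg^+_w(G)=0$. Where you depart from the paper is your objection concerning $\mathcal{G}^{K(a)}$, and the objection is sound: since the only outgoing edge of $t$ is its loop, the adjacency matrix of $G^t$ is block triangular, so $\lambda(G^t)=\max(\lambda(G),r)$; taking $u,w$ with weight-$r$ edges into a common sink gives $\lambda(G)=0$ but $\lambda(G^t)=r\ge 1/a$ whenever $r\ge 1/a$, so $G^t\notin\mathcal{G}^{K(a)}$ and $F_v(G^t)$ is not even defined. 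The paper's proof applies the restricted ED, LOC and NC to $G^t$ with no class-membership check, and the offending configurations do occur in its later uses of the lemma (e.g.\ Lemma~\ref{lemma:pr-katz:two-arrow-graphs:katz} and Lemma~\ref{lemma:katz:semi-out-regular}, where the common out-degree is arbitrary). So you have identified a genuine gap in the published argument, not merely in your own; your further remark that any regularization forcing the successor-closure to out-degree $r$ pushes $\lambda\ge r$ is also correct, which is why no small variant of the gadget can repair it.

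What you have not done, however, is close that gap, so for $\mathcal{G}^{K(a)}$ your proposal does not prove the lemma. The inductive ``peeling'' sketch does not withstand scrutiny: to remove a sink $s$ from $S(u)\cup S(w)$ you must delete some edge $(x,s)$ with $x\in\{u,w\}\cup S(u)\cup S(w)$, but ED then protects only nodes outside $S(x)\ni s$, so $F_s$ --- which is part of the very conclusion being proved --- need not be preserved, and the deletion lowers $\deg^+_x$ below $r$, destroying exactly the out-degree equality that NC (or any inductive hypothesis shaped like this lemma) requires. Likewise, merging two sinks, though licensed by NC, reduces their number but leaves the essential obstruction --- a degree-$0$ successor of degree-$r$ nodes --- in place, so the induction never reaches a configuration in which plain NC applies to $u$ and $w$. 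The Katz half of the lemma therefore remains unproven in your write-up, exactly as it is (silently) unproven in the paper; the difference is that you flag it. The same assessment applies to the mixed sub-case of (b), with one of $u,w$ a sink and the other not: you rightly note it is not an instance of (a) --- a point the paper's one-line reduction overlooks --- but the ``separate direct argument'' you promise is never supplied (it is, at least, never needed in the paper's applications of the lemma, which always combine nodes of equal out-degree).
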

\begin{proof}
Part \emph{b.} comes immediately from \emph{a.} since in semi-out-regular graph $G=(V,E,b,c)$ for any two nodes $u,w \in V$ we have that $\deg^+_u(G)=\deg^+_w(G)$ or $\deg^+_u(G)=0$.
Thus, let us focus on proving $\emph{a.}$

To this end, consider arbitrary nodes $u,w \in V$ such that $\deg^+_u(G) = \deg^+_w(G)$ and for every $s \in S(u) \cup S(w)$ either $\deg^+_s(G)=\deg^+_u(G)$ or $\deg^+_s(G)=0$.
If $\deg^+_u(G) = \deg^+_w(G)=0$, then the thesis comes directly from NC.
Assume otherwise, and let us denote $r = \deg^+_u(G) = \deg^+_w(G)$.

Consider nodes in $S(u)\cup S(w)$ that do not have outgoing edges, i.e., let $V' = \{v \in S(u)\cup S(w) : \Gamma^+_v(G) = \emptyset\}$.
Let us add a new node to the graph, $t \not \in V$, and an edge of weight $r$ from each node in $V'$ to $t$ and from $t$ to $t$ itself.
Formally, let $G^t = (V \cup \{t\}, E \cup E^t, b^t, c^t)$ where $E^t = \{ (v,t) : v \in V' \cup \{t\}\}$, $b^t_V=b_V$ and $b^t(t)=0$, and $c^t_E=c_E$ and $c^t(e)=r$ for every $e \in E^t$.
Observe that since nodes in $V'$ does not have outgoing edges in graph $G$, they does not have successors in $V$ i.e., $S(w) \cap V = \emptyset$ for every $w \in V'$.
Thus, from ED and LOC we have that
\begin{equation}
    \label{eq:lemma:pr-katz:semi-out-regular:1}
    F_v(G^t) = F_v(G) \quad \mbox{for every } v \in V.
\end{equation}

Now, let us perform the same operation on graph resulting from proportional combining of node $u$ int $w$.
More in detail, let us take
$\hat{G} = (\hat{V},\hat{E},\hat{b}, \hat{c}) = C^F_{u \rightarrow w}(G)$
and observe that since $\deg^+_u(G)=\deg^+_w(G)=r>0$, nodes in $V'$ are still present in graph $\hat{G}$ and still they do not have outgoing egdes.
Thus, let us consider graph
$\hat{G}^t = (\hat{V} \cup \{t\}, \hat{E} \cup E^t, \hat{b}^t, \hat{c}^t)$ where
$E^t = \{ (v,t) : v \in V' \cup \{t\}\}$,
$\hat{b}^t_V=\hat{b}_V$ and $\hat{b}^t(t)=0$ while
$\hat{c}^t_E=\hat{c}_E$ and $\hat{c}^t(e)=r$ for every $e \in E^t$.
Similarly, from ED and LOC we have that
\begin{equation}
    \label{eq:lemma:pr-katz:semi-out-regular:2}
    F_v(\hat{G}^t) = F_v(\hat{G}) \quad \mbox{for every } v \in \hat{V}.
\end{equation}
Observe that graph $\hat{G}^t$ is in fact the graph that we obtain from combining $u$ into $w$ in graph $G^t$, i.e., $\hat{G}^t = C^F_{u \rightarrow w}(G^t)$.
In $G^t$ all successors of nodes $u$ and $w$ have out-degree equal $r$, hence from NC we have
\begin{equation*}
    F_v(\hat{G}^t) =
    \begin{cases}
        F_v(G^t) & \mbox{if } v \in V \setminus \{u,w\}\\
        F_u(G^t) + F_w(G^t) & \mbox{otherwise.}
    \end{cases}
\end{equation*}
Combining this with Equations~\eqref{eq:lemma:pr-katz:semi-out-regular:1} and~\eqref{eq:lemma:pr-katz:semi-out-regular:2} yields the thesis.
\end{proof}

\begin{lemma}
\label{lemma:pr-katz:source-node}
If a centrality measure $F$ defined on $\mathcal{G}^{K(a)}$ (or $\mathcal{G}$) satisfies ED and BL, then for every graph $G=(V,E,b,c)$ and node $v \in V$ such that $G \in \mathcal{G}^{K(a)}$ (or $G \in \mathcal{G}$), and $\Gamma^-_v(G)=\emptyset$ we have
$$F_v(G) = b(v).$$
\end{lemma}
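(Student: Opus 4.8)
The plan is to turn the source node $v$ into an isolated node by deleting its outgoing edges one by one, using Edge Deletion to keep $F_v$ fixed at each step, and then to read off the value from Baseline.

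The crucial observation is that $v$ is not among its own successors. Any walk witnessing $v \in S(v)$ must end at $v$ and therefore traverse an incoming edge of $v$ in its last step; since $\Gamma^-_v(G) = \emptyset$, no such walk exists, so $v \in V \setminus S(v)$. Because deleting outgoing edges of $v$ never creates new incoming edges of $v$, this property persists throughout the construction.

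Concretely, I would list $\Gamma^+_v(G) = \{(v,w_1),\dots,(v,w_k)\}$ (note that none of these is a self-loop, as $\Gamma^-_v(G)=\emptyset$), set $G_0 = G$, and let $G_i$ be $G_{i-1}$ with the edge $(v,w_i)$ removed. Applying Edge Deletion to the edge $(v,w_i)$ and the node $v \in V \setminus S(v)$ gives $F_v(G_i) = F_v(G_{i-1})$ for each $i$, so that $F_v(G) = F_v(G_k)$. In $G_k$ the node $v$ has no outgoing edges (all deleted) and no incoming edges (none to start with), hence it is isolated, and Baseline yields $F_v(G_k) = b(v)$. Chaining the equalities gives $F_v(G) = b(v)$.

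The only delicate point, and the main obstacle in the restricted setting, is that each intermediate graph $G_i$ must lie in the class on which $F$ is defined so that the restricted form of Edge Deletion applies. For $\mathcal{G}$ this is automatic. For $\mathcal{G}^{K(a)}$, which collects the graphs with principal eigenvalue $\lambda < 1/a$, I would invoke the monotonicity of the Perron root under an entrywise decrease of a non-negative matrix: deleting an edge only lowers entries of the adjacency matrix, so $\lambda(G_i) \le \lambda(G_{i-1}) < 1/a$ and hence $G_i \in \mathcal{G}^{K(a)}$.
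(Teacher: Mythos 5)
Your proof is correct and takes essentially the same approach as the paper's: the paper likewise iterates Edge Deletion (removing \emph{all} edges of the graph, justified by the observation that $v \notin S(u)$ for every $u \in V$ since $v$ has no incoming edges) and then concludes with Baseline applied to the isolated node $v$. The only differences are cosmetic---you delete just the outgoing edges of $v$ rather than every edge, and you make explicit the Perron-root monotonicity argument guaranteeing the intermediate graphs remain in $\mathcal{G}^{K(a)}$, a point the paper leaves implicit.
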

\begin{proof}
Observe that in graph $G$ node $v$ is not a successor of any node, i.e., for every $u \in V$ it holds that $v \not \in S(u)$.
Thus, in graph $G'=(V,\emptyset,b,c_\emptyset)$, i.e., graph $G$ with all edges removed, from ED we have
$F_v(G')=F_v(G)$.
In graph $G'$ node $v$ is isolated, thus the thesis follows from BL.
\end{proof}

\begin{lemma}
\label{lemma:pr-katz:positive-weight}
If a centrality measure $F$ defined on $\mathcal{G}^{K(a)}$ (or $\mathcal{G}$) satisfies LOC, ED, NC, and BL, then for every semi-out-regular graph $G=(V,E,b,c)$ and node $v \in V$ such that $G \in \mathcal{G}^{K(a)}$ (or $G \in \mathcal{G}$), and $b(v)>0$ we have
$$F_v(G) > 0.$$
\end{lemma}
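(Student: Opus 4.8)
The plan is to reduce the claim to the already-established fact that a node with no incoming edges carries exactly its own weight (Lemma~\ref{lemma:pr-katz:source-node}), exploiting that centralities are non-negative and that Node Combination is additive on the combined node. Since $F$ is non-negative by assumption, it suffices to exhibit, through a single application of Node Combination, a \emph{source} copy of $v$ whose centrality is forced to equal $b(v)$. Concretely, I would build an auxiliary graph $\hat{G}=(\hat{V},\hat{E},\hat{b},\hat{c})$ from $G$ by splitting $v$ into two nodes $v_1$ and $v_2$: node $v_1$ receives weight $b(v)$, no incoming edges, and a faithful copy of all outgoing edges of $v$; node $v_2$ receives weight $0$, all incoming edges originally pointing to $v$, and also a copy of the outgoing edges of $v$. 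Everything else is left untouched. Because $v_1$ and $v_2$ are given identical out-neighbourhoods, $\deg^+_{v_1}(\hat{G})=\deg^+_{v_2}(\hat{G})=\deg^+_v(G)$, so $\hat{G}$ is again semi-out-regular with the same regularity constant.

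The key structural observation is that proportionally combining $v_2$ into $v_1$ recovers $G$ exactly, i.e.\ $C^F_{v_2\rightarrow v_1}(\hat{G})=G$ after identifying the surviving node with $v$. Indeed, the scaling step multiplies the outgoing edges of $v_1$ and $v_2$ by $F_{v_1}(\hat{G})/(F_{v_1}(\hat{G})+F_{v_2}(\hat{G}))$ and $F_{v_2}(\hat{G})/(F_{v_1}(\hat{G})+F_{v_2}(\hat{G}))$ respectively; since both nodes carry the same edge to each target $t$ with weight $c(v,t)$, the merged edge weight is again $c(v,t)$. The merged node collects all incoming edges of $v$ (all sitting on $v_2$) together with weight $b(v)+0=b(v)$. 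Hence Lemma~\ref{lemma:pr-katz:nc-in-sor} applies to the semi-out-regular graph $\hat{G}$ and yields $F_v(G)=F_{v_1}(\hat{G})+F_{v_2}(\hat{G})$. Now $v_1$ has no incoming edges, so Lemma~\ref{lemma:pr-katz:source-node} forces $F_{v_1}(\hat{G})=b(v)>0$; combined with $F_{v_2}(\hat{G})\ge 0$ from non-negativity, this gives $F_v(G)\ge b(v)>0$. Note that $F_{v_1}(\hat{G})=b(v)>0$ also guarantees $F_{v_1}(\hat{G})+F_{v_2}(\hat{G})>0$, so the proportional combining is well defined.

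The one point needing care --- and the main obstacle --- is checking that $\hat{G}$ lies in the domain on which $F$ is defined. For the PageRank case ($\mathcal{G}$) there is nothing to verify, but for the Katz case ($\mathcal{G}^{K(a)}$) I must confirm that $\hat{G}\in\mathcal{G}^{K(a)}$, i.e.\ that its principal eigenvalue stays below $1/a$. This should follow because $v_1$ is a source: its row in the adjacency matrix is identically zero, so for any nonzero eigenvalue the $v_1$-coordinate of an associated eigenvector must vanish, and the eigenequation then collapses to that of the submatrix obtained by deleting the row and column of $v_1$. That submatrix is precisely the adjacency matrix of $G$ (with $v_2$ playing the role of $v$). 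Thus every nonzero eigenvalue of $\hat{G}$ is an eigenvalue of $G$, whence the principal eigenvalue of $\hat{G}$ is at most that of $G$, which is strictly below $1/a$; therefore $\hat{G}\in\mathcal{G}^{K(a)}$ and the argument above goes through verbatim.
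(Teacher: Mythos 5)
Your proof is correct and follows essentially the same route as the paper's own: split $v$ into a weight-$b(v)$ source copy with duplicated outgoing edges and a weight-$0$ copy retaining the incoming edges, observe that proportional combining recovers $G$, and conclude from Lemma~\ref{lemma:pr-katz:nc-in-sor}b, Lemma~\ref{lemma:pr-katz:source-node}, and non-negativity. The only differences are cosmetic (you merge the incoming-edge copy into the source copy rather than the reverse) and your explicit spectral argument that the auxiliary graph stays in $\mathcal{G}^{K(a)}$, a domain check the paper's proof leaves implicit.
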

\begin{proof}
Let us consider graph $G'$ with additional node $v'$ with exactly the same set of outgoing edges as $v$ in $G$, but without any incoming edges.
Also, let us transfer all node weight of node $v$ into $v'$.
Formally, let
$G'=(V \cup \{v'\}, E',b',c')$, where
$E' = E \cup \{(v',u) : (v,u) \in \Gamma^+_v(G)\}$,
$b'(v)=0$, $b'(v')=b(v)$ and $b'(w)=b(w)$ for every $w \in V \setminus \{v\}$ while
$c'_E = c_E$ and $c'(v',u)=c(v,u)$ for every $(v,u) \in \Gamma^+_v(G)$.
Clearly, $C^F_{v' \rightarrow v}(G')=G$.
Observe that $G'$ is still semi-out-regular, thus from Lemma~\ref{lemma:pr-katz:nc-in-sor}b we have $F_v(G)=F_v(G') + F_{v'}(G')$.
Now, from Lemma~\ref{lemma:pr-katz:source-node} we get that $F_{v'}(G')=b(v)>0$, thus from the fact that centrality is always non-negative we have $F_v(G)>0$.
\end{proof}

\begin{lemma}
\label{lemma:pr-katz:node-weights}
If a centrality measure $F$ defined on $\mathcal{G}^{K(a)}$ (or $\mathcal{G}$) satisfies LOC, ED, NC, and BL, then for every semi-out-regular graph $G=(V,E,b,c)$ such that $G \in \mathcal{G}^{K(a)}$ (or $G \in \mathcal{G}$) and node $v \in V$ it holds that
\begin{itemize}
    \item[a.] $F_v(V,E,b + b',c) = F_v(G) + F_v(V,E,b',c)$ for every node weights $b' : V \rightarrow \mathbb{R}_{\ge 0}$,
    \item[b.] $F_v(V,E, x \cdot b,c) = x \cdot F_v(G)$ for every $x \in \mathbb{R}_{\ge 0}$.
\end{itemize}
\end{lemma}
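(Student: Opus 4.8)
The plan is to establish part~\emph{a} (additivity in the node weights) by a duplicate-and-recombine argument, and then to derive part~\emph{b} (homogeneity) from it. To keep proportional combining well defined I would first prove the restricted statement that $F_v(V,E,b+\gamma,c)=F_v(G)+F_v(V,E,\gamma,c)$ whenever the added weighting $\gamma$ is \emph{strictly} positive, and only afterwards bootstrap to an arbitrary $b'\ge 0$.

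For the restricted statement, write $H=(V,E,\gamma,c)$ and take a disjoint isomorphic copy $H^\dagger=(V^\dagger,E^\dagger,\gamma^\dagger,c^\dagger)$ with $V^\dagger=\{v^\dagger:v\in V\}$. The sum $G+H^\dagger$ is again semi-out-regular with the same constant $r$, so by LOC we have $F_v(G+H^\dagger)=F_v(G)$ and $F_{v^\dagger}(G+H^\dagger)=F_v(H)$ for every $v\in V$. Fixing an ordering $v_1,\dots,v_n$ of $V$, I would successively apply proportional combining, setting $G_0=G+H^\dagger$ and $G_k=C^F_{v_k^\dagger\to v_k}(G_{k-1})$. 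Proportional combining of two nodes of equal out-degree preserves that out-degree, so every $G_{k-1}$ stays semi-out-regular and part~\emph{b} of Lemma~\ref{lemma:pr-katz:nc-in-sor} applies at each step; moreover each combine is well defined because $F_{v_k^\dagger}(G_{k-1})=F_{v_k}(H)>0$ by Lemma~\ref{lemma:pr-katz:positive-weight} (here strict positivity of $\gamma$ is used). An induction then gives $F_{v_j}(G_k)=F_{v_j}(G)+F_{v_j}(H)$ for $j\le k$.

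The technical heart is verifying that the final graph $G_n$ equals $(V,E,b+\gamma,c)$. The node set is $V$ and each $v$ has accumulated weight $b(v)+\gamma(v)$. For the edge weights I would track one edge $(u,w)\in E$ together with its copy $(u^\dagger,w^\dagger)$: when $u^\dagger$ is merged into $u$ the two outgoing copies are scaled by $F_u(G)/(F_u(G)+F_u(H))$ and $F_u(H)/(F_u(G)+F_u(H))$ (these being exactly the current centralities of $u$ and $u^\dagger$ by the induction hypothesis), and when $w^\dagger$ is merged into $w$ the scaled copy of $(u^\dagger,w^\dagger)$ is identified with $(u,w)$; since the two factors sum to one, the resulting single edge has weight $c(u,w)$. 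The same computation covers self-loops $u=w$ and is insensitive to the order of the two merges. Hence $G_n=(V,E,b+\gamma,c)$ and $F_v(V,E,b+\gamma,c)=F_v(G)+F_v(H)$.

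To remove the strict-positivity restriction, fix any strictly positive $\hat b$ and apply the restricted additivity three times: to $b+(b'+\hat b)$, to $b'+\hat b$, and to $(b+b')+\hat b$ (in each case the second summand is strictly positive, so no combine is ever degenerate). Cancelling the common term $F_v(V,E,\hat b,c)$ yields the full additivity of part~\emph{a}. Part~\emph{b} then follows in stages: iterating part~\emph{a} gives $F_v(V,E,n\cdot b,c)=n\,F_v(G)$ for integer $n$ and, with $b'=\mathbf{0}$, gives $F_v(V,E,\mathbf{0},c)=0$; together these give homogeneity for every rational $x$. The main obstacle is the passage to irrational $x$, which I would handle through monotonicity: part~\emph{a} together with non-negativity of centrality gives $F_v(V,E,b'',c)\ge F_v(V,E,b',c)$ whenever $b''\ge b'$ pointwise, and squeezing $x\cdot b$ between $p\cdot b$ and $q\cdot b$ for rationals $p\le x\le q$ with $p\uparrow x$, $q\downarrow x$ forces $F_v(V,E,x\cdot b,c)=x\,F_v(G)$.
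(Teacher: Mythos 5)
There is a genuine gap, and it sits in your very first step: the equality $F_{v^\dagger}(G+H^\dagger)=F_v(H)$. What LOC actually gives is $F_{v^\dagger}(G+H^\dagger)=F_{v^\dagger}(H^\dagger)$; the further identification $F_{v^\dagger}(H^\dagger)=F_v(H)$ is isomorphism (relabeling) invariance of $F$, which is neither one of the axioms nor part of the paper's definition of a centrality measure, and in an axiomatic uniqueness proof it cannot be taken for granted. The paper is visibly careful about exactly this point: whenever a disjoint isomorphic copy appears in its proofs, either the copy's centralities are pinned down by Lemma~\ref{lemma:pr-katz:source-node} or by an inductive hypothesis identifying $F$ with a concrete (hence invariant) measure, or---as in the paper's own proof of this very lemma---the unknown value on the copy is carried along symbolically and cancels. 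Concretely, the paper sums $G$ with a unit-weight copy $\hat G$, then $\hat G'$ with $G'$, then $G''$ with $\hat G$, and the never-evaluated quantity $F_{\hat v}(\hat G)$ drops out of the final comparison. Without invariance, your induction only yields $F_v(V,E,b+\gamma,c)=F_v(G)+F_{v^\dagger}(H^\dagger)$, and your $\hat b$-bootstrap then produces terms such as $F_{v^\dagger}((b'+\hat b)^\dagger)$ that cannot be cancelled against quantities living on $V$ without invoking invariance again, so the argument regresses through an unending chain of copies. Everything else in your proposal is sound: semi-out-regularity is preserved at each merge, the two scaling factors sum to one so every edge weight returns to $c(u,w)$ (including self-loops, in either merge order), the three-fold application of restricted additivity with a strictly positive $\hat b$ cancels correctly, and your monotonicity-plus-rational-density squeeze is a correct hands-on proof of the Cauchy step that the paper simply cites.

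The gap is repairable inside your own scheme, which is worth recording. Run the merging in both directions: combining the copy into the original gives $F_v(V,E,b+\gamma,c)=F_v(V,E,b,c)+F_{v^\dagger}(H^\dagger)$, while combining the original into the copy (well defined for the same reason, namely $F_{v^\dagger}(H^\dagger)>0$ by Lemma~\ref{lemma:pr-katz:positive-weight}) produces the graph $(V^\dagger,E^\dagger,(b+\gamma)^\dagger,c^\dagger)$ together with $F_{v^\dagger}(V^\dagger,E^\dagger,(b+\gamma)^\dagger,c^\dagger)=F_v(V,E,b,c)+F_{v^\dagger}(H^\dagger)$. Taking $b=\mathbf{0}$ in the second identity, the resulting graph is $H^\dagger$ itself, which forces $F_v(V,E,\mathbf{0},c)=0$; the first identity with $b=\mathbf{0}$ then gives precisely $F_{v^\dagger}(H^\dagger)=F_v(V,E,\gamma,c)$ for strictly positive $\gamma$, i.e.\ the invariance you assumed. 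With that established, your restricted additivity, the cancellation, and part~b all go through. Alternatively, adopt the paper's three-graph cancellation wholesale; either way, this step has to be argued rather than assumed.
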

\begin{proof}
For \emph{a.} let us consider four semi-out-regular graphs: $G=(V,E,b,c)$, $G'=(V,E,b',c)$, $G''=(V,E,b + b',c)$ and isomorphic to them, but with different node weights $\hat{G}=(\hat{V},\hat{E},\textbf{1},\hat{c})$, where $\hat{V}=\{\hat{v}: v \in V\}$, $\hat{E}=\{(\hat{u},\hat{v}) : (u,v) \in E\}$, $\textbf{1}(\hat{v}) = 1$ for every $v \in V$, and $\hat{c}(\hat{u},\hat{v})=c(u,v)$ for every $(u,v) \in E$.

Now, using graph $\hat{G}$ we will combine together graphs $G$ and $G'$.
To this end, consider the sum of graphs $G + \hat{G}$ and then let us sequentially combine node $v$ into node $\hat{v}$ for every $v \in V$.
$G$ and $\hat{G}$ are isomorphic (when not accounting for node weights) and from Lemma~\ref{lemma:pr-katz:positive-weight} all nodes in $\hat{G}$ have positive centrality.
In result, we obtain graph $\hat{G}'=(\hat{V},\hat{E},b + \textbf{1}, \hat{c})$.
From LOC and Lemma~\ref{lemma:pr-katz:nc-in-sor}b we get
\begin{equation}
\label{eq:lemma:pr-katz:node-weights:1}
    F_{\hat{v}}(\hat{G}') = F_v(G) + F_{\hat{v}}(\hat{G}).
\end{equation}

Next, let us consider sum of graphs $\hat{G}'$ and $G'$ and this time let us sequentially combine node $\hat{v}$ into node $v$ for every $v \in V$.
Observe that both graphs are isomorphic as well (when not accounting for node weights) and from Equation~\eqref{eq:lemma:pr-katz:node-weights:1} centralities of all nodes in $\hat{G}'$ are still positive.
Hence, we obtain graph $G^*=(V,E,b + b' + \textbf{1},c)$.
Thus, from LOC, Lemma~\ref{lemma:pr-katz:nc-in-sor}b and Equation~\eqref{eq:lemma:pr-katz:node-weights:1} we have
\begin{equation}
\label{eq:lemma:pr-katz:node-weights:2}
    F_{v}(G^*)=F_v(G') + F_{\hat{v}}(\hat{G}') = F_v(G') + F_v(G) + F_{\hat{v}}(\hat{G}).
\end{equation}

On the other hand, as we will show, graph $G^*$ can be also obtained using graph $G''$ and $\hat{G}$.
To this end, consider the sum of graphs $G'' + \hat{G}$ and in this graph let us sequentially combine node $\hat{v}$ into $v$ for every $v \in V$.
Observe that in this way we also obtain graph $G^*$.
Thus, from LOC and Lemma~\ref{lemma:pr-katz:nc-in-sor}b we get
$F_{v}(G^*)=F_v(G'') + F_{\hat{v}}(\hat{G}').$
Combining this with Equation~\eqref{eq:lemma:pr-katz:node-weights:2} yields
\(
    F_v(G'') = F_v(G') + F_v(G)
\)
which concludes the proof of this part.

For \emph{b.} consider function $f(x)=F_v(V,E,x \cdot b,c)$.
From \emph{a.} we know that function $f$ is additive, i.e., $f(x+y)=f(x)+f(y)$ for every $x,y \in \mathbb{R}_{\ge 0}$.
From the definition of centrality measure, we know it is also non-negative, i.e., $f(x)\ge 0$ for every $x \in \mathbb{R}_{\ge 0}$.
This implies that the function is of the form $f(x)=x \cdot r$ for some $r \in \mathbb{R}_{\ge 0}$ \citep{Cauchy:1821}.
Since, $f(1)=F_v(G)$, we get that $F_v(V,E,x \cdot b,c) = x \cdot F_v(G)$ for every $v \in V$.
\end{proof}

\begin{lemma}
\label{lemma:pr-katz:one-arrow-graphs}
If a centrality measure $F$ defined on $\mathcal{G}^{K(a)}$ (or $\mathcal{G}$) satisfies LOC, NC, and BL, then there exists a constant $a_F \in \mathbb{R}_{\ge 0}$ such that for every $x \ge 0$, nodes $u,v$, and graph $G=(\{u,v\},\{(u,v)\},b,c)$ where $b(u)=x$, $b(v)=0$, and  $c(u,v)=1$ we have
$$F_v(G) = a_F \cdot x.$$
\end{lemma}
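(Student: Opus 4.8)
The plan is to set $a_F := F_v(G_0)$, where $G_0$ is the instance with $x=1$, and then to show that the map $f(x) := F_v(\{u,v\},\{(u,v)\},b,c)$ with $b(u)=x$, $b(v)=0$, $c(u,v)=1$ is linear. Since centralities are non-negative, it is enough to prove that $f$ is \emph{additive}, $f(x+y)=f(x)+f(y)$: additivity plus non-negativity forces $f(x)=x\,f(1)=a_F x$ by the Cauchy functional-equation argument already used in Lemma~\ref{lemma:pr-katz:node-weights} \cite{Cauchy:1821}, and the case $x=0$ drops out of $f(0)=f(0)+f(0)$. In fact the whole statement can be read off directly from Lemma~\ref{lemma:pr-katz:node-weights}b, because the single-edge graph is semi-out-regular (the source has out-degree $1$, the sink out-degree $0$), so $f(x)=F_v(V,E,x\cdot b,c)=x\,F_v(G_0)=a_F x$. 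I would, however, spell out the underlying combining argument, since it isolates exactly where the axioms bite.

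To prove additivity from scratch I would take two vertex-disjoint copies of the single-edge graph, with source weights $x$ (nodes $u_1,v_1$) and $y$ (nodes $u_2,v_2$), and form their sum. The key point is that this union is semi-out-regular with $r=1$, so the generalized Node Combination of Lemma~\ref{lemma:pr-katz:nc-in-sor}b may be applied to any pair of its nodes. First I would combine the two sinks, $v_2$ into $v_1$; proportional combining leaves the (empty) out-edge sets untouched, so by Lemma~\ref{lemma:pr-katz:nc-in-sor}b (which adds the combined node's centrality) together with LOC, the resulting graph $H$ --- two sources pointing to a single sink --- satisfies $F_{v_1}(H)=f(x)+f(y)$.

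Next I would combine the two sources, $u_2$ into $u_1$, again justified by Lemma~\ref{lemma:pr-katz:nc-in-sor}b since $H$ is still semi-out-regular. Under proportional combining the outgoing edge of $u_1$ is scaled by $F_{u_1}/(F_{u_1}+F_{u_2})$ and that of $u_2$ by $F_{u_2}/(F_{u_1}+F_{u_2})$, and after the merge the two edges into $v_1$ recombine into a single edge of weight $1$, while the node weight becomes $x+y$. Thus the graph produced is precisely the single-edge instance with source weight $x+y$, and Lemma~\ref{lemma:pr-katz:nc-in-sor}b keeps $F_{v_1}$ fixed, giving $f(x+y)=F_{v_1}(H)=f(x)+f(y)$.

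The main obstacle is this last merge. Plain Node Combination cannot be used to identify $u_1$ and $u_2$, because their common successor $v_1$ is a sink of out-degree $0\neq 1$ and so fails NC's equal-out-degree precondition; this is exactly the degenerate case that the semi-out-regular form of the axiom (Lemma~\ref{lemma:pr-katz:nc-in-sor}) was built to absorb. Everything else is routine: the sink merge is immediate, the edge-weight bookkeeping under proportional combining is elementary, and the passage from additivity to linearity is the standard Cauchy step. I would expect the delicate drafting to lie entirely in making the two proportional-combining computations precise and in confirming that the recombined edge weight is indeed $1$.
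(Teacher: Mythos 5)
Your proposal is correct and takes essentially the same approach as the paper: the paper's proof is exactly your ``direct'' route, namely defining $a_F$ as the sink's centrality in the unit-weight single-edge graph and invoking Lemma~\ref{lemma:pr-katz:node-weights}b, since that graph is semi-out-regular. The two-copy combining argument you add is sound (and you correctly spot that the source merge needs Lemma~\ref{lemma:pr-katz:nc-in-sor} rather than plain NC, since the common successor is a sink), but it merely re-derives in this special case the additivity that Lemma~\ref{lemma:pr-katz:node-weights} already supplies, which is why the paper omits it.
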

\begin{proof}
Let us denote $a_F = F_v(\{u,v\},\{(u,v)\},b,c)$ where $b(u)=c(u,v)=1$ and $b(v)=0$.
Since $G$ is semi-out-regular, the thesis follows from Lemma~\ref{lemma:pr-katz:node-weights}b.
\end{proof}



\begin{definition}
\label{def:profit-function}
The \emph{profit function} of centrality $F$ is a function that for every $x,y,z \in \mathbb{R}_{\ge 0}$ such that $y \le z$ returns the value $p_F(x,y,z) = F_v(G)$, where 
\[
    G= 
    \begin{cases}
        ( \{u,v,w\}, \{ (u,v), (u,w) \},b,c) & \mbox{if } y<z, \\
        ( \{u,v,w\}, \{ (u,v)\},b,c) & \mbox{otherwise,}
    \end{cases}
\]
with $b(u)=x$ and $b(v)=b(w)=0$ while $c(u,v)=y$ and if $y<z$ also $c(u,w)=z-y$.
\end{definition}

\begin{figure}[t]
\centering
\begin{tikzpicture}
  \def\x{0.7cm} 
  \def\y{0cm} 

  \tikzset{
    node_blank/.style={circle,draw,minimum size=0.5cm,inner sep=0, color=white}, 
    node/.style={circle,draw,minimum size=0.45cm,inner sep=0, fill = black!05, font=\footnotesize}, 
    edge/.style={-latex,above,font=\footnotesize},
    arrow/.style={draw, single arrow, minimum width = 0.9cm, minimum height=\y-6*\x+\s, fill=black!10},
    blank/.style={}
  }

  \node[node, label={180:\footnotesize $0$}] (v) at (\y + 0*\x, 3*\x) {$v$};
  \node[node, label={180:\footnotesize $x$}] (u) at (\y + 0*\x, 0*\x) {$u$};
  \node[node, label={0:\footnotesize $0$}] (w) at (\y + 1*\x, 3*\x) {$w$};
  
  \node[blank] (G) at (\y + 1*\x-0.02cm, -0.7cm) {$G$};

  \path[->,draw,thick]
  (u) edge[edge] node[left] {$y$}  (v)
  ;
  
  \def\y{3cm} 
  \node[blank] (I) at (0.5*\y+0.5*\x, 4*\x) {Case (I)};
  
  \node[node, label={180:\footnotesize $0$}] (v) at (\y + 0*\x, 3*\x) {$v$};
  \node[node, label={180:\footnotesize $x$}] (u) at (\y + 0*\x, 0*\x) {$u$};
  
  \node[blank] (G) at (\y + 1*\x-0.02cm, -0.7cm) {$G'$};

  \path[->,draw,thick]
  (u) edge[edge] node[left] {$1$}  (v)
  ;

  \def\y{7.5cm} 
  
  \node[node, label={180:\footnotesize $0$}] (v) at (\y + 0*\x, 3*\x) {$v$};
  \node[node, label={180:\footnotesize $\frac{x\cdot y}{z}$}] (u) at (\y + 0*\x, 0*\x) {$u$};
  \node[node, label={0:\footnotesize $0$}] (w) at (\y + 1*\x, 3*\x) {$w$};
  \node[node, label={0:\footnotesize $\frac{x\cdot (z-y)}{z}$}] (u_) at (\y + 1*\x, 0*\x) {$u'$};
  
  \node[blank] (G) at (\y + 1*\x-0.02cm, -0.7cm) {$G$};

  \path[->,draw,thick]
  (u) edge[edge] node[left] {$z$}  (v)
  (u_) edge[edge] node[right] {$z$}  (w)
  ;
  
  \def\y{11cm} 
  \node[blank] (II) at (9.25cm +0.5*\x, 4*\x) {Case (II)};
  
  \node[node, label={180:\footnotesize $0$}] (v) at (\y + 0*\x, 3*\x) {$v$};
  \node[node, label={180:\footnotesize $x$}] (u) at (\y + 0*\x, 0*\x) {$u$};
  \node[node, label={0:\footnotesize $0$}] (w) at (\y + 1*\x, 3*\x) {$w$};
  
  \node[blank] (G) at (\y + 1*\x-0.02cm, -0.7cm) {$G'$};

  \path[->,draw,thick]
  (u) edge[edge] node[left] {$y$}  (v)
  (u) edge[edge] node[right] {$z-y$}  (w)
  ;

\end{tikzpicture}
\caption{Graphs considered in the proof of Lemmas~\ref{lemma:pr-katz:two-arrow-graphs:pr} and~\ref{lemma:pr-katz:two-arrow-graphs:katz}.
The weight of each node and edge is shown.}
\label{fig:profit}
\end{figure}
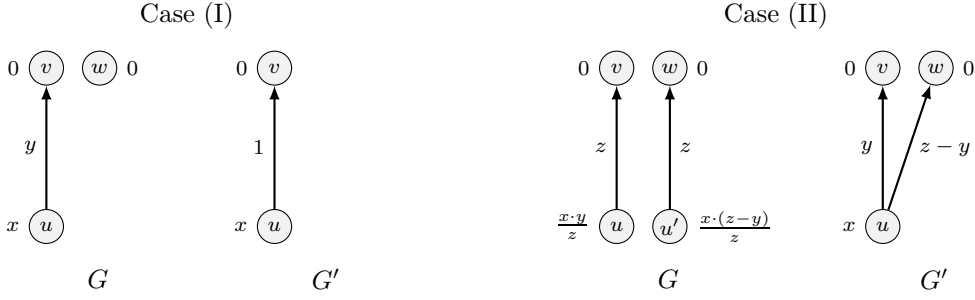

\begin{lemma}
\label{lemma:pr-katz:two-arrow-graphs:pr}
If a centrality measure $F$ defined on $\mathcal{G}$ satisfies LOC, ED, NC, EM, and BL, then for every $x,y,z \in \mathbb{R}_{\ge 0}$ such that $z \ge y$ we have
$$p_F(x,y,z) = p_{PR^{a_F}}(x,y,z)=a_F \cdot x \cdot y/z.$$
\end{lemma}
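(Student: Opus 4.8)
The plan is to realize the profit-function graph $G'$ as the result of a single proportional-combining operation applied to a larger, source-only graph $G$ whose relevant centrality I can read off directly, and then to invoke the generalized Node Combination property (Lemma~\ref{lemma:pr-katz:nc-in-sor}) to transfer that value back to $G'$. Concretely, for the main case $0 < y < z$ I would first build the graph $G$ depicted as Case~(II) in Figure~\ref{fig:profit}: four nodes $u, u', v, w$ with edges $(u,v)$ and $(u',w)$, both of weight $z$, and node weights $b(u) = xy/z$, $b(u') = x(z-y)/z$, $b(v)=b(w)=0$. Since $u$ and $u'$ have no incoming edges, Lemma~\ref{lemma:pr-katz:source-node} gives $F_u(G) = xy/z$ and $F_{u'}(G) = x(z-y)/z$. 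To compute $F_v(G)$ I would note that $\{u,v\}$ and $\{u',w\}$ form disjoint components, so by LOC I may discard $\{u',w\}$; on the two-node graph $(\{u,v\}, \{(u,v)\})$ with $c(u,v)=z$ and $b(u)=xy/z$, EM lets me rescale the single outgoing edge of $u$ to weight $1$, after which Lemma~\ref{lemma:pr-katz:one-arrow-graphs} yields $F_v(G) = a_F \cdot xy/z$.

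The key step is to observe that $G'$, the defining graph of $p_F(x,y,z)$ from Definition~\ref{def:profit-function}, is exactly $C^F_{u' \rightarrow u}(G)$. Indeed, proportional combining scales the outgoing edges of $u$ by $F_u(G)/(F_u(G)+F_{u'}(G)) = y/z$ and those of $u'$ by $(z-y)/z$, turning the edge weights $z$ and $z$ into $y$ and $z-y$, then merges $u'$ into $u$, producing a single node of weight $xy/z + x(z-y)/z = x$ with outgoing edges $(u,v)$ of weight $y$ and $(u,w)$ of weight $z-y$. The graph $G$ is semi-out-regular, since its out-degrees are $z$ for $u,u'$ and $0$ for the sinks $v,w$, so the generalized Node Combination property (Lemma~\ref{lemma:pr-katz:nc-in-sor}, part~b) applies even though the successors $v$ and $w$ have out-degree $0$; it gives $F_v(G') = F_v(C^F_{u'\rightarrow u}(G)) = F_v(G) = a_F \cdot xy/z$.

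The hard part is exactly this out-degree mismatch: plain NC requires the successors of the combined nodes to share their out-degree, which fails here because $v$ and $w$ are sinks, so the argument genuinely relies on having first established Lemma~\ref{lemma:pr-katz:nc-in-sor} rather than on NC itself. It then remains to dispose of the boundary cases. For $y = z$ the graph carries the single edge $(u,v)$ and an isolated $w$; LOC removes $w$, EM rescales $(u,v)$ to unit weight, and Lemma~\ref{lemma:pr-katz:one-arrow-graphs} gives $p_F(x,y,z) = a_F x = a_F xy/z$. For the degenerate weights I would use homogeneity (Lemma~\ref{lemma:pr-katz:node-weights}b, as $G'$ is semi-out-regular) when $x=0$, and Lemma~\ref{lemma:pr-katz:source-node} when $y=0$ (then $v$ is a source, so $F_v=0$); in both situations the two sides vanish. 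Finally, the matching PageRank value $p_{PR^{a_F}}(x,y,z) = a_F xy/z$ follows from the PageRank equation~\eqref{eq:rec:pr} applied to the source $u$ together with $a_{PR^{a_F}} = a_F$, which completes the displayed chain of equalities.
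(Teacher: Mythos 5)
Your proposal is correct and follows essentially the same route as the paper's proof: the same auxiliary four-node graph with two disjoint weight-$z$ edges and split node weights $xy/z$ and $x(z-y)/z$, the same use of Lemma~\ref{lemma:pr-katz:source-node}, LOC, EM and Lemma~\ref{lemma:pr-katz:one-arrow-graphs} to evaluate $F_v$, and the same key observation that the profit-function graph equals $C^F_{u'\rightarrow u}(G)$ so that semi-out-regularity lets Lemma~\ref{lemma:pr-katz:nc-in-sor}b (rather than plain NC) finish the argument. Your explicit treatment of the degenerate cases $x=0$ and $y=0$ is a minor addition the paper leaves implicit.
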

\begin{proof}
The second equality comes directly from PageRank recursive equation (Equation~\eqref{eq:rec:pr}).
Thus, let us focus on the first equality.
To this end, we will consider two cases: the first in which $y=z$, i.e., a graph with only one edge (I), and the second in which $y<z$, i.e., a graph with two edges (II).
See Figure~\ref{fig:profit} for an illustration.

(I) In the case where $y=z$, let us consider graph from Definition~\ref{def:profit-function} of the form
\[
  G=(\{u,v,w\},\{(u,v)\},b,c),
\]
where $b(u) = x$, $b(v)=b(w)=0$ and $c(u,v)=y$.
If we remove node $w$ and change the weight of edge $(u,v)$ to $1$, we obtain graph $G'=(\{u,v\},\{(u,v)\},b_{-w},1/y \cdot c).$
From Lemma~\ref{lemma:pr-katz:one-arrow-graphs} we have $F_v(G') = a_F \cdot x$.
Hence, from LOC and EM, we obtain
\begin{equation}
    \label{eq:lemma:pr-katz:two-arrow-graphs:pr:1}
    F_v(G) = F_v(G') = a_F \cdot x = PR^{a_F}_v(G).
\end{equation}

(II) In the case of $y<z$, let us begin with a graph consisting of two pairs of nodes connected by a single edge, i.e., let
$G = (\{ u, u', v, w\},\{(u,v),(u',w)\},b,c)$ where $b(u)= x \cdot y/z$, $b(u') = x \cdot (z-y)/z$, and $b(v)=b(w)=0$ while $c(u,v)=c(u',w)=z$.
Observe that from LOC and Equation~\eqref{eq:lemma:pr-katz:two-arrow-graphs:pr:1} we have that $F_v(G)= a_F \cdot x \cdot y/z$.

Now, based on Lemma~\ref{lemma:pr-katz:source-node}, we obtain that $F_u(G)=x \cdot y/z$ and also that $F_{u'}(G)=x \cdot (z-y)/z$.
Thus, if we combine node $u'$ into $u$ in $G$ we obtain graph
$G' = (\{ u,v,w\},\{(u,v),(u,w)\},b',c')$, where $b'(u)=x$, $b'(v)=b'(w)=0$, $c'(u,v)=y$, $c'(u,w)=z-y$, which is a graph from Definition~\ref{def:profit-function}.
Since $G$ is semi-out-regular, Lemma~\ref{lemma:pr-katz:nc-in-sor} yields
$p_F(x,y,z)= F_v(G') = F_v(G)= a_F \cdot x \cdot y/z= p_{PR^{a_F}}(x,y,z).$
This concludes the proof.
\end{proof}

\begin{lemma}
\label{lemma:pr-katz:two-arrow-graphs:katz}
If a centrality measure $F$ defined on $\mathcal{G}^{K(a)}$ satisfies LOC, ED, NC, EC, and BL, then for every $x,y,z \ge 0$ such that $z \ge y$ we have
$$p_F(x,y,z) = p_{K^{a_F}}(x,y,z)= a_F \cdot x \cdot y.$$
\end{lemma}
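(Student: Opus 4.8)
The plan is to follow the same two-case structure as in the proof of Lemma~\ref{lemma:pr-katz:two-arrow-graphs:pr}, replacing every appeal to Edge Multiplication by an appeal to Edge Compensation. First, the second equality $p_{K^{a_F}}(x,y,z)=a_F\cdot x\cdot y$ is immediate from the Katz recursive equation (Equation~\eqref{eq:rec:katz}): in the profit-function graph the node $u$ is a source, so $K^{a_F}_u=b(u)=x$, and hence $K^{a_F}_v=a_F\cdot c(u,v)\cdot x=a_F\cdot x\cdot y$. It thus remains to establish the first equality $p_F(x,y,z)=a_F\cdot x\cdot y$, and I split into the cases $y=z$ (one edge) and $y<z$ (two edges); see Figure~\ref{fig:profit}. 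Throughout, all graphs involved are acyclic, so $\lambda=0$ and they all lie in $\mathcal{G}^{K(a)}$.

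For the case $y=z$, I start from the graph $G$ of Definition~\ref{def:profit-function} and use LOC to discard the node $w$ (which is isolated there), reducing to $G_1=(\{u,v\},\{(u,v)\},b_{-w},c)$ with $b(u)=x$, $b(v)=0$, $c(u,v)=y$. Here is the only genuinely new move: rather than rescaling the edge to weight $1$ while keeping the node weight fixed (as EM would do), I apply Edge Compensation to $u$ with constant $1/y$. Since $u$ has no incoming edges and no self-loop, the ``division of incoming edges'' part of EC is vacuous, so EC merely multiplies the single outgoing edge by $1/y$ (making its weight $1$) and divides the weight of $u$ by $1/y$ (making it $xy$), while leaving $F_v$ unchanged. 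The resulting graph has source weight $xy$ and unit edge weight, so Lemma~\ref{lemma:pr-katz:one-arrow-graphs} gives $F_v=a_F\cdot xy$; combined with LOC this yields $F_v(G)=a_F\cdot x\cdot y$.

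For the case $y<z$, I reuse the construction from the PageRank proof essentially verbatim. Let $G_0=(\{u,u',v,w\},\{(u,v),(u',w)\},b,c)$ with $b(u)=xy/z$, $b(u')=x(z-y)/z$, $b(v)=b(w)=0$ and $c(u,v)=c(u',w)=z$. By Lemma~\ref{lemma:pr-katz:source-node} the two sources satisfy $F_u(G_0)=xy/z$ and $F_{u'}(G_0)=x(z-y)/z$, and by LOC together with the case $y=z$ applied to the $\{u,v\}$-component (source weight $xy/z$, edge weight $z$) I obtain $F_v(G_0)=a_F\cdot(xy/z)\cdot z=a_F\cdot x\cdot y$. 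Since $G_0$ is semi-out-regular, Lemma~\ref{lemma:pr-katz:nc-in-sor} lets me combine $u'$ into $u$ without changing $F_v$; a direct check shows that the proportional scaling factors $F_u/(F_u+F_{u'})=y/z$ and $F_{u'}/(F_u+F_{u'})=(z-y)/z$ turn the two weight-$z$ edges into edges of weights $y$ and $z-y$ and give the merged node weight $x$, so the combined graph is exactly the profit-function graph of Definition~\ref{def:profit-function}. Hence $p_F(x,y,z)=F_v(G_0)=a_F\cdot x\cdot y$.

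The main obstacle, and the only point that differs essentially from Lemma~\ref{lemma:pr-katz:two-arrow-graphs:pr}, is the case $y=z$: one has to recognize that on a source node Edge Compensation transfers edge weight into node weight, so the factor $y$ reappears on the node weight (producing $a_F xy$) instead of being normalized away as in PageRank (which produced $a_F x$). Everything else --- the source-weight computation, the semi-out-regularity of $G_0$, and the bookkeeping of the proportional-combining factors --- is identical to the PageRank argument and relies only on the already-established Lemmas~\ref{lemma:pr-katz:nc-in-sor}, \ref{lemma:pr-katz:source-node} and~\ref{lemma:pr-katz:one-arrow-graphs}. Degenerate inputs ($y=0$, where no edge exists, or $x=0$) are handled directly by the claimed formula together with the node-weight linearity of Lemma~\ref{lemma:pr-katz:node-weights}.
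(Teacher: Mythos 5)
Your proposal is correct and follows essentially the same route as the paper's proof: the same split into the cases $y=z$ and $y<z$, the same two-source auxiliary graph $G_0$ in the second case, and the same appeals to Lemmas~\ref{lemma:pr-katz:nc-in-sor}, \ref{lemma:pr-katz:source-node} and~\ref{lemma:pr-katz:one-arrow-graphs}. The only (immaterial) difference is in the case $y=z$, where you apply EC at the source $u$, folding the edge weight into the node weight so that $F_v$ stays fixed and Lemma~\ref{lemma:pr-katz:one-arrow-graphs} gives $a_F\cdot xy$ directly, whereas the paper applies EC at $v$ and recovers the factor $y$ through the scaling $F_v(G)=y\cdot F_v(G')$; both are valid.
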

\begin{proof}
The proof follows in a similar fashion to the proof of Lemma~\ref{lemma:pr-katz:two-arrow-graphs:pr}.
The second equality comes directly from Katz centrality recursive equation (Equation~\eqref{eq:rec:katz}).
Hence, we focus on proving the first one.
To this end, we consider two cases: first in which $y=z$ (I), and second in which $y<z$ (II).
See Figure~\ref{fig:profit} for an illustration.

(I) For $y=z$, a graph from Definition~\ref{def:profit-function} is of the form $G=(\{u,v,w\},\{(u,v)\},b,c)$ where $b(u) = x$, $b(v)=b(w)=0$ and $c(u,v)=y$.
By removing node $w$ and changing the weight of edge $(u,v)$ to $1$, we obtain graph $G'=(\{u,v\},\{(u,v)\},b_{-w},1/y \cdot c).$
Lemma~\ref{lemma:pr-katz:one-arrow-graphs} yields $F_v(G') = a_F \cdot x$.
Since $F$ satisfies LOC and EC, we get
\begin{equation}
    \label{eq:lemma:pr-katz:two-arrow-graphs:katz:1}
    F_v(G) = y \cdot F_v(G') = a_F \cdot x \cdot y = K^{a_F}_v(G).
\end{equation}

(II) In the case where we have $y<z$, take graph
$G = (\{ u, u', v, w\},\! \{(u,v),\! (u',w)\},b,c)$ where $b(u)= x \cdot y/z$, $b(u') = x \cdot (z-y)/z$, and $b(v)=b(w)=0$ while $c(u,v)=c(u',w)=z$.
From Equation~\eqref{eq:lemma:pr-katz:two-arrow-graphs:katz:1} and LOC we have that $F_v(G)= a_F \cdot x \cdot y$.

From Lemma~\ref{lemma:pr-katz:source-node}, we get $F_u(G)=x \cdot y/z$ and $F_{u'}(G)=x \cdot (z-y)/z$.
Thus, by combining $u'$ into $u$ we obtain graph
$G' = (\{ u,v,w\},\{(u,v),(u,w)\},b',c')$, where $b'(u)=x$, $b'(v)=b'(w)=0$, $c'(u,v)=y$, $c'(u,w)=z-y$, which is a graph from Definition~\ref{def:profit-function}.
Since $G$ is semi-out-regular graph, from Lemma~\ref{lemma:pr-katz:nc-in-sor}b we obtain that
$p_F(x,y,z)= F_v(G') = F_v(G)= a_F \cdot x \cdot y=p_{K^{a_F}}(x,y,z).$
\end{proof}

\begin{lemma}
\label{lemma:pr-katz:recursive}
If a centrality measure $F$ defined on $\mathcal{G}^{K(a)}$ (or $\mathcal{G}$) satisfies LOC, ED, NC, BL, and EC (or EM) then for every semi-out-regular graph $G=(V,E,b,c)$ and node $v \in V$ such that $(v,v) \not \in E$ and $G \in \mathcal{G}^{K(a)}$ (or $G \in \mathcal{G}$) we have
$$F_v(G) = b(v) + \sum_{(u,v) \in E}p_F(F_u(G),c(u,v),\deg^+_u(G)).$$
\end{lemma}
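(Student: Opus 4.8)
The plan is to induct on the number of incoming edges $|\Gamma^-_v(G)|$, maintaining throughout the standing hypotheses that $G$ is semi-out-regular, lies in the relevant class, and that $v$ carries no self-loop. The base case $\Gamma^-_v(G)=\emptyset$ is immediate: node $v$ receives no flow, so Lemma~\ref{lemma:pr-katz:source-node} gives $F_v(G)=b(v)$, which matches the right-hand side since the sum is empty.

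For the inductive step I fix one incoming edge $(u,v)\in E$ (with $u\neq v$ since $(v,v)\notin E$) and isolate its contribution in two moves, following Figure~\ref{figure:proof_2}. \emph{Move 1:} I build $G'$ by detaching all outgoing edges of $u$ and reattaching them to a fresh source node $u'$ of weight $F_u(G)$, while giving $u$ a single dummy edge to a new sink $w$ of weight $\deg^+_u(G)$; this leaves every out-degree unchanged, so $G'$ stays semi-out-regular (and, for the Katz case, in $\mathcal{G}^{K(a)}$, since a source and a sink do not raise the principal eigenvalue). Because $u'$ has no incoming edge, Lemma~\ref{lemma:pr-katz:source-node} gives $F_{u'}(G')=F_u(G)$; the substance of this move is the invariance $F_x(G')=F_x(G)$ for every $x\in V\setminus\{u\}$ (in particular for $v$ and for all of $v$'s surviving predecessors). \emph{Move 2:} in $G'$ I split $v$ into $v$ and a twin $v'$ carrying identical copies of $v$'s outgoing edges, assigning to $v'$ only the edge $(u',v')$ of weight $c(u,v)$ and leaving $v$ with the remaining incoming edges. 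Since $v$ and $v'$ have identical out-neighbourhoods, the proportional combining $C^F_{v'\to v}$ rebuilds $G'$ exactly, so Lemma~\ref{lemma:pr-katz:nc-in-sor} yields $F_v(G')=F_v(G'')+F_{v'}(G'')$, where $G''$ is the split graph. As $v$ now has one fewer incoming edge and still no self-loop, the inductive hypothesis evaluates $F_v(G'')$; using that Moves 1--2 leave the out-degrees, edge weights, and (for $s\neq u$) the centralities of $v$'s surviving predecessors untouched, the resulting profit terms are exactly those of $G$ for all incoming edges of $v$ other than $(u,v)$.

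It then remains to identify $F_{v'}(G'')$ with the missing term $p_F(F_u(G),c(u,v),\deg^+_u(G))$. The only predecessor of $v'$ is the source $u'$, so $v'\notin S(z)$ for every $z\neq u'$; Edge Deletion therefore lets me erase all outgoing edges of every node other than $u'$ without disturbing $F_{v'}$, and Locality discards everything outside the component $\{u'\}\cup S(u')$. On this residue $u'$ is a source of weight $F_u(G)$ and out-degree $\deg^+_u(G)$ whose edge to $v'$ has weight $c(u,v)$, all its other out-neighbours being sinks; merging those sinks into one node via Lemma~\ref{lemma:pr-katz:nc-in-sor} produces precisely the two-edge graph of Definition~\ref{def:profit-function}, giving $F_{v'}(G'')=p_F(F_u(G),c(u,v),\deg^+_u(G))$. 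Adding this to the inductive value of $F_v(G'')$ and invoking $F_v(G')=F_v(G)$ closes the induction.

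The main obstacle is the invariance of Move 1. The auxiliary sink $w$ is introduced precisely so that $u$ and $u'$ both have out-degree $\deg^+_u(G)$ and $G'$ stays semi-out-regular, allowing Node Combination to be applied; the delicate point is that $u$ and $u'$ carry \emph{different} outgoing edges, so no single proportional combining can literally reconstruct $G$. I would prove the invariance by combining $u'$ back into $u$ through Lemma~\ref{lemma:pr-katz:nc-in-sor} and then arguing, via Edge Deletion and Locality, that the residual dummy edge $u\to w$ makes $u$ a predecessor of nothing relevant to $v$, while the weight bookkeeping created by the merge is absorbed using the node-weight linearity of Lemma~\ref{lemma:pr-katz:node-weights}. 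Verifying that each auxiliary graph stays in the admissible class (so the inductive hypothesis remains applicable) is the routine but necessary side condition accompanying this step.
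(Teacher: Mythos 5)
Your overall architecture is the same as the paper's: induction on $|\Gamma^-_v(G)|$, Move 1 splitting $u$ into a weight-$F_u(G)$ source $u'$ plus a dummy sink, Move 2 splitting $v$ to isolate the edge from $u'$, and the final ED/LOC/NC reduction of $F_{v'}$ to the profit function; Move 2 and that final reduction are sound and match the paper. The gap is exactly where you locate it --- the invariance $F_x(G')=F_x(G)$ for $x\neq u$ --- and your proposed repair does not work. First, Edge Deletion is unavailable for the one node that matters: after you merge $u'$ back into $u$, the resulting graph $G''$ contains the residual dummy edge out of $u$, and ED only protects nodes outside $S(u)$; since $(u,v)\in E$ we have $v\in S(u)$, so ED says nothing about $F_v$ when that edge is present or absent. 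Second, $G''$ is not ``$G$ plus removable junk'': proportional combining rescales $u$'s original outgoing edges by the factor $F_u(G)/\bigl(F_u(G)+F_u(G')\bigr)$, which involves the unknown quantity $F_u(G')$ --- the very quantity the invariance is supposed to control --- and it also raises $u$'s node weight to $b(u)+F_u(G)$. Neither EM/EC, nor LOC, nor the additivity of Lemma~\ref{lemma:pr-katz:node-weights} cancels a genuinely new edge together with an unknown rescaling and an unknown weight shift.

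The paper closes this gap with an idea your sketch lacks: it constructs the same graph a second time starting from $G$ itself --- add the disjoint two-node graph on $\{u',v'\}$ with $b(u')=F_u(G')$ and $c(u',v')=\deg^+_u(G)$, then combine $u'$ into $u$. The two constructions yield graphs identical in nodes, edges and edge weights, differing at most in the weight of $u$ ($b(u)+F_u(G)$ versus $b(u)+F_u(G')$), while NC forces $F_u$ to equal $F_u(G)+F_u(G')$ in both. If the weights differed, the graph whose node weights are the nonnegative difference would have $F_u>0$ by Lemma~\ref{lemma:pr-katz:positive-weight}, contradicting additivity (Lemma~\ref{lemma:pr-katz:node-weights}a) together with the equality of the two centralities; hence $F_u(G')=F_u(G)$, the two graphs coincide, and the invariance follows for every other node by NC and LOC. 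Separately, your induction silently assumes the proportional-combining ratios are well defined: when $F_u(G)=0$ the scalings degenerate to $0/0$, and the paper treats this case by a distinct preliminary argument (merging $u$ into a fresh positive-weight source so that $u$'s outgoing edges vanish, then invoking the inductive hypothesis). Both pieces are needed for a complete proof.
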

\begin{proof}
We will prove the thesis by the induction on the number of incoming edges of node $v$.
If node $v$ does not have any incoming edges, then the thesis follows from Lemma~\ref{lemma:pr-katz:source-node}.
Therefore, we will focus on the case in which it has at least one edge from another node.

Let us denote one of the incoming edges of $v$ as $(u,v)$, where $u \neq v$.
In what follows, through the series of graph operation we will show that the centrality of node $v$ can be split between the centrality of sink in a graph from Definition~\ref{def:profit-function}, i.e., $p_F(F_u(G),c(u,v),\deg^+_u(G))$, and the rest that is known due to inductive assumption.

If $F_u(G)=0$, then both from Lemma~\ref{lemma:pr-katz:two-arrow-graphs:pr} and Lemma~\ref{lemma:pr-katz:two-arrow-graphs:katz} (depending on the satisfied axiom) the profit of node $v$ from edge $(u,v)$ is equal to zero, i.e., $p_F(F_u(G),c(u,v),\deg^+_u(G))=0$.
Thus, we will show that the centrality of node $v$ is equal to the sum of its profits from other edges plus its weight.
To this end, let us add small two-node graph to graph $G$, i.e., consider $G' = G + (\{u',v'\},\{(u',v'\},\textbf{1},c')$, where $\textbf{1}(u')=\textbf{1}(v')=1$ and $c'(u',v')=\deg^+_u(G)$.
From LOC we have that $F_u(G') = F_u(G) = 0$ and from Lemma~\ref{lemma:pr-katz:source-node} we have $F_{u'}(G')=1$.
Thus, when we combine node $u$ into $u'$, we get that the original outgoing edges of node $u$, including edge $(u,v)$, are removed.
Formally, let us denote the obtained graph as $G''=C^F_{u \rightarrow u'}(G')$.
Then, from Lemma~\ref{lemma:pr-katz:nc-in-sor}b and LOC we have
$$F_w(G'')=F_w(G')=F_w(G) \quad \mbox{for every } w \in V \setminus \{u\}.$$
Since in graph $G''$ node $v$ has one incoming edge less, the thesis follows from the inductive assumption.

\begin{figure}[t]
\centering
\begin{tikzpicture}
  \def\sx{0.7cm} 
  \def\sy{0.6cm} 
  \def\x{0cm} 
  \def\y{0cm} 
  \def\arrdist{0.35cm}

  \tikzset{
    node_blank/.style={circle,draw,minimum size=0.5cm,inner sep=0, color=white}, 
    node/.style={circle,draw,minimum size=0.5cm,inner sep=0, fill = black!05, font=\footnotesize}, 
    node_emph/.style={circle, minimum size=0.65cm, inner sep=0, fill = black!15, font=\footnotesize}, 
    edge/.style={-latex,above,font=\footnotesize}, 
    el/.style={below,font=\footnotesize}, 
    operation/.style={sloped,>=stealth,above,font=\footnotesize},
    arrow/.style={draw, single arrow, minimum width = 0.9cm, minimum height=\x-6*\s+\s, fill=black!10},
    blank/.style={}
  } 
    
  \node[node] (a) at (\x+1*\sx, 4*\sy + \y) {$v$};
  \node[node] (b) at (\x+3*\sx, 4*\sy + \y) {$u'$};
  \node[node] (c) at (\x+4*\sx, 2*\sy + \y) {$v'$};
  \node[node] (d) at (\x+3*\sx, 0*\sy + \y) {$u$};
  \node[node] (e) at (\x+1*\sx, 0*\sy + \y) {};
  \node[node] (f) at (\x+0*\sx, 2*\sy + \y) {};
  \node[blank] (blank) at (\x+2*\sx+0.2cm, -0.75cm + \y) {$G^*$};
  
  \path[->,draw,thick]
  (a) edge[edge, bend left=-20, looseness = 0.9] (f)
  (f) edge[edge, bend left=-20, looseness = 0.9] (e)
  (e) edge[edge, bend left=-20, looseness = 0.9] (d)
  (d) edge[edge, bend left=-20, looseness = 0.9] node[above] {$b$}  (e)
  (e) edge[edge] (a)
  (d) edge[edge] node[right] {$a$} (a)
  (b) edge[edge] node[right, pos=0.3] {$a+b$}  (c)
  ;

  \def\x{4.5cm} 
  
  \node[node] (a) at (\x+1*\sx, 4*\sy + \y) {$v$};
  \node[node] (c) at (\x+4*\sx, 2*\sy + \y) {$v'$};
  \node[node] (d) at (\x+3*\sx, 0*\sy + \y) {$u$};
  \node[node] (e) at (\x+1*\sx, 0*\sy + \y) {};
  \node[node] (f) at (\x+0*\sx, 2*\sy + \y) {};
  \node[blank] (blank) at (\x+2*\sx+0.2cm, -0.75cm + \y) {$G'' = G^\star$};
  
  \path[->,draw,thick]
  (a) edge[edge, bend left=-20, looseness = 0.9] (f)
  (f) edge[edge, bend left=-20, looseness = 0.9] (e)
  (e) edge[edge, bend left=-20, looseness = 0.9] (d)
  (d) edge[edge, bend left=-20, looseness = 0.9] node[above] {$\frac{b}{2}$}  (e)
  (e) edge[edge] (a)
  (d) edge[edge] node[right] {$\frac{a}{2}$} (a)
  (d) edge[edge, bend left=-20, looseness = 0.9] node[right, pos=0.3] {$\frac{a+b}{2}$}  (c)
  ;

   \def\x{9cm} 

  \node[node] (a) at (\x+1*\sx, 4*\sy + \y) {$v$};
  \node[node] (b) at (\x+3*\sx, 4*\sy + \y) {$u'$};
  \node[node] (c) at (\x+4*\sx, 2*\sy + \y) {$v'$};
  \node[node] (d) at (\x+3*\sx, 0*\sy + \y) {$u$};
  \node[node] (e) at (\x+1*\sx, 0*\sy + \y) {};
  \node[node] (f) at (\x+0*\sx, 2*\sy + \y) {};
  \node[blank] (blank) at (\x+2*\sx+0.2cm, -0.75cm + \y) {$G'$};
  
  \path[->,draw,thick]
  (a) edge[edge, bend left=-20, looseness = 0.9] (f)
  (f) edge[edge, bend left=-20, looseness = 0.9] (e)
  (e) edge[edge, bend left=-20, looseness = 0.9] (d)
  (b) edge[edge] node[left, pos=0.4] {$b$}  (e)
  (e) edge[edge] (a)
  (b) edge[edge, bend left=-20, looseness = 0.9] node[above] {$a$} (a)
  (d) edge[edge, bend left=-20, looseness = 0.9] node[right, pos=0.3] {$a+b$}  (c)
  ;

\end{tikzpicture}
\caption{An illustration to the first part of the proof of Lemma~\ref{lemma:pr-katz:recursive} for an example graph, $G$.
The weights of the outgoing edges of $u$ and $u'$ are shown.}
\label{fig:lemma:recursive1}
\end{figure}
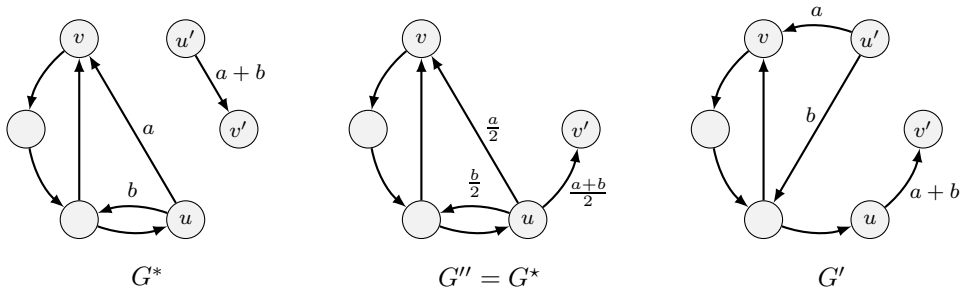

In the remainder of the proof, let us assume $F_u(G)>0$.
Consider graph $G'$ in which we split node $u$ into two nodes: $u'$ with all of its original outgoing edges, but no incoming edges, and $u$ with all of its original incoming edges, but only one new outgoing edge $(u,v')$ (see Figure~\ref{fig:lemma:recursive1}).
Formally, let $G'=(V',E',b',c')$ be a graph in which
$V'=V \cup \{u',v'\}$, $E'=E \setminus \Gamma^+_u(G) \cup \{(u',w) : (u,w) \in \Gamma^+_u(G)\} \cup \{(u,v')\}$, $b'(v')=0$, $b'(u')=F_u(G)$, and $b'_V=b_V$ while $c'(u,v')=\deg^+_G(u)$, $c'(u',w)=c(u,w)$ for every $(u,w) \in \Gamma^+_u(G)$, and $c'(e)=c(e)$ for every $e \in E \setminus \Gamma^+_u(G)$.

Now, let us combine node $u'$ into node $u$.
The graph that we obtain is not exactly our original graph $G$.
More in detail, it is graph $G''=(V \cup \{v'\}, E \cup \{(u,v')\},b'',c'')$, where $b''(v')=0$, $b''(u)=b(u)+F_u(G)$ and $b''(w)=b(w)$ for every $w \in V \setminus \{u\}$ (see Figure~\ref{fig:lemma:recursive1} for an illustration).
From Lemma~\ref{lemma:pr-katz:source-node} we have that $F_{u'}(G')=F_u(G)$, hence we know that also $c''(u,v')=\deg^+_u(G) \cdot F_u(G')/(F_u(G)+F_u(G'))$, $c''(u,w)=c(u,w) \cdot F_u(G)/(F_u(G)+F_u(G'))$ for every $(u,w) \in \Gamma^+_u(G)$ and $c''(e)=c(e)$ for every $e \in E \setminus \Gamma^+_u(G)$
(if $F_{u}(G')=0$, the edge $(u,v')$ that is supposed to have zero weight is removed, but all of the future claims still hold).
Next, we will show that graph $G''$ can be obtained also in another way.

To this end, consider adding to our original graph $G$ a simple graph consisting of nodes $u'$ and $v'$ and edge between them, and then combining node $u'$ into node $u$.
Formally, let us denote the graph with nodes added by $G^* = (V \cup \{u',v'\},E \cup \{(u',v')\},b^*,c^*)$, where $b^*_V = b_V$, $b^*(u)=F_u(G')$, and $b^*(v')=0$ while $c^*_E=c_E$ and $c^*(u,v')=\deg^+_G(u)$.
See Figure~\ref{fig:lemma:recursive1} for an illustration.
Consequently, let us combine node $u'$ into $u$ in graph $G^*$ and denote the resulting graph by $C^F_{u' \rightarrow u}(G^*)=G^\star=(V \cup \{v'\}, E \cup \{(u,v')\},b^\star, c^\star)$.
We have that $b^\star(v')=0$, $b^\star(u)=b(u)+F_u(G')$, and $b^\star(w)=b(w)$ for every $w \in V \setminus \{u\}$.
As for edge weights, from Lemma~\ref{lemma:pr-katz:source-node} we have that $F_{u'}(G^*)=F_u(G')$ and from LOC we have $F_u(G^*)=F_u(G)$.
Thus, $c^\star(u,v')=\deg^+_u(G) \cdot F_u(G')/(F_u(G)+F_u(G'))$,
$c^\star(u,w)=c(u,w) \cdot F_u(G)/(F_u(G)+F_u(G'))$ for every $(u,w) \in \Gamma^+_u(G)$, and
$c^\star(e)=c(e)$ for every $e \in E \setminus \Gamma^+_u(G)$
(if $F_{u}(G')=0$, the edge $(u,v')$ that is supposed to have zero weight is removed, but all claims still hold).

Therefore, all nodes, edges, node weights and edge weights of graphs $G^\star$ and $G''$ are identical,
except for the weight of node $u$ that in graph $G^\star$ equals $b^\star(u)=b(u)+F_u(G')$ and in graph $G''$ it is $b''(u)=b(u)+F_u(G)$.
However, the centrality of node $u$ in both graphs is the same: By Lemma~\ref{lemma:pr-katz:nc-in-sor}b, from combining $u'$ into $u$ in $G^*$ we have that $F_u(G^\star)=F_u(G')+F_u(G)$ and from combining $u'$ into $u$ in $G'$ we have that $F_u(G'')=F_u(G)+F_u(G')$.
Let us prove that this implies that also $b^\star(u)=b''(u)$.
Assume otherwise.
Without loss of generality, let us assume that $b^\star(u)>b''(u)$.
Then, consider graph $G^{\star}-G''=(V \cup \{v'\}, E \cup \{(u,v')\},b^\star-b'', c^\star)$ that is just graph $G^\star$ with node weights $b^\star-b''$ that are the difference between both node weights, i.e., $(b^\star-b'')(u)=b^\star(u)-b''(u)>0$ and $(b^\star-b'')(w)= 0$ for every $w \in V \setminus \{u\}$.
From Lemma~\ref{lemma:pr-katz:positive-weight} we have that $F_u(G^{\star}-G'')>0$.
However, from Lemma~\ref{lemma:pr-katz:node-weights}a, we have that $F_u(G'')+F_u(G^{\star}-G'')=F_u(G^\star)$.
Since $F_u(G'')=F_u(G^\star)$ we arrive at a contradiction.
Thus, $b^\star(u)=b''(u)$ which implies that $G^\star = G''$.
Therefore, from Lemma~\ref{lemma:pr-katz:nc-in-sor}b we have that
\begin{equation}
    \label{eq:lemma:pr-katz:recursive:1}
    F_w(G) = F_w(G^\star)=F_w(G'')=F_w(G') \quad \mbox{for every } w \in V.
\end{equation}

Now, let us focus on node $v$ in graph $G'$.
First, let us remove incoming edge $(u',v)$ from $v$ and add it to a new node, $v^\dagger$, with identical set of outgoing edges (see Figure~\ref{fig:lemma:recursive2}).
Formally, let us consider graph $G^\dagger=(V^\dagger,E^\dagger,b^\dagger,c^\dagger)$,
where $V^\dagger=V' \cup v^\dagger$,
$E^\dagger \! =E' \setminus \{(u',v)\}  \cup \{(u',v^\dagger)\} \cup \{(v^\dagger,w) \! : \! (v,w) \! \in \Gamma^+_v(G')\}$,
node weights remain unchanged, i.e, $b^\dagger_{V'}=b'_{V'}$, and node $v^\dagger$ has node weight $b^\dagger(v^\dagger)=0$.
Edge weights are also unchanged, i.e., $c^\dagger(u',v^\dagger)=c'(u',v)$, $c^\dagger(v^\dagger,w)=c'(v,w)$ for every $(v,w) \in \Gamma^+_v(G)$, and $c^\dagger(e)=c'(e)$ for every $e \in E' \setminus \{(u',v)\}$.
Clearly, if we combine node $v^\dagger$ into node $v$ in graph $G^\dagger$ we obtain graph $G'$ (we will prove that $v^\dagger$ has positive centrality in graph $G^\dagger$).
Thus, from Lemma~\ref{lemma:pr-katz:nc-in-sor}b we have $F_v(G')=F_v(G^\dagger) + F_{v^\dagger}(G^\dagger)$.
Node $v$ has less incoming edges in graph $G^\dagger$ than it had in graph $G$, hence from inductive assumption and Equation~\eqref{eq:lemma:pr-katz:recursive:1} we have
\[
    F_v(G) = F_{v^\dagger}(G^\dagger) + b^\dagger(v) +
    \sum_{(w,v) \in E : w \neq u} p_F(F_w(G),c^\dagger(w,v),\deg^+_w(G^\dagger)).
\]
Since $c^\dagger(w,v)=c(w,v)$ and $\deg^+_w(G^\dagger)=\deg^+_w(G)$ for every $(w,v) \in \Gamma^-_v(G)$ such that $w \neq u$ and $b^\dagger(v)=b(v)$, it remains to prove $F_{v^\dagger}(G^\dagger)=p_F(F_u(G),c(u,v),\deg^+_u(G))$.

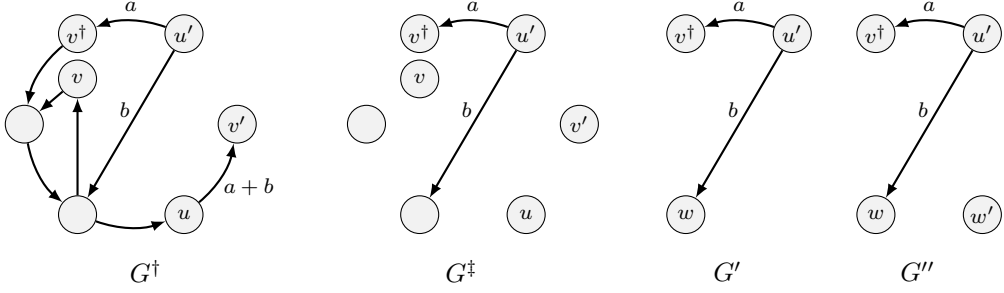
\begin{figure}[t]
\centering
\begin{tikzpicture}
  \def\sx{0.7cm} 
  \def\sy{0.6cm} 
  \def\x{0cm} 
  \def\y{0cm} 
  \def\arrdist{0.35cm}

  \tikzset{
    node_blank/.style={circle,draw,minimum size=0.5cm,inner sep=0, color=white}, 
    node/.style={circle,draw,minimum size=0.5cm,inner sep=0, fill = black!05, font=\footnotesize}, 
    node_emph/.style={circle, minimum size=0.65cm, inner sep=0, fill = black!15, font=\footnotesize}, 
    edge/.style={-latex,above,font=\footnotesize}, 
    el/.style={below,font=\footnotesize}, 
    operation/.style={sloped,>=stealth,above,font=\footnotesize},
    arrow/.style={draw, single arrow, minimum width = 0.9cm, minimum height=\x-6*\s+\s, fill=black!10},
    blank/.style={}
  }

  \def\x{0cm} 

  \node[node] (a) at (\x+1*\sx, 4*\sy + \y) {$v^\dagger$};
  \node[node] (a_) at (\x+1*\sx, 3*\sy + \y) {$v$};
  \node[node] (b) at (\x+3*\sx, 4*\sy + \y) {$u'$};
  \node[node] (c) at (\x+4*\sx, 2*\sy + \y) {$v'$};
  \node[node] (d) at (\x+3*\sx, 0*\sy + \y) {$u$};
  \node[node] (e) at (\x+1*\sx, 0*\sy + \y) {};
  \node[node] (f) at (\x+0*\sx, 2*\sy + \y) {};
  \node[blank] (blank) at (\x+2*\sx+0.2cm, -0.75cm + \y) {$G^\dagger$};
  
  \path[->,draw,thick]
  (a) edge[edge, bend left=-20, looseness = 0.9] (f)
  (a_) edge[edge] (f)
  (f) edge[edge, bend left=-20, looseness = 0.9] (e)
  (e) edge[edge, bend left=-20, looseness = 0.9] (d)
  (b) edge[edge] node[left, pos=0.4] {$b$}  (e)
  (e) edge[edge] (a_)
  (b) edge[edge, bend left=-20, looseness = 0.9] node[above] {$a$} (a)
  (d) edge[edge, bend left=-20, looseness = 0.9] node[right, pos=0.3] {$a+b$}  (c)
  ;

  \def\x{4.5cm} 

  \node[node] (a) at (\x+1*\sx, 4*\sy + \y) {$v^\dagger$};
  \node[node] (a_) at (\x+1*\sx, 3*\sy + \y) {$v$};
  \node[node] (b) at (\x+3*\sx, 4*\sy + \y) {$u'$};
  \node[node] (c) at (\x+4*\sx, 2*\sy + \y) {$v'$};
  \node[node] (d) at (\x+3*\sx, 0*\sy + \y) {$u$};
  \node[node] (e) at (\x+1*\sx, 0*\sy + \y) {};
  \node[node] (f) at (\x+0*\sx, 2*\sy + \y) {};
  \node[blank] (blank) at (\x+1.5*\sx+0.2cm, -0.75cm + \y) {$G^\ddagger$};
  
  \path[->,draw,thick]
  (b) edge[edge] node[left, pos=0.4] {$b$}  (e)
  (b) edge[edge, bend left=-20, looseness = 0.9] node[above] {$a$} (a)
  ;
  
  \def\x{8cm} 

  \node[node] (a) at (\x+1*\sx, 4*\sy + \y) {$v^\dagger$};
  \node[node] (b) at (\x+3*\sx, 4*\sy + \y) {$u'$};
  \node[node] (e) at (\x+1*\sx, 0*\sy + \y) {$w$};
  \node[blank] (blank) at (\x+1.5*\sx+0.2cm, -0.75cm + \y) {$G'$};
  
  \path[->,draw,thick]
  (b) edge[edge] node[left, pos=0.4] {$b$}  (e)
  (b) edge[edge, bend left=-20, looseness = 0.9] node[above] {$a$} (a)
  ;
  
  \def\x{10.5cm} 

  \node[node] (a) at (\x+1*\sx, 4*\sy + \y) {$v^\dagger$};
  \node[node] (b) at (\x+3*\sx, 4*\sy + \y) {$u'$};
  \node[node] (d) at (\x+3*\sx, 0*\sy + \y) {$w'$};
  \node[node] (e) at (\x+1*\sx, 0*\sy + \y) {$w$};
  \node[blank] (blank) at (\x+1.5*\sx+0.2cm, -0.75cm + \y) {$G''$};
  
  \path[->,draw,thick]
  (b) edge[edge] node[left, pos=0.4] {$b$}  (e)
  (b) edge[edge, bend left=-20, looseness = 0.9] node[above] {$a$} (a)
  ;

\end{tikzpicture}
\caption{An illustration to the second part of the proof of Lemma~\ref{lemma:pr-katz:recursive} for graph $G$ from Fig.~\ref{fig:lemma:recursive1}.}
\label{fig:lemma:recursive2}
\end{figure}

To this end, observe that the only predecessor of node $v^\dagger$ in graph $G^\dagger$ is node $u'$.
Let us denote set of edges $E^\ddagger=\Gamma^+_{u'}(G^\dagger)$, i.e., the outgoing edges of $u'$ in $G^\dagger$ and graph
$G^\ddagger=(V^\dagger,E^\ddagger, b^\dagger,c^\dagger_{E^\ddagger})$, which is graph $G^\dagger$ with all edges removed except for the outgoing edges of $u'$.
From ED, the centrality of node $v^\dagger$ is unchanged, i.e., $F_{v^\dagger}(G^\dagger)=F_{v^\dagger}(G^\ddagger)$.

If $u'$ has one outgoing edge, i.e., only edge $(u',v^\dagger)$, then by LOC and Definition~\ref{def:profit-function} we get that
$F_{v^\dagger}(G^\dagger)=p_F(F_u(G),c(u,v),deg^+_u(G))$ and the thesis follows from induction.
Hence, let us assume otherwise.
Then, observe that in graph $G^\ddagger$ every node except for $u'$ does not have outgoing edges, i.e., $\deg^+_{s}(G^\ddagger)=0$ for every $s \in V^\dagger \setminus \{u'\}$.
Therefore, let us add new isolated node to the graph, i.e., $w \not \in V$ of weight 1, and sequentially combine all nodes in $V^\dagger \setminus \{u',v^\dagger\}$ into $w$.
Formally, denote the resulting graph by $G'=(\{u',v^\dagger,w\},\{(u',v^\dagger),(u',w)\},b', c')$ where $b'(u')=F_u(G)$, $b'(v^\dagger)=0$ and $b'(w)=1 + \sum_{w' \in V^\dagger}b^\dagger(w')$ while $c'(u',v^\dagger)=c(u,v)$ and $c'(u',w)=\deg^+_u(G) - c(u,v)$.
Lemma~\ref{lemma:pr-katz:nc-in-sor}b yields $F_{v^\dagger}(G^\ddagger)=F_{v^\dagger}(G')$.

Finally, the only difference between graph $G'$ and a graph from Definition~\ref{def:profit-function} is the weight of node $w$.
Thus, let us split it into two nodes: $w$ and $w'$ such that $w$ has incoming edge $(u',w)$ and zero weight, whereas $w'$ has the weight of original node $w$, but no incoming edges.
Formally, let $G''=(\{u',v^\dagger,w,w'\},\{(u',v^\dagger),(u',w)\},b'',c')$, where $b''(u')=F_u(G)$, $b''(v^\dagger)=b''(w)=0$ and $b''(w')=b'(w)$.
Clearly, if we combine $w'$ into $w$ in this graph, then we obtain graph $G'$.
Hence, from Lemma~\ref{lemma:pr-katz:nc-in-sor}b we have $F_{v^\dagger}(G')=F_{v^\dagger}(G'')$.
On the other hand, graph $G''$ is the type of graph described in Definition~\ref{def:profit-function}, thus $F_{v^\dagger}(G'')=p_F(F_u(G),c(u,v),\deg^+_u(G)).$
This concludes the proof.
\end{proof}

\begin{lemma}
\label{lemma:pr-katz:loops}
If a centrality measure $F$ defined on $\mathcal{G}^{K(a)}$ (or $\mathcal{G}$) satisfies LOC, ED, NC, BL, and EC (or EM), then for every semi-out-regular graph $G=(V,E,b,c)$ and node $v \in V$ such that $G \in \mathcal{G}^{K(a)}$ (or $G \in \mathcal{G}$) we have
$$F_v(G) = b(v) + p_F(F_v(G),c(v,v),\deg^+_v(G)).$$
\end{lemma}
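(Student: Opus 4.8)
The plan is to open up the self-loop into an external source, so that the self-referential identity collapses to a single application of Lemma~\ref{lemma:pr-katz:recursive}. Throughout, $v$ is a node whose only incoming edge is the self-loop $(v,v)$, which is the situation the displayed identity describes; write $r=\deg^+_v(G)$ and $\gamma=c(v,v)$. First I would construct a graph $G'$ by splitting $v$ into a \emph{source copy} $v'$ and a \emph{target} $v$: the node $v'$ is given weight $F_v(G)$, has no incoming edges, and carries a copy of every outgoing edge of $v$, the self-loop $(v,v)$ being redirected to the edge $(v',v)$ of weight $\gamma$ (so $\deg^+_{v'}(G')=r$); the node $v$ keeps weight $b(v)$, loses its self-loop, receives the single incoming edge $(v',v)$, and is handed one fresh outgoing edge to a new sink so that $\deg^+_v(G')=r$, keeping $G'$ semi-out-regular and in the relevant class (removing the self-loop cannot raise the principal eigenvalue, so membership in $\mathcal{G}^{K(a)}$ is preserved). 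The goal then splits into two claims: (a) $F_v(G')=F_v(G)$, and (b) $F_v(G')=b(v)+p_F(F_v(G),\gamma,r)$.

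Claim (b) is the routine half. Since $v'$ is a source, Lemma~\ref{lemma:pr-katz:source-node} gives $F_{v'}(G')=F_v(G)$. In $G'$ the node $v$ has no self-loop and exactly one incoming edge, namely $(v',v)$, so Lemma~\ref{lemma:pr-katz:recursive} applies and yields $F_v(G')=b(v)+p_F(F_{v'}(G'),c(v',v),\deg^+_{v'}(G'))$. As $c(v',v)=\gamma$ and $\deg^+_{v'}(G')=r$ by construction, substituting $F_{v'}(G')=F_v(G)$ gives exactly (b); the closed forms $a_F\cdot F_v(G)\cdot\gamma/r$ and $a_F\cdot F_v(G)\cdot\gamma$ then follow from Lemmas~\ref{lemma:pr-katz:two-arrow-graphs:pr} and~\ref{lemma:pr-katz:two-arrow-graphs:katz}.

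The hard part will be Claim (a), namely $F_v(G')=F_v(G)$; this is the self-loop analogue of the ``splitting preserves centralities'' step inside Lemma~\ref{lemma:pr-katz:recursive}, and I expect to reproduce that argument here rather than find a shortcut. The obstruction is that proportional combining of $v'$ back into $v$ rescales the edge $(v',v)$ by $F_v(G)/(F_v(G)+F_v(G'))$, so it does not literally return a self-loop of weight $\gamma$, and the combined graph cannot be matched to $G$ by inspection. Following Lemma~\ref{lemma:pr-katz:recursive}, I would instead produce the same graph in two ways — once by combining $v'$ into $v$ in $G'$, and once by attaching a two-node source gadget to the original $G$ and combining — observe that the two outcomes agree in every node, edge and edge weight except possibly the weight of $v$, and then force those weights to coincide using positivity (Lemma~\ref{lemma:pr-katz:positive-weight}) together with additivity in the node weights (Lemma~\ref{lemma:pr-katz:node-weights}). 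Applying Lemma~\ref{lemma:pr-katz:nc-in-sor} to the common graph transfers centralities back and yields $F_w(G)=F_w(G')$ for every original node $w$, in particular $F_v(G')=F_v(G)$. Combining (a) and (b) then gives $F_v(G)=b(v)+p_F(F_v(G),\gamma,r)$, as required.
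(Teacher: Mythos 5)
Your proposal is correct in its core idea and takes a genuinely different --- and in fact leaner --- route than the paper. The paper's proof has two stages: it first carries out a separate fixed-point analysis of the two-node loop graph $(\{v,w\},\{(v,v),(v,w)\})$ to obtain $F_v = b(v)\cdot r_{F,y,z}$, and only then treats a general graph by forming $\hat G$ (loop replaced by an edge to a fresh sink, with the profit $p_F(F_v(G),y,z)$ \emph{pre-added} to $b(v)$) and proving $F_u(G)=F_u(\hat G)$ via a two-parameter family of gadget attachments $G^r,\hat G^s, G^{\uparrow r},\hat G^{\circ s}$ and a case analysis exploiting linearity in $r$ and $s$; the loop gadget $\hat G^{\circ s}$ is exactly what necessitates the first stage, since its node weight must be calibrated by the constant $r_{F,y,z}$. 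Your construction removes the need for that stage entirely: your source $v'$ carries the unrolled loop edge $(v',v)$ but no loop itself, so its centrality is pinned at its weight $F_v(G)$ by Lemma~\ref{lemma:pr-katz:source-node}, Lemma~\ref{lemma:pr-katz:recursive} then gives your claim (b) at once, and proportional recombination of $v'$ into $v$ recreates the self-loop for free. Claim (a) does go through by the two-constructions argument you import from Lemma~\ref{lemma:pr-katz:recursive}: combining $v'$ into $v$ in $G'$, and combining a source gadget of weight $F_v(G')$ into $v$ in $G$, produce graphs with identical edges and edge weights; both assign $v$ the centrality $F_v(G)+F_v(G')$, so the two node weights $b(v)+F_v(G)$ and $b(v)+F_v(G')$ are forced to agree by Lemmas~\ref{lemma:pr-katz:positive-weight} and~\ref{lemma:pr-katz:node-weights}, which is exactly $F_v(G')=F_v(G)$.

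Three caveats. First, a construction detail your wording leaves ambiguous but on which the proof lives or dies: \emph{all} outgoing edges of $v$ must be moved to $v'$, and $v$ must receive a single fresh sink edge of weight $r$ (exactly as in Lemma~\ref{lemma:pr-katz:recursive}). If instead $v$ retains its non-loop outgoing edges while $v'$ carries copies of them, the matching step fails: after combining $v'$ into $v$, each retained edge regains its full weight $c(v,w)$ (the two scaled copies add up), whereas on the gadget side every outgoing edge of $v$ is uniformly rescaled by $F_v(G)/(F_v(G)+F_v(G'))$, so the two graphs cannot coincide. Second, scope: you prove the identity only when the loop is $v$'s sole incoming edge. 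That is what the displayed formula literally asserts, but the paper's own proof establishes the full recursion $F_v(G)=b(v)+\sum_{(u,v)\in E}p_F(F_u(G),c(u,v),\deg^+_u(G))$ for a node with a loop \emph{and} other incoming edges, which is what Lemmas~\ref{lemma:pr:semi-out-regular} and~\ref{lemma:katz:semi-out-regular} later rely on. Fortunately your argument extends with no new idea: the other incoming edges of $v$ are untouched by both combinings, so claim (a) is unaffected, and claim (b) via Lemma~\ref{lemma:pr-katz:recursive} simply picks up the extra profit terms (using $F_p(G')=F_p(G)$ for $p\neq v$, which the matching argument also yields). Third, dispose of the degenerate case $F_v(G)=0$ separately (then $b(v)=0$ by Lemma~\ref{lemma:pr-katz:positive-weight}, claim (b) gives $F_v(G')=0$, and the thesis reads $0=0$), since proportional combining is ill-defined when both combined centralities vanish; the paper's proofs make the analogous explicit carve-outs.
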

\begin{proof}
First, we focus on a graph with only two nodes and two edges: one connecting the nodes and a loop around the start (see Figure~\ref{fig:loop1}).
Formally, let $G=(\{v, w\},\{(v,v), (v,w)\},b,c)$ in which $b(w)=0$.
If $b(v)=0$ as well, then $F_v(G)=0$ from Lemma~\ref{lemma:pr-katz:node-weights}b.
Observe that both Katz centrality and PageRank profit function for $x=0$ is equal to 0,
hence the thesis follows from Lemma~\ref{lemma:pr-katz:two-arrow-graphs:pr} or~\ref{lemma:pr-katz:two-arrow-graphs:katz} (depending on the satisfied axiom).
Thus, assume $b(v)>0$ which by Lemma~\ref{lemma:pr-katz:positive-weight} means that also $F_v(G)>0$.
Let us denote the weight of edge $(v,v)$ by $y$ and the total out-degree of $v$ by $z$, i.e., $c(v,v)=y$ and $c(v,w)=z-y$.

\begin{figure}[t]
\centering
\begin{tikzpicture}
  \def\x{0.7cm} 
  \def\y{0cm} 

  \tikzset{
    node_blank/.style={circle,draw,minimum size=0.5cm,inner sep=0, color=white}, 
    node/.style={circle,draw,minimum size=0.45cm,inner sep=0, fill = black!05, font=\footnotesize}, 
    edge/.style={-latex,above,font=\footnotesize},
    arrow/.style={draw, single arrow, minimum width = 0.9cm, minimum height=\y-6*\x+\s, fill=black!10},
    blank/.style={}
  }

  \node[node, label={270:\footnotesize $b(v)$}] (v) at (\y + 0*\x, 0*\x) {$v$};
  \node[node, label={180:\footnotesize $0$}] (w) at (\y + 0*\x, 3*\x) {$w$};
  
  \node[blank] (G) at (\y + 0*\x-0.02cm, -1.3cm) {$G$};

  \path[->,draw,thick]
  (v) edge[edge] node[left] {$z-y$}  (w)
  (v) edge[edge, out=230, in=130, looseness = 5] node[left] {$y$} (v)
  ;
  
  \def\y{1.5cm} 
  
  \node[node, label={270:\footnotesize $x\cdot F_v(G)$}] (v_) at (\y + 0*\x, 0*\x) {$v'$};
  \node[node, label={270:\footnotesize $0$}] (u) at (\y + 2*\x, 0*\x) {$u$};
  \node[node, label={180:\footnotesize $0$}] (w_) at (\y + 0*\x, 03*\x) {$w'$};
  
  \node[blank] (G) at (\y + 1*\x-0.02cm, -1.3cm) {$G^x$};

  \path[->,draw,thick]
  (v_) edge[edge] node[above] {$y$}  (u)
  (v_) edge[edge] node[left] {$z-y$}  (w_)
  ;

  \def\y{5.7cm} 
  
  \node[node, label={270:\footnotesize $b(v) + x\cdot F_v(G)$}] (v) at (\y + 0*\x, 0*\x) {$v$};
  \node[node, label={270:\footnotesize $0$}] (u) at (\y + 2*\x, 0*\x) {$u$};
  \node[node, label={180:\footnotesize $0$}] (w) at (\y + 0*\x, 3*\x) {$w$};
  
  \node[blank] (G) at (\y + 0.5*\x-0.02cm, -1.3cm) {${G^x}'$};

  \path[->,draw,thick]
  (v) edge[edge] node[above] {$\frac{y\cdot x}{1+x}$}  (u)
  (v) edge[edge, out=230, in=130, looseness = 5] node[left] {$\frac{y}{1+x}$} (v)
  (v) edge[edge] node[left] {$z-y$}  (w)
  ;
  
  \def\y{9cm} 
 
  \node[node, label={270:\footnotesize $0$}] (v_) at (\y + 2*\x, 0*\x) {$v'$};
  \node[node, label={270:\footnotesize $F_v(G)$}] (v) at (\y + 0*\x, 0*\x) {$v$};
  \node[node, label={270:\footnotesize $0$}] (u) at (\y + 4*\x, 0*\x) {$u$};
  \node[node, label={180:\footnotesize $0$}] (w) at (\y + 0*\x, 3*\x) {$w$};
  
  \node[blank] (G) at (\y + 2*\x-0.02cm, -1.3cm) {$G'$};

  \path[->,draw,thick]
  (v) edge[edge] node[above] {$y$}  (v_)
  (v) edge[edge] node[left] {$z-y$}  (w)
  (v_) edge[edge] node[above] {$y$}  (u)
  (v_) edge[edge] node[right] {$z-y$}  (w)
  ;

\end{tikzpicture}
\caption{Graphs considered in the first part of the proof of Lemma~\ref{lemma:pr-katz:loops}.
The weight of each node and edge is shown.}
\label{fig:loop1}
\end{figure}
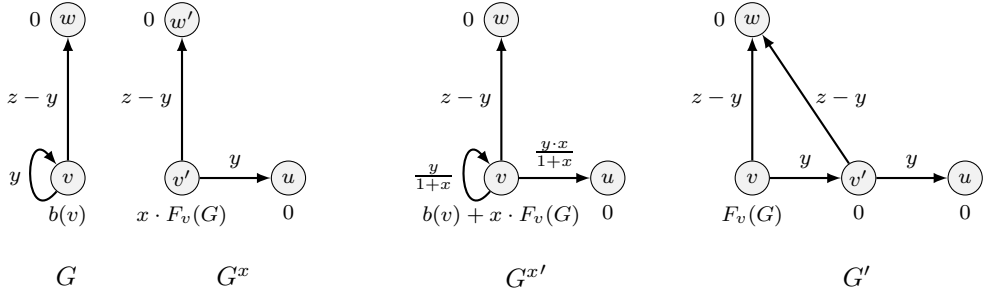

Next, for an arbitrary $x \in \mathbb{R}_{\ge 0}$ let us consider graph $G^x = (\{v',u,w'\},\{(v',u),(v',w)\},b^x,c^x)$, where $b^x(v')=x \cdot F_v(G)$, $b^x(u)=b^x(w')=0$ while $c^x(v',u)=y$ and $c^x(v',w')=z-y$.
See Figure~\ref{fig:loop1} for an illustration.
Let us add both graphs together to obtain $G + G'$.
From Lemma~\ref{lemma:pr-katz:source-node} we have $F_{v'}(G + G')= x \cdot F_v(G)$ and from LOC $F_v(G + G')=F_v(G)$.
Thus, when we combine $v'$ into $v$ and then $w'$ into $w$ we obtain graph 
${G^x}'=(\{v,u,w\},\{(v,v),(v,u), (v,w)\},b^{x*},{c^x}')$ in which
$b^{x*}(v)=b(v) + x \cdot F_v(G)$ and $b^{x*}(u)=b^{x*}(w)=0$ while
${c^x}'(v,v)= y/(1+x)$, ${c^x}'(v,u)=y \cdot x/(1+x)$, and ${c^x}'(v,w)=z-y$.
Moreover, graph $G+G^x$ is semi-out-regular, thus from Lemma~\ref{lemma:pr-katz:nc-in-sor}b
\begin{equation}
    \label{eq:lemma:pr-katz:loops:1}
    F_v({G^x}')=F_v({G^x})+F_v(G)=(1 + x) \cdot F_v(G).
\end{equation}
Now, we will show that graph ${G^x}'$ for specific $x$ can be obtained also in another way.

To this end, consider graph $G'=(\{v,v',u,w\},\{(v,v'),(v,w)(v',u),(v',w)\},b',c')$ in which $b'(v)=F_v(G)$, $b'(v')=b'(w)=b'(u)=0$, $c'(v,v')=c'(v',u)=y$, and $c'(v,w)=c'(v',w)=z-y$
(see Figure~\ref{fig:loop1}).
Now, observe that $v'$ is not a successor of itself in $G'$.
Hence, if we remove node $u$ and edges $(v',u),(v',w)$ from graph $G'$, then by ED and LOC the centrality of $v'$ does not change.
What remains is a graph from Definition~\ref{def:profit-function}, thus $F_{v'}(G')=p_F(F_{v}(G),y,z)$.
To get a particular value, let us consider two cases depending on an axiom that centrality $F$ satisfies: EM (I) or EC (II).

(I) If centrality $F$ satisfies EM, then from Lemma~\ref{lemma:pr-katz:two-arrow-graphs:pr} we obtain that $F_{v'}(G') = a_F \cdot y/z \cdot F_v(G).$
Let us denote the constant $a_F \cdot y/z$ as $\bar{x}$ (then simply $F_{v'}(G') = \bar{x} \cdot F_v(G)$).
From Lemma~\ref{lemma:pr-katz:source-node} we have that $F_v(G')=F_v(G)$.
Thus, when in $G'$ we combine node $v'$ into $v$ we obtain graph $G''=(\{v,u,w\},\{(v,v),(v,u),(v,w)\},b'',c'')$ in which
$b''(v) = F_v(G)$, $b''(u)=b''(w)=0$ while $c''(v,v)=y/(1+\bar{x})$, $c''(v,u)=y \cdot \bar{x}/(1+\bar{x})$, and $c''(v,w)=z-y$.
Moreover, from Lemma~\ref{lemma:pr-katz:nc-in-sor}b we get that
$$F_v(G'')=F_v(G') + F_w(G')=(1 +\bar{x})F_v(G).$$
Thus, from~\eqref{eq:lemma:pr-katz:loops:1} if we take $x=\bar{x}$, we get $F_v(G'')=F_v({G^{\bar{x}}}')$.
Observe that also the edge weights in both $G''$ and ${G^{\bar{x}}}'$ are the same.
Hence, we have two graphs with the same nodes, edges and edge weights in which $v$ has the same centrality.
Moreover, in both graphs only node $v$ has a positive node weight.
Thus, from Lemma~\ref{lemma:pr-katz:node-weights}b this means that $G'' = {G^{\bar{x}}}'$.
Comparing the weight of node $v$ in both graphs we get that 
$$F_v(G) = b''(v) = {b^{\bar{x}}}'(v) = b(v) + a_F \cdot F_v(G) \cdot y/z.$$
Which from Lemma~\ref{lemma:pr-katz:two-arrow-graphs:pr} concludes this part of the proof.
Observe that we also obtain that the centrality of $v$ is a linear function of its weight, i.e.,
$F_v(G)=b(v)/(1-a_F \cdot y/z)$.

(II) If $F$ satisfies EC instead of EM, then Lemma~\ref{lemma:pr-katz:two-arrow-graphs:katz} yields $F_{v'}(G') = a_F \cdot y \cdot F_v(G).$
Furthermore, if instead of $\bar{x}=a_F \cdot y/z$ we take $\bar{x}= a_F \cdot y$, then the proof follows analogously.
Furthermore, as before we obtain that the centrality of node $v$ can be seen as a linear function of its weight, i.e.,
$F_v(G)=b(v)/(1-a_F \cdot y)$.

As a result, based on both cases, (I) and (II), we can now conclude that for every graph $G=(\{v,w\},\{(v,v),(v,w)\},b,c)$ such that $b(w)=0$, $c(v,v)=y$, and $c(v,w)=z-y$ there exists a constant $r_{F,y,z}$ such that
\begin{equation}
    \label{eq:lemma:pr-katz:loops:2}
    F_v(G) = b(v) \cdot r_{F,y,z}.
\end{equation}
Observe, that if instead of a graph with two nodes that we have just considered, i.e., $G=(\{v, w\},\{(v,v), (v,w)\},b,c)$, we consider graph without edge $(v,w)$, i.e., $G=(\{v,w\},\{(v,v)\},b,c)$ then, the proof is analogous, hence Equation~\eqref{eq:lemma:pr-katz:loops:2} holds even if $y=z$.


In the remainder of the proof, let us consider arbitrary semi-out-regular graph $G=(V,E,b,c)$ and its arbitrary node $v \in V$.
If $v$ does not have a loop, then the thesis follows from Lemma~\ref{lemma:pr-katz:recursive}.
Hence, let assume otherwise, i.e., $(v,v) \in E$.
Let us denote $y=c(v,v)$ and $z = \deg^+_v(G)$.
Now, let us consider graph $\hat{G}=(\hat{V},\hat{E},\hat{b},\hat{c})$ that is modification of $G$ in which:
node $v$ does not have a loop, but it has a new outgoing edge to a new node, $w \not \in V$, and the weight of node $v$ is increased by $p_F(F_v(G),y,z)$.
See Figure~\ref{fig:loop2} for an illustration.
Formally, $\hat{V} = V \cup \{w\}$, $\hat{E} = E \setminus \{(v,v)\} \cup \{v,w\}$, $\hat{b}_{-v}=b_{-v}$, $\hat{b}(v)= b(v) + p_F(F_v(G),y,z)$, and $\hat{b}(w)=0$, while $\hat{c}_{-(v,w)} = c_{-(v,v)}$ and $\hat{c}(v,w)=c(v,v)$.
In this way, $\deg^+_v(\hat{G})=\deg^+_v(G)$ and $\hat{G}$ is still semi-out-regular.
Observe that from Lemma~\ref{lemma:pr-katz:recursive} we get
\(
    F_v(\hat{G})= \hat{b}(v) + \sum_{u \in P^1_v(G) \setminus \{v\}} p_F(F_u(\hat{G}),\hat{c}(u,v),\deg^+_u(\hat{G})).
\)
Observe that $\hat{c}(u,v)=c(u,v)$ and $\deg^+_u(\hat{G})=\deg^+_u(G)$ for every $u \in P^1_v(G) \setminus \{v\}$.
Also, $\hat{b}(v)=b(v) + p_F(F_v(G),y,z)$.
Hence, we get that
\begin{equation}
    \label{eq:lemma:pr-katz:loops:2.5}
    F_v(\hat{G})= b(v) +  p_F(F_v(G),y,z) + \sum_{u \in P^1_v(G) \setminus \{v\}} p_F(F_u(\hat{G}),c(u,v),\deg^+_u(G)).
\end{equation}
In the remainder of the proof, through a series of graph operations, we will show that $F_u(G)=F_u(\hat{G})$ for every $u \in V$.
Combined with the above equation, this will yield the thesis.

\begin{figure}[t]
\centering
\begin{tikzpicture}
  \def\x{0.7cm} 
  \def\y{0cm} 

  \tikzset{
    node_blank/.style={circle,draw,minimum size=0.5cm,inner sep=0, color=white}, 
    node/.style={circle,draw,minimum size=0.45cm,inner sep=0, fill = black!05, font=\footnotesize}, 
    edge/.style={-latex,above,font=\footnotesize},
    arrow/.style={draw, single arrow, minimum width = 0.9cm, minimum height=\y-6*\x+\s, fill=black!10},
    blank/.style={}
  }

  \node[node] (z) at (\y + 0*\x, 4*\x) {};
  \node[node] (v) at (\y + 0*\x, 2*\x) {$v$};
  \node[node] (u) at (\y + 0*\x, 0*\x) {};
  \node[node] (v_) at (\y + 2*\x, 2*\x) {$v'$};
  \node[node] (w) at (\y + 2*\x, 4*\x) {$w$};
  
  \node[blank] (G) at (\y + 1*\x-0.02cm, -1.3cm) {$G^r+G^{\uparrow r}$};

  \path[->,draw,thick]
  (v) edge[edge] node[left] {$z-y$}  (z)
  (v) edge[edge, out=230, in=130, looseness = 5] node[left] {$y$} (v)
  (u) edge[edge] (v)
  (v_) edge[edge] node[right] {$z$} (w) 
  ;
  
  \def\y{4.5cm} 
  
  \node[node] (z) at (\y + 0*\x, 4*\x) {};
  \node[node] (v) at (\y + 0*\x, 2*\x) {$v$};
  \node[node] (u) at (\y + 0*\x, 0*\x) {};
  \node[node] (w) at (\y + 2*\x, 4*\x) {$w$};
  
  \node[blank] (G) at (\y + 1*\x-0.02cm, -1.3cm) {$\tilde{G}$};

  \path[->,draw,thick]
  (v) edge[edge] node[left] {$\frac{z-y}{2}$}  (z)
  (v) edge[edge, out=230, in=130, looseness = 5] node[left] {$\frac{y}{2}$} (v)
  (u) edge[edge] (v)
  (v) edge[edge] node[below] {$\frac{z}{2}$} (w) 
  ;

  \def\y{9cm} 
 
  \node[node] (z) at (\y + 0*\x, 4*\x) {};
  \node[node] (v) at (\y + 0*\x, 2*\x) {$v$};
  \node[node] (u) at (\y + 0*\x, 0*\x) {};
  \node[node] (v_) at (\y + 2*\x, 2*\x) {$v'$};
  \node[node] (w) at (\y + 2*\x, 4*\x) {$w$};
  \node[node] (w_) at (\y + 4*\x, 4*\x) {$w'$};
  
  \node[blank] (G) at (\y + 1.5*\x-0.02cm, -1.3cm) {$\hat{G}^s+\hat{G}^{\circ s}$};

  \path[->,draw,thick]
  (v) edge[edge] node[left] {$z-y$}  (z)
  (v) edge[edge] node[left] {$y$} (w)
  (u) edge[edge] (v)
  (v_) edge[edge, out=230, in=130, looseness = 5] node[left] {$y$} (v_) 
  (v_) edge[edge] node[right] {$z-y$} (w_) 
  ;

\end{tikzpicture}
\caption{An illustration to the second part of the proof of Lemma~\ref{lemma:pr-katz:loops} for an example graph, $G$.
The weights of outgoing edges of nodes $v$ and $v'$ are shown.}
\label{fig:loop2}
\end{figure}
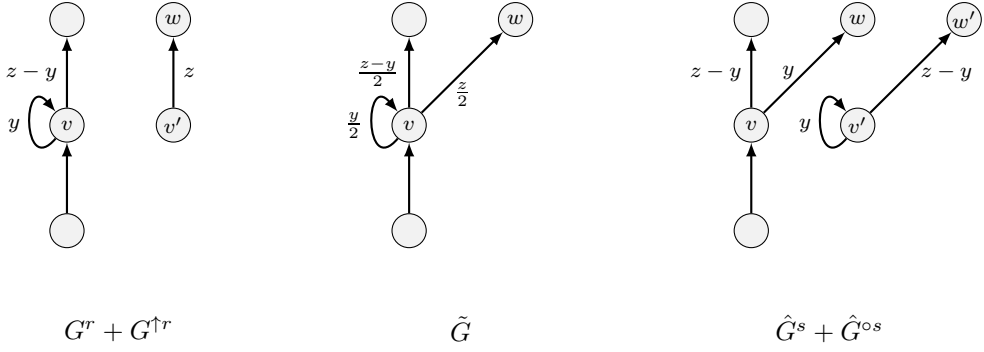

To this end, we parametrize both graphs $G$ and $\hat{G}$ by adding arbitrary weights $r,s \in \mathbb{R}_{\ge 0}$ to node $v$ in both of them respectively.
Formally, let us denote $G^r = (V,E,b^r,c)$ and $\hat{G}^s = (\hat{V},\hat{E},\hat{b}^s,\hat{c})$, where $b^r_{-v}=b_{-v}$, $b^r(v)=b(v)+r$, $\hat{b}^s_{-v}=\hat{b}_{-v}$, and $\hat{b}^s(v)=\hat{b}(v)+s$.
To both of them we will add a small two-node graph with which we will proportionally combine their nodes in order to obtain the same graph $\tilde{G}$ in both cases.

We start with $G_r$.
Take $v' \not \in \hat{V}$ and consider graph 
\(
    G^{\uparrow r} = (
        \{v',w\},\{(v',w)\},
        b^{\uparrow r}, c^{\uparrow r})
\)
where $b^{\uparrow r}(v') = F_v(G^r)$, $b^{\uparrow r}(w) = 0$, and $c^{\uparrow r}(v',w)=z$.
In the sum, $G^r + G^{\uparrow r}$, let us proportionally combine node $v'$ into $v$, i.e., let
$\tilde{G}^{r} = C^F_{v' \rightarrow v}(G^r + G^{\uparrow r})$.
Let us denote $\tilde{G}^{r} = (\tilde{V}, \tilde{E}, \tilde{b}^r, \tilde{c}^r)$.
Observe that from Locality we obtain $F_v(G^r + G^{\uparrow r})=F_v(G^r)$ and from Lemma~\ref{lemma:pr-katz:source-node} we have that also $F_{v'}(G^r + G^{\uparrow r}) = F_v(G^r)$.
This has two implications:
First, since $G^r + G^{\uparrow r}$ is semi-out-regular, from Lemma~\ref{lemma:pr-katz:nc-in-sor}b we get that
\begin{equation}
    \label{eq:lemma:pr-katz:loops:3}
    F_u(\tilde{G}^r)= 
    \begin{cases}
        2 \cdot F_v(G^r) & \mbox{if } u=v,\\
        F_u(G^r) & \mbox{if } u \in V \setminus \{v\}.
    \end{cases}
\end{equation}
Second, this means that the weights of outgoing edges of $v$ and $v'$ are divided by two in graph $\tilde{G}^r$, i.e, $\tilde{c}^{r}_{\Gamma^+_v(G)} = c_{\Gamma^+_v(G)}/2$ and $\tilde{c}^{r}(v,w) = z /2$, and the weights of other edges are unchanged $\tilde{c}^{r}_{-\Gamma^+_v(G)}=c_{-\Gamma^+_v(G)}$.
See Fig.~\ref{fig:loop2} for an illustration.

Now, let us move to graph $\hat{G}^s$.
Here, take node $w' \not \in V' \cup \{v'\}$ and consider adding graph
\(
    \hat{G}^{\circ s} = (
        \{v',w'\},\{(v',v'),(v',w')\},
        b^{\circ s},c^\circ)
    ),
\)
where
\(
    b^{\circ s}(v') = F_v(\hat{G}^s)/c_{F,y,z},
\)
$b^{\circ s}(w')=0$, $c^\circ(v',v')=y$, and $c^\circ(v',w')=z-y$.
Observe that from Equation~\eqref{eq:lemma:pr-katz:loops:2} this gives us that
\begin{equation}
    \label{eq:lemma:pr-katz:loops:4}
    F_{v'}(\hat{G}^{\circ s})=F_v(\hat{G}^s).
\end{equation}
In the sum of both graphs let us combine $w'$ into $w$ and also $v'$ into $v$.
In result, we obtain graph in which nodes and edges are the same as in $\tilde{G}$.
Thus, let us denote
$\tilde{G}^{*s} = (\tilde{V}, \tilde{E}, \tilde{b}^{*s},\tilde{c}^{*s}) = C^F_{v' \rightarrow v}(C^F_{w' \rightarrow w}(\hat{G}^s + \hat{G}^{\circ s}))$.
From Locality, we have that $F_v(\hat{G}^s + \hat{G}^{\circ s})=F_v(\hat{G}^s)$ and from Locality and Equation~\eqref{eq:lemma:pr-katz:loops:4} we get that also
$F_{v'}(\hat{G}^s + \hat{G}^{\circ s})=F_v(\hat{G}^s)$.
Again, this has two implications:
First, from Lemma~\ref{lemma:pr-katz:nc-in-sor}b we get that
\begin{equation}
    \label{eq:lemma:pr-katz:loops:5}
    F_u(\tilde{G}^{*s})= 
    \begin{cases}
        2 \cdot F_v(\hat{G}^s) & \mbox{if } u=v,\\
        F_u(\hat{G}^s) & \mbox{if } u \in V \setminus \{v\}.
    \end{cases}
\end{equation}
Second, as before, the weights of outgoing edges of $v$ and $v'$ are divided by two in graph $\tilde{G}^{*s}$, i.e, $\tilde{c}^{*s}_{\Gamma^+_v(G)} = c_{\Gamma^+_v(G)}/2$ and $\tilde{c}^{*s}(v,w) = z /2$, and the weights of other edges are unchanged $\tilde{c}^{*s}_{-\Gamma^+_v(G)}=c_{-\Gamma^+_v(G)}$.
Hence, $\tilde{c}^{*s}=\tilde{c}^r$ for every $r,s \in \mathbb{R}_{\ge 0}$.

Therefore, the only possible difference between graphs $\tilde{G}^r$ and $\tilde{G}^{*s}$ for any $r,s \in \mathbb{R}_{\ge 0}$ may lay in node weights.
However, observe that for every $r,s \in \mathbb{R}_{\ge 0}$ we have $\tilde{b}^{r}(u)=b(u)=\tilde{b}^{*s}(u)$ for every $u \in V \setminus \{v\}$ and also $\tilde{b}^{r}(w)=0=\tilde{b}^{*s}(w)$.
Thus, the only difference can lay in the weight of node $v$.
In what follows, we will show that in fact, for $s=0$ and $r=0$ it holds that $\tilde{b}^{r}(v)=\tilde{b}^{*s}(v)$.
To this end, let us assume otherwise, i.e., that either $\tilde{b}^{0}(v) > \tilde{b}^{*0}(v)$ (I), or $\tilde{b}^{0}(v) < \tilde{b}^{*0}(v)$ (II).

(I) Assume that $\tilde{b}^{0}(v) > \tilde{b}^{*0}(v)$.
The weight $\tilde{b}^{*s}(v)$ is the sum of weights of $v$ in  $\hat{G}^s$ and $v'$ in $\hat{G}^{\circ s}$, thus we get
\(
    \tilde{b}^{*s}(v) = 
    \hat{b}(v) + s + b^{\circ s}(v') =
    \hat{b}(v) + s + F_v(\hat{G}^s)/c_{F,y,z}.
\)
Also, from Lemma~\ref{lemma:pr-katz:node-weights} (a and b) we know that
\(
    F_v(\hat{G}^s)=F_v(\hat{G}) + s \cdot F_v(\hat{G}_{\1}).
\)
where $\hat{G}_{\1}$ is graph $\hat{G}$ with changed node weights so that $v$ has weight 1 and all other nodes weight 0, i.e.,
$\hat{G}_{\1} = (\hat{V},\hat{E},\1_v,\hat{c})$, where $\1_v(v)=1$ and $\1_v(u)=0$ for every $u \in \hat{V} \setminus \{v\}$.
Both facts imply that $\tilde{b}^{*s}(v)$ is a linear function of $s$.
Hence, there exists $s > 0$ such that $\tilde{b}^{*s}(v) = \tilde{b}^{0}(v)$.
Thus, for such $s$ it holds that
\(
    \tilde{G}^0=\tilde{G}^{*s}.
\)
This implies that
\(
    F_v(\tilde{G}^0)=F_v(\tilde{G}^{*s})
\)
and by Equation~\eqref{eq:lemma:pr-katz:loops:3} and Equation~\eqref{eq:lemma:pr-katz:loops:5} also
\(
    F_v(G)=F_v(\hat{G}^s).
\)
Thus, looking again at node weights, we get that
\begin{equation}
    \label{eq:lemma:pr-katz:loops:6}
    b(v) + F_v(G) = \tilde{b}^{0}(v) = \tilde{b}^{*s}(v) = s + b(v) + p_F(F_v(G),y,z) + b^{\circ s}(v').
\end{equation}
From the first part of the proof, we know that
\(
    F_{v'}(\hat{G}^{\circ s}) = p_F(F_{v'}(G^{\circ s}),y,z) + b^{\circ s}(v').
\)
From Equation~\eqref{eq:lemma:pr-katz:loops:4} we have that 
\(
    F_{v'}(\hat{G}^{\circ s}) = F_{v}(\hat{G}^s) = F_v(G).
\)
Therefore, we obtain
\[
    F_{v}(G) = p_F(F_{v}(G),y,z) + b^{\circ s}(v').
\]
Subtracting this from Equation~\eqref{eq:lemma:pr-katz:loops:6} yields $b(v) = b(v) +s$, which means that $s=0$ and this contradicts the assumption of this case.

(II) Now, assume that $\tilde{b}^{0}(v) < \tilde{b}^{*0}(v)$.
Observe that
\(
    \tilde{b}^{r}(v) = 
    b(v) + r + F_v(G^r).
\)
Also, from Lemma~\ref{lemma:pr-katz:node-weights} (a and b), we get that
\(
    F_v(G^r)=F_v(G) + r \cdot F_v(G_{\1}).
\)
where $G_{\1}$ is graph $G$ with changed node weights so that $v$ has weight 1 and all other nodes weight 0, i.e.,
$G_{\1} = (V,E,\1_v,c)$, where $\1_v(v)=1$ a $\1_v(u)=0$ for every $u \in V \setminus \{v\}$.
Combining both facts, we get that $\tilde{b}^{r}(v)$ is a linear function of $r$.
Hence, there exists $r > 0$ such that $\tilde{b}^{*0} = b^{r*}(v)$.
Thus, for such $r$ it holds that
\(
    \tilde{G}^r=\tilde{G}^{*0}.
\)
This implies that
\(
    F_v(\tilde{G}^r)=F_v(\tilde{G}^{*0})
\)
and by Equation~\eqref{eq:lemma:pr-katz:loops:3} and Equation~\eqref{eq:lemma:pr-katz:loops:5} also that
\(
    F_v(G^r)=F_v(\hat{G}^0).
\)
Thus, looking again at node weights, we get
\begin{equation}
    \label{eq:lemma:pr-katz:loops:7}
    b(v) + r + F_v(G^r) = b^{r*}(v) = \tilde{b}^{*0}(v) = b(v) + p_F(F_v(G),y,z) + b^{\circ 0}(v').
\end{equation}
From the first part of the proof, we know that
\(
    F_{v'}(G^{\circ 0}) = p_F(F_{v'}(G^{\circ 0}),y,z) + b^{\circ 0}(v').
\)
From Equation~\eqref{eq:lemma:pr-katz:loops:4} we have that 
\(
  p_F(F_{v'}(G^{\circ 0}),y,z) = F_{v}(\hat{G}^0) = F_v(G^r).
\)
Therefore, we obtain
\(
    F_{v}(G^r) = p_F(F_{v}(G^r),y,z) + b^{\circ 0}(v').
\)
Subtracting this from Equation~\eqref{eq:lemma:pr-katz:loops:7} yields
\begin{equation}
    \label{eq:lemma:pr-katz:loops:8}
    r = p_F(F_v(G),y,z) - p_F(F_v(G^r),y,z).
\end{equation}
Recall that from Lemma~\ref{lemma:pr-katz:node-weights} (a and b) we have 
\(
    F_v(G^r) = F_v(G) + r \cdot F_v(G_\1).
\)
Since $r > 0$, from Lemma~\ref{lemma:pr-katz:positive-weight} this means that
\(
    F_v(G^r) > F_v(G).
\)
Hence, also
$p_F(F_v(G^r),y,z) > p_F(F_v(G),y,z)$
from Lemma~\ref{lemma:pr-katz:two-arrow-graphs:pr} or Lemma~\ref{lemma:pr-katz:two-arrow-graphs:katz} (depending on satisfying axiom).
Thus, 
\[
  p_F(F_v(G),y,z) - p_F(F_v(G^r),y,z) < 0,
\]
which contradicts Equation~\eqref{eq:lemma:pr-katz:loops:8}

Therefore, it indeed holds that $\tilde{b}^{0}(v) = \tilde{b}^{*0}(v)$.
As a result, it also holds that
\(
    \tilde{G}^0=\tilde{G}^{*0}.
\)
This implies that
\(
    F_u(\tilde{G}^0)=F_u(\tilde{G}^{*0})
\)
for every $u \in V$.
Thus, by Equation~\eqref{eq:lemma:pr-katz:loops:3} and Equation~\eqref{eq:lemma:pr-katz:loops:5} we obtain that $F_u(G)=F_u(\hat{G})$ for every $u \in V$, which yields the thesis by Equation~\eqref{eq:lemma:pr-katz:loops:2.5}
\end{proof}

\begin{lemma}
\label{lemma:pr:semi-out-regular}
If a centrality measure $F$ defined on $\mathcal{G}$ satisfies LOC, ED, NC, EM, and BL then for every semi-out-regular graph $G=(V,E,b,c)$ we have
$$F_v(G) = PR^{a_F}_v(G) \quad \mbox{for every } v \in V.$$
\end{lemma}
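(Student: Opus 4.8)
The plan is to show that $F$ satisfies, on $G$, exactly the system of equations that defines $PR^{a_F}$, and then invoke uniqueness of that system's solution. The hypotheses of this lemma are precisely those required by Lemmas~\ref{lemma:pr-katz:recursive} and~\ref{lemma:pr-katz:loops}: $F$ satisfies LOC, ED, NC, BL and EM, and $G \in \mathcal{G}$ is semi-out-regular. Applying Lemma~\ref{lemma:pr-katz:recursive} to every node with $(v,v) \notin E$ and Lemma~\ref{lemma:pr-katz:loops} to every node carrying a self-loop, I obtain for each $v \in V$ the recursion
\[
    F_v(G) = b(v) + \sum_{(u,v) \in E} p_F(F_u(G), c(u,v), \deg^+_u(G)).
\]
Since $F$ satisfies EM, Lemma~\ref{lemma:pr-katz:two-arrow-graphs:pr} gives $p_F(x,y,z) = a_F \cdot x \cdot y/z$, so this becomes
\[
    F_v(G) = b(v) + \sum_{(u,v) \in E} a_F \cdot F_u(G) \cdot \frac{c(u,v)}{\deg^+_u(G)},
\]
which is verbatim the PageRank recursive equation~\eqref{eq:rec:pr} with decay factor $a_F$.

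Before invoking uniqueness I must check that $a_F$ is an admissible decay factor, i.e.\ $a_F \in [0,1)$, since both $PR^{a_F}$ and the uniqueness of the solution to~\eqref{eq:rec:pr} are only guaranteed for $a<1$. To this end I would specialize the recursion above to the single-node self-loop graph $G_0 = (\{v\},\{(v,v)\},b,c)$ with $b(v)=1$ and $c(v,v)=1$, which is $1$-out-regular and hence semi-out-regular. There it reads $F_v(G_0) = 1 + a_F \cdot F_v(G_0)$, that is $(1-a_F)\,F_v(G_0) = 1$. As a centrality value $F_v(G_0)$ is a finite non-negative real, so $a_F=1$ would force the impossible $0=1$ and $a_F>1$ would force $F_v(G_0)<0$; therefore $a_F<1$.

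With $a_F \in [0,1)$ established, $PR^{a_F}$ is the unique solution of~\eqref{eq:rec:pr} on every graph (as recorded in the preliminaries and already used in the proof of Lemma~\ref{lemma:axioms:pr}). Since the vector $(F_v(G))_{v \in V}$ solves the same system, uniqueness yields $F_v(G) = PR^{a_F}_v(G)$ for all $v \in V$. I expect the admissibility step $a_F<1$ to be the only genuinely delicate point: without it the operator governing~\eqref{eq:rec:pr} can have spectral radius $1$ on an out-regular strongly connected block, so $I$ minus the weighted transition matrix need not be invertible and the system would fail to pin down a unique vector (indeed $PR^{a_F}$ would itself be undefined). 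Everything else is assembly, since the recursion is supplied directly by Lemmas~\ref{lemma:pr-katz:recursive}--\ref{lemma:pr-katz:two-arrow-graphs:pr}.
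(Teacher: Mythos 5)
Your proof is correct and follows essentially the same route as the paper: derive the recursion $F_v(G) = b(v) + \sum_{(u,v)\in E} p_F\bigl(F_u(G), c(u,v), \deg^+_u(G)\bigr)$ from the preceding lemmas, substitute $p_F(x,y,z) = a_F \cdot x \cdot y/z$ via Lemma~\ref{lemma:pr-katz:two-arrow-graphs:pr}, and conclude by uniqueness of the solution to the PageRank system (Equation~\eqref{eq:rec:pr}). You are in fact somewhat more careful than the paper's own two-line proof, which cites only Lemma~\ref{lemma:pr-katz:recursive} (formally restricted to nodes with $(v,v)\notin E$) and never checks that $a_F<1$; your invocation of Lemma~\ref{lemma:pr-katz:loops} for self-loop nodes and your single-node-loop argument showing $a_F\in[0,1)$ (needed for $PR^{a_F}$ to be well defined and for the system to have a unique solution) close both of these small gaps.
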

\begin{proof}
From Lemmas~\ref{lemma:pr-katz:two-arrow-graphs:pr} and~\ref{lemma:pr-katz:recursive} we get that for every semi-out-regular graph $G=(V,E,b,c)\in \mathcal{G}$ and node $v \in V$ we have
$$F_v(G) = b(v) + \sum_{(u,v) \in E} a_F \cdot \frac{c(u,v)}{\deg^+_u(G)} \cdot F_u(G).$$
Hence, centrality $F$ satisfies PageRank recursive equation (Equation~\eqref{eq:rec:pr}) with decay parameter $a_F$.
The system of PageRank recursive equations has a unique solution, therefore $F_v(G)=PR^{a_F}_v(G)$ for every semi-out-regular graph $G$ and node $v \in V$.
\end{proof}

\begin{lemma}
\label{lemma:pr:final}
If a centrality measure $F$ defined on $\mathcal{G}$ satisfies LOC, ED, NC, EM, and BL then for every graph $G=(V,E,b,c)$ we have
$$F_v(G) = PR^{a_F}_v(G) \quad \mbox{for every } v \in V.$$
\end{lemma}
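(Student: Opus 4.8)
The plan is to reduce the general case to the semi-out-regular case already established in Lemma~\ref{lemma:pr:semi-out-regular}, using Edge Multiplication as the bridge. This mirrors exactly the argument for Katz prestige in Lemma~\ref{lemma:kp:all-graphs}, the only adjustment being that here graphs may contain sinks, so we aim for a \emph{semi}-out-regular graph rather than an out-regular one.

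Concretely, given an arbitrary graph $G=(V,E,b,c)$, I would first normalize the out-degrees of all non-sink nodes to a common value. Define $G'=(V,E,b,c')$ where, for every node $u$ with $\deg^+_u(G)>0$, we set $c'(u,v)=c(u,v)/\deg^+_u(G)$ for each outgoing edge $(u,v)\in\Gamma^+_u(G)$, leaving the edges of sinks untouched (they have none). Then every non-sink node of $G'$ has out-degree exactly $1$, while every sink still has out-degree $0$, so $G'$ is semi-out-regular with $r=1$. Note that $G'$ is obtained from $G$ by applying Edge Multiplication once for each non-sink node $u$, scaling its outgoing edges by the positive constant $1/\deg^+_u(G)$.

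Since $F$ satisfies EM and each such application leaves all centralities unchanged, a routine induction over the non-sink nodes gives $F_v(G)=F_v(G')$ for every $v\in V$. By Lemma~\ref{lemma:pr:semi-out-regular}, applied to the semi-out-regular graph $G'$, we then have $F_v(G')=PR^{a_F}_v(G')$ for every $v\in V$. Finally, PageRank itself satisfies EM (Lemma~\ref{lemma:axioms:pr}), so the very same sequence of edge multiplications that transforms $G$ into $G'$ also preserves PageRank values, yielding $PR^{a_F}_v(G')=PR^{a_F}_v(G)$. Chaining the three equalities,
\[
F_v(G)=F_v(G')=PR^{a_F}_v(G')=PR^{a_F}_v(G)\quad\text{for every }v\in V,
\]
which is the thesis.

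There is essentially no hard step in this lemma; the genuine content has already been discharged in Lemmas~\ref{lemma:pr-katz:recursive}, \ref{lemma:pr-katz:loops} and~\ref{lemma:pr:semi-out-regular}. The only points to verify carefully are that scaling each node's outgoing edges by the reciprocal of its out-degree indeed produces a semi-out-regular graph (which it does, by construction, with sinks remaining sinks), and that EM may legitimately be invoked node-by-node so that the intermediate graphs all lie in $\mathcal{G}$ and all share the same $F$- and $PR^{a_F}$-values. Both are immediate, so the main (and only) obstacle is bookkeeping rather than any substantive difficulty.
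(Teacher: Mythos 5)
Your proposal is correct and follows essentially the same route as the paper's proof: rescale each non-sink node's outgoing edges by the reciprocal of its out-degree via EM to obtain a semi-out-regular graph, invoke Lemma~\ref{lemma:pr:semi-out-regular}, and transfer back using the fact that PageRank also satisfies EM (Lemma~\ref{lemma:axioms:pr}). The only difference is that you spell out the node-by-node application of EM and the sink bookkeeping explicitly, which the paper leaves implicit.
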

\begin{proof}
Take arbitrary $G=(V,E,b,c)$ and divide the weight of each edge by the out-degree of its start, i.e.,
let $G'=(V,E,b,c')$, where $c'(u,v) = c(u,v)/\deg^+_u(G)$ for every $(u,v) \in E$.
From EM we have that $F_v(G')=F_v(G)$ for every $v \in V$.
Observe that $G'$ is semi-out-regular, thus from Lemma~\ref{lemma:pr:semi-out-regular} we know that $F_v(G')=PR^{a_F}_v(G')$ for every $v \in V$.
Since PageRank also satisfies EM (Lemma~\ref{lemma:axioms:pr}) we get that $F_v(G)=PR^{a_F}_v(G)$.
\end{proof}

\begin{lemma}
\label{lemma:katz:semi-out-regular}
If a centrality measure $F$ defined on $\mathcal{G}^{K(a)}$ satisfies LOC, ED, NC, EC, and BL then for every semi-out-regular graph $G=(V,E,b,c) \in \mathcal{G}^{K(a)}$  we have
$$F_v(G) = K^{a_F}_v(G) \quad \mbox{for every } v \in V.$$
\end{lemma}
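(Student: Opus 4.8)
The plan is to follow the proof of Lemma~\ref{lemma:pr:semi-out-regular} almost verbatim, replacing the PageRank profit with the Katz profit. First I would fix an arbitrary semi-out-regular graph $G=(V,E,b,c)\in\mathcal{G}^{K(a)}$ and assemble the recursive identity that $F$ must obey on it. For every node $v$ with $(v,v)\notin E$, Lemma~\ref{lemma:pr-katz:recursive} gives $F_v(G)=b(v)+\sum_{(u,v)\in E}p_F(F_u(G),c(u,v),\deg^+_u(G))$; for every node $v$ carrying a self-loop, the argument of Lemma~\ref{lemma:pr-katz:loops} supplies the same identity with the loop contribution included. Substituting the evaluation $p_F(x,y,z)=a_F\cdot x\cdot y$ from Lemma~\ref{lemma:pr-katz:two-arrow-graphs:katz}, both cases collapse to
$$F_v(G)=b(v)+a_F\sum_{(u,v)\in E}c(u,v)\,F_u(G),$$
which is exactly the Katz recursive equation (Equation~\eqref{eq:rec:katz}) with decay factor $a_F$.

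It then remains to invoke uniqueness. Writing $A$ for the adjacency matrix of $G$, the system above reads $x=b+a_F A x$, i.e.\ $(I-a_FA)x=b$; once $I-a_FA$ is invertible this linear system has a single solution, and the walk-based value $K^{a_F}_v(G)$ of Equation~\eqref{eq:walk:katz} is one. This is precisely the step where the Katz case departs from PageRank: for PageRank the bound $a\in[0,1)$ makes the defining matrix sub-stochastic, so invertibility is automatic, whereas here $I-a_FA$ is invertible only when $a_F\lambda<1$, and $a_F$ is an \emph{a priori} free constant produced by Lemma~\ref{lemma:pr-katz:one-arrow-graphs}. The main obstacle is therefore to exclude the possibility that $a_F$ is too large for the domain $\mathcal{G}^{K(a)}$.

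I would clear this obstacle with a single-node self-loop, which is itself semi-out-regular. On a two-node graph $(\{v,w\},\{(v,v),(v,w)\},b,c)$ with $c(v,v)=y$, the loop case of Lemma~\ref{lemma:pr-katz:loops} already yields $F_v=b(v)/(1-a_F\,y)$; since such a graph has principal eigenvalue $y$ and hence lies in $\mathcal{G}^{K(a)}$ for every $y<1/a$, and since centralities are non-negative, positivity forces $a_F\,y<1$ for all those $y$, so $a_F\le a$. Because $G\in\mathcal{G}^{K(a)}$ means $\lambda<1/a$, this gives $a_F\le a<1/\lambda$, so $a_F\lambda<1$. As $A$ is non-negative, its spectral radius equals $\lambda$, whence $I-a_FA$ is invertible and the Neumann series $\sum_{t\ge 0}(a_FA)^t b$ converges to its unique solution, namely $K^{a_F}(G)$. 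Since $F(G)$ solves the same system, I conclude $F_v(G)=K^{a_F}_v(G)$ for every $v\in V$, which is the claim.
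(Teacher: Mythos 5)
Your proof is correct and follows the same skeleton as the paper's: use Lemma~\ref{lemma:pr-katz:recursive} together with the evaluation $p_F(x,y,z)=a_F\cdot x\cdot y$ from Lemma~\ref{lemma:pr-katz:two-arrow-graphs:katz} to show that $F$ satisfies the Katz recursive equation (Equation~\eqref{eq:rec:katz}) with parameter $a_F$ on every semi-out-regular graph, then conclude by uniqueness of the solution of that system. The differences are in the level of care, and both work in your favor. First, you explicitly route nodes carrying a self-loop through Lemma~\ref{lemma:pr-katz:loops}; this is genuinely needed, since Lemma~\ref{lemma:pr-katz:recursive} is stated only for $(v,v)\notin E$, while the paper's one-line proof cites only the recursive lemma. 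Second, and more substantively, the paper's closing assertion that ``the system of Katz recursive equations has a unique solution'' is not automatic: it presupposes $a_F\lambda<1$, which is also exactly what makes $K^{a_F}$ well defined on $\mathcal{G}^{K(a)}$ via Equation~\eqref{eq:walk:katz}, and $a_F$ is a priori just the constant delivered by Lemma~\ref{lemma:pr-katz:one-arrow-graphs}. Your loop-graph argument closes this: on the two-node graph with loop weight $y$, the identity of Lemma~\ref{lemma:pr-katz:loops} combined with the Katz profit gives $F_v(G)\left(1-a_F y\right)=b(v)$, so non-negativity of $F$ with $b(v)>0$ forces $a_F y<1$ for every admissible $y<1/a$ (the same formula $F_v(G)=b(v)/(1-a_F y)$ appears as a side remark inside the paper's proof of that lemma), hence $a_F\le a$, $a_F\lambda\le a\lambda<1$, invertibility of $I-a_FA$, and convergence of the Neumann series to $K^{a_F}(G)$. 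So your write-up is not a different route, but it fills in two steps the paper leaves implicit, one of which (the bound on $a_F$) is a genuine gap in the published argument.
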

\begin{proof}
From Lemma~\ref{lemma:pr-katz:two-arrow-graphs:katz} and Lemma~\ref{lemma:pr-katz:recursive} we obtain that for every semi-out-regular graph $G=(V,E,b,c)\in \mathcal{G}^{K(a)}$ and every $v \in V$ we have
$$F_v(G) = b(v) + \sum_{(u,v) \in E} a_F \cdot c(u,v) \cdot F_u(G).$$
This means that centrality $F$ satisfies Katz recursive equation (Equation~\eqref{eq:rec:katz}) with decay parameter $a_F$.
Since the system of Katz recursive equations has a unique solution, we obtain that $F_v(G)=K^{a_F}_v(G)$ for every semi-out-regular graph $G$ and node $v \in V$.
\end{proof}

\begin{lemma}
\label{lemma:katz:final}
If a centrality measure $F$ defined on $\mathcal{G}^{K(a)}$ satisfies LOC, ED, NC, EC, and BL then for every graph $G=(V,E,b,c) \in \mathcal{G}^{K(a)}$  we have
$$F_v(G) = K^{a_F}_v(G) \quad \mbox{for every } v \in V.$$
\end{lemma}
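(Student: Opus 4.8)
The plan is to remove the semi-out-regularity assumption from Lemma~\ref{lemma:katz:semi-out-regular}, in direct parallel with Lemma~\ref{lemma:pr:final} for PageRank and Lemma~\ref{lemma:ev:all-graphs} for Eigenvector centrality. Since Katz centrality does not satisfy EM, the vehicle must be EC, which both $F$ and Katz centrality satisfy (Lemma~\ref{lemma:axioms:katz}). By LOC I would first reduce to a single weakly connected graph $G$. The core idea is to rescale $G$ into a semi-out-regular graph by applying EC simultaneously at every node $u$ with a factor $x_u$: this sends a non-loop edge $(u,w)$ to $c(u,w)\,x_u/x_w$, leaves every self-loop fixed (consistently, since the formula gives $c(u,u)\,x_u/x_u=c(u,u)$), and divides both $b(u)$ and $F_u$ by $x_u$; crucially $K^{a_F}$ transforms in exactly the same way. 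If $G$ is strongly connected, the out-adjacency matrix $M=(c(u,w))_{u,w}$ is irreducible and nonnegative, so the Perron--Frobenius theorem supplies a strictly positive eigenvector $y$ with $My=\lambda y$; taking $x_u=1/y_u$ makes every out-degree equal to $\lambda$, producing a semi-out-regular graph $G'$. By Lemma~\ref{lemma:katz:semi-out-regular}, $F_v(G')=K^{a_F}_v(G')$, and transporting back through EC gives $F_v(G')=F_v(G)/x_v$ and $K^{a_F}_v(G')=K^{a_F}_v(G)/x_v$, whence $F_v(G)=K^{a_F}_v(G)$. This is precisely the scaling used, via the opposite-graph eigenvector, in Lemma~\ref{lemma:ev:all-graphs}.

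The main obstacle, and the essential difference from the Eigenvector and Katz-prestige cases, is that a graph in $\mathcal{G}^{K(a)}$ need not be strongly connected: after the LOC reduction it is only weakly connected, so $M$ is reducible and may admit no positive eigenvector (for instance, a node whose sole outgoing edge is a self-loop has an out-degree that EC cannot alter, so no global rescaling can equalize out-degrees). I would therefore induct on the condensation of $G$ into strongly connected components, processed in topological order. The key enabling observation is that the outgoing edges of a node affect only the centralities of its successors; hence, when handling a component $C$, ED and LOC let me delete all self-loops and outgoing edges of the strictly downstream components without disturbing any centrality in $C$ or upstream of it, turning those downstream nodes into sinks. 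For predecessors of $C$ lying in already-resolved upstream components, whose centralities are known to equal their Katz values, I would use the node-splitting construction of Lemma~\ref{lemma:pr-katz:recursive} to replace each such predecessor by a fixed-weight source carrying a single edge into $C$. What remains around $C$ is then a strongly connected core together with sinks, which can be rescaled to semi-out-regular form as above and resolved by Lemma~\ref{lemma:katz:semi-out-regular}, the self-loop contributions being governed by the linear relation of Lemma~\ref{lemma:pr-katz:loops}.

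The delicate part is the bookkeeping: each modification must provably leave the already-determined centralities and the local edge data around the nodes of $C$ intact, so that the identity transported back is exactly the Katz recursion $F_v(G)=b(v)+a_F\sum_{(u,v)\in E}c(u,v)\,F_u(G)$. Once this recursion is established at every node of every graph in $\mathcal{G}^{K(a)}$, the proof concludes by uniqueness: for $\lambda<1/a$ the system of Katz equations with decay $a_F$ has a single solution, and $K^{a_F}$ is that solution, so $F=K^{a_F}$ throughout.
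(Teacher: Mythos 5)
Your opening reduction is sound as far as it goes: for a \emph{strongly connected} graph the Perron rescaling you describe is exactly the paper's own device (the positive right eigenvector of the out-adjacency matrix is the Eigenvector centrality of the opposite graph, as used in Lemma~\ref{lemma:ev:all-graphs}), EC transports $F$ and $K^{a_F}$ by identical factors, and Lemma~\ref{lemma:katz:semi-out-regular} finishes; you also correctly identify reducibility and immovable self-loops as the crux. The gap is in the mechanism you propose for the reducible case. Replacing a predecessor $p$ of the current component $C$ by a source via node-splitting is licensed only by NC in the form of Lemma~\ref{lemma:pr-katz:nc-in-sor}a, which requires $p$ and \emph{every} successor of $p$ to share one common out-degree (or have out-degree $0$). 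EC, however, is only a diagonal rescaling: if $C$ has no outgoing edges, the one common out-degree its nodes can ever attain is the spectral radius of its internal weight matrix (the unique eigenvalue with a positive eigenvector), and a node whose sole outgoing edge is a self-loop is pinned forever at that loop's weight; meanwhile, in any upstream strongly connected component some node always retains out-degree at least that component's own spectral radius (Collatz--Wielandt). So ``leveling the upstream down to $C$'' can be impossible. Concretely, let $u$ carry only a self-loop of weight $1$, let $p_1,p_2$ form a $2$-cycle with both weights $5$, and add edges $(p_1,u)$ and $(p_2,u)$ of weight $1$; this graph lies in $\mathcal{G}^{K(a)}$ for $a<1/5$. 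Every sequence of EC operations leaves $\deg^+_{p_1}\cdot\deg^+_{p_2}\ge 25$ while $\deg^+_u=1$ always; since $u$ is a successor of both $p_1$ and $p_2$, the precondition of Lemma~\ref{lemma:pr-katz:nc-in-sor}a fails for every pair of existing nodes, so in particular neither $p_1$ nor $p_2$ can ever be split; ED gives nothing ($S(p_1)=S(p_2)=V$, and deleting the loop $(u,u)$ forfeits exactly the value $F_u(G)$ you are after). Your induction therefore stalls at this $C$ with $F_u(G)$ undetermined: the upstream component can be resolved, but it can never be converted into sources, and the graph can never be made semi-out-regular by your toolkit.

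What is missing is the idea around which the paper's proof of this very lemma is organized: creating out-degree flexibility by \emph{attaching leaves}, i.e., applying NC in the reverse direction --- add a disjoint two-node graph and combine its tail into a node of $C$ (allowed precisely because, after regularizing $C$ to its forced Perron value, all out-degrees in sight agree) --- after which EC at the new leaf turns the forced out-degree into a mere lower bound that can be leveled \emph{upward}; combined with the auxiliary strongly connected graph with artificial self-loops of weight $x$, which guarantees that the Perron value used to regularize a component strictly exceeds every forced value among its successors. This is why the paper inducts on the number of nodes that are neither leaves nor parents of leaves, rather than on the condensation in topological order, and why its two cases (I)/(II) look the way they do. In the example above, the only NC move available at all is exactly this leaf attachment at $u$, and it is absent from your plan. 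So what you defer as ``delicate bookkeeping'' is in fact the mathematical core of the lemma, and the specific operation your sketch relies on in its place has a hypothesis that provably cannot be met.
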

\begin{proof}
We will say that node $v \in V$ is a \emph{leaf}, if it does not have any outgoing edges and exactly one incoming edge.
Let $V^L = \{ v \in V : \Gamma^+_v(G)=\emptyset \land |\Gamma^-_v(G)|=1\}$ be the set of all leafs in a graph.
The \emph{parent} of leaf $v$ is a node, $p(v)$, that has an outgoing edge to $v$, i.e., $(p(v),v) \in E$.
We will denote the set of all parents of leafs by $V^P = \{p(v) : v \in V^L\}$.
Intuitively, the out-degree of nodes that are parents of leafs can be arbitrarily increased by multiplying the weight of incoming edge of the leaf.
Thus, based on EC we can level the out-degree of all parents of leafs.
Hence, if all nodes were either leaf or parents of leafs, we would be able to transform the graph into semi-out-regular one.
Therefore, the main obstacle are nodes that are neither leaf nor a parent of a leaf.
We will call such nodes \emph{ordinary} and denote the set of all ordinary nodes by $V^O = V \setminus (V^L \cup V^P)$.
We will prove the thesis by the induction on their number, i.e., $|V^O|$.

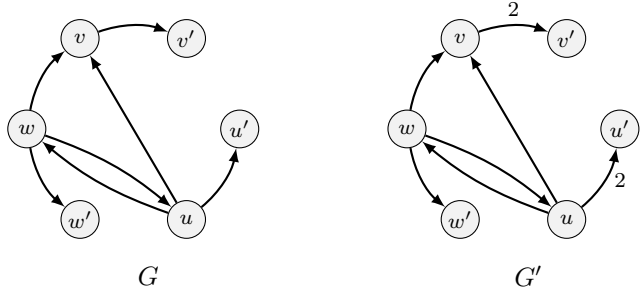
\begin{figure}[t]
\centering
\begin{tikzpicture}
  \def\sx{0.7cm} 
  \def\sy{0.6cm} 
  \def\x{0cm} 
  \def\y{0cm} 
  \def\arrdist{0.35cm}

  \tikzset{
    node_blank/.style={circle,draw,minimum size=0.5cm,inner sep=0, color=white}, 
    node/.style={circle,draw,minimum size=0.5cm,inner sep=0, fill = black!05, font=\footnotesize}, 
    node_emph/.style={circle, minimum size=0.65cm, inner sep=0, fill = black!15, font=\footnotesize}, 
    edge/.style={-latex,above,font=\footnotesize}, 
    el/.style={below,font=\footnotesize}, 
    operation/.style={sloped,>=stealth,above,font=\footnotesize},
    arrow/.style={draw, single arrow, minimum width = 0.9cm, minimum height=\x-6*\s+\s, fill=black!10},
    blank/.style={}
  } 
    
  \node[node] (p) at (\x+1*\sx, 4*\sy + \y) {$v$};
  \node[node] (l) at (\x+3*\sx, 4*\sy + \y) {$v'$};
  \node[node] (s) at (\x+4*\sx, 2*\sy + \y) {$u'$};
  \node[node] (u) at (\x+3*\sx, 0*\sy + \y) {$u$};
  \node[node] (v) at (\x+1*\sx, 0*\sy + \y) {$w'$};
  \node[node] (w) at (\x+0*\sx, 2*\sy + \y) {$w$};
  \node[blank] (blank) at (\x+2*\sx+0.2cm, -0.75cm + \y) {$G$};
  
  \path[->,draw,thick]
  (u) edge[edge, bend left=-20, looseness = 0.9] (s)
  (u) edge[edge] (p)
  (u) edge[edge, bend left=10] (w)
  (w) edge[edge, bend left=-20, looseness = 0.9] (v)
  (w) edge[edge, bend left=20, looseness = 0.9] (p)
  (w) edge[edge, bend left=10] (u)
  (p) edge[edge, bend left=20, looseness = 0.9] (l)
  ;

  \def\x{5cm} 
  
  \node[node] (p) at (\x+1*\sx, 4*\sy + \y) {$v$};
  \node[node] (l) at (\x+3*\sx, 4*\sy + \y) {$v'$};
  \node[node] (s) at (\x+4*\sx, 2*\sy + \y) {$u'$};
  \node[node] (u) at (\x+3*\sx, 0*\sy + \y) {$u$};
  \node[node] (v) at (\x+1*\sx, 0*\sy + \y) {$w'$};
  \node[node] (w) at (\x+0*\sx, 2*\sy + \y) {$w$};
  \node[blank] (blank) at (\x+2*\sx+0.2cm, -0.75cm + \y) {$G'$};
  
  \path[->,draw,thick]
  (u) edge[edge, bend left=-20, looseness = 0.9] node[right] {$2$} (s)
  (u) edge[edge] (p)
  (u) edge[edge, bend left=10] (w)
  (w) edge[edge, bend left=-20, looseness = 0.9] (v)
  (w) edge[edge, bend left=20, looseness = 0.9] (p)
  (w) edge[edge, bend left=10] (u)
  (p) edge[edge, bend left=20, looseness = 0.9] node[above] {$2$}  (l)
  ;

\end{tikzpicture}
\caption{An illustration to the basis of the induction in the proof of Lemma~\ref{lemma:katz:final}.
The weights of edges that do not have weight equal to $1$ are shown.
In an example graph, $G$, all nodes are either leafs, like $u',v'$, and $w'$, or parents of leafs, like $u,v$, and $w$.
Multiplying the weights of edges between leafs and their parents we can obtain a semi-out-regular graph, $G'$.}
\label{fig:lemma:katz:final1}
\end{figure}

If $|V^O|=0$, then each node is either a leaf or a parent of one, i.e., $V^L \cup V^P = V$.
Let us denote the maximal out-degree of all nodes by $x = \max_{v \in V} \deg^+_v(G)$.
In order to transform graph $G$ into semi-out-regular graph, we will scale the weights of edges from parents to leafs in such a way, that all parents have out-degree $x$.
More in detail, for every $v \in V^P$ such that $\deg^+_v(G) < x$ let us consider its leaf, $u$, and multiply the weights of $u$ and edge $(v,u)$ by a constant such that the new weight of $(u,v)$ is equal to $x - (\deg^+_v(G) - c(v,u))$.
In this way, in the new graph, $G'$, node $v$ will have out-degree equal to $x$
(see Figure~\ref{fig:lemma:katz:final1}).
Formally, let $G'=(V,E,b',c')$ where $b'(v) = b(v) \cdot (x - (\deg^+_{p(v)}(G)-c(p(v),v)))/c(p(v),v)$ for every $v \in V^L$ and $b'(v)=b(v)$ for every $v \in V^P$ while $c'(u,v) = x - (\deg^+_u(G) - c(u,v))$ if $v \in V^L$ and $c'(u,v)=c(u,v)$ otherwise.
From EC we know that
\begin{equation}
    \label{eq:lemma:katz:final:1}
    F_v(G') =
    \begin{cases}
        F_v(G) \cdot \left( 1+ \frac{x - \deg^+_{p(v)}(G)}{c(p(v),v)}\right) & \mbox{if } v \in V^L, \\
        F_v(G) & \mbox{otherwise.}
    \end{cases}
\end{equation}
Observe that in graph $G'$ every node is either a leaf, thus does not have any outgoing edges, or its out-degree is equal to $x$.
Thus, graph $G'$ is semi-out-regular.
Hence, from Lemma~\ref{lemma:katz:semi-out-regular} we know that $F_v(G')=K^{a_F}_v(G')$.
Since Katz centrality satisfies EC (Lemma~\ref{lemma:axioms:katz}), we get the thesis from Equation~\eqref{eq:lemma:katz:final:1}.

Let us move to the case in which $|V^O| > 0$.
Then, let us take node $v \in V^O$ such that the number of its successors that are ordinary and are not $v$, i.e., $|S(v) \cap V^O \setminus \{v\}|$, is minimal.
Observe that for every $u \in S(v)$ we have that $S(u) \subseteq S(v)$.
Hence, if $u$ is ordinary, then $S(u) = S(v)$.
Otherwise, it would mean that $|S(u) \cap V^O \setminus \{u\}| < |S(v) \cap V^O \setminus \{v\}|$ since on the right hand side we count each node that we count on the left hand side and also node $u$.
In result, we obtain two cases:
first, if $v \not \in S(v)$, then there are no ordinary successors of $v$, i.e., $S(v) \cap V^O =\emptyset$ (I);
and second, if $v \in S(v)$, then all ordinary successors of $v$ belong to the same strongly connected component as $v$, i.e.,
for every $u,w \in S(v) \cap V^O$ we have $u \in S(w) \cap P(w)$ (II).

(I) Let us begin with the case in which $v \not \in S(v)$ and all successors of $v$ are either leaf or parents of some leafs, i.e., $S(v) \subseteq V^P \cup V^L$.
Then, let us denote the maximal out-degree of successor of $v$ by $x = \max_{u \in S(v)} \deg^+_v(G)$.
In the same way as in case of $k_G=0$, we increase the weight of each edge from a parent to a leaf in $S(v)$ so that all successors of $v$ have out-degree $x$.
Formally, let $G'=(V,E,b',c')$ where $b'(u) = b(u) \cdot (1+((x - \deg^+_{p(u)}(G))/c(p(u),u))$ if $v \in V^L \cap S(v)$ and $b'(u)=b(u)$ otherwise,
while $c'(u,w) = x - \deg^+_u(G) + c(u,w)$ if $w \in V^L \cap S(v)$ and $c'(u,v)=c(u,v)$ otherwise.
From EC we know that
\begin{equation}
    \label{eq:lemma:katz:final:2}
    F_u(G') =
    \begin{cases}
        \! F_u(G) \!\cdot \!\left(\! 1+ \frac{x - \deg^+_{p(u)}(G)}{c(p(u),u)} \! \right) & \!\!\mbox{if } u \in V^L \cap S(v), \\
        \! F_u(G) & \!\!\mbox{otherwise.}
    \end{cases}
\end{equation}
Next, let us multiply the weight of the outgoing edges of $v$ by $x / \deg^+_v(G)$ and divide its weight and weight of its incoming edges by $x / \deg^+_v(G)$.
Formally, let $G''=(V,E,b'',c'')$ where $b''(v)=b'(v) \cdot \deg^+_v(G) / x$ and $b''(u)=b'(u)$ for every $u \in V \setminus \{v\}$ while $c''(e)=c'(e)\cdot \deg^+_v(G) / x$ if $e \in \Gamma^-_v(G)$, $c''(e)=c'(e)\cdot x/ \deg^+_v(G)$ if $e \in \Gamma^+_v(G)$, and $c''(e)=c'(e)$ otherwise.
See Figure~\ref{fig:lemma:katz:final2} for an illustration.
Again from EC we get that
\begin{equation}
    \label{eq:lemma:katz:final:3}
    F_u(G'') =
    \begin{cases}
        F_v(G') \cdot \frac{\deg^+_{v}(G)}{x} & \mbox{if } u = v, \\
        F_u(G') & \mbox{otherwise.}
    \end{cases}
\end{equation}
Observe that incoming edges of $v$ does not come from successors of $v$, because $v \not \in S(v)$.
Thus, the successors of $v$ have the same out-degree in $G''$ as in $G'$, i.e., $\deg^+_u(G'')=x$ for every $u \in S(v) \cap V^P$.
Moreover, $\deg^+_v(G'')=x$ as well.
We will use this fact to add a leaf to node $v$ using Lemma~\ref{lemma:pr-katz:nc-in-sor}a.

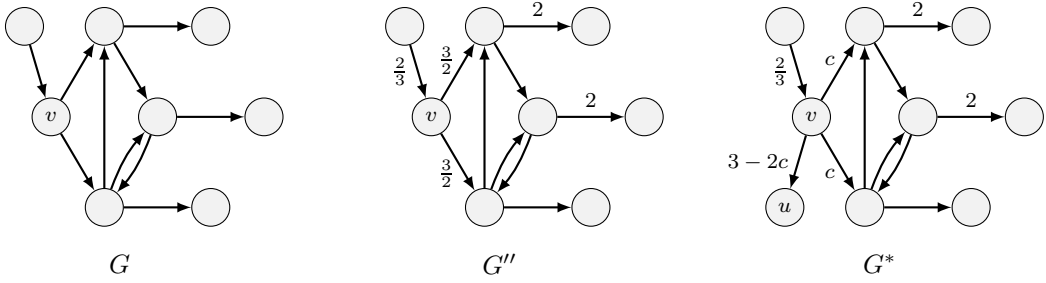
\begin{figure}[t]
\centering
\begin{tikzpicture}
  \def\sx{0.7cm} 
  \def\sy{0.6cm} 
  \def\x{0cm} 
  \def\y{0cm} 
  \def\arrdist{0.35cm}

  \tikzset{
    node_blank/.style={circle,draw,minimum size=0.5cm,inner sep=0, color=white}, 
    node/.style={circle,draw,minimum size=0.5cm,inner sep=0, fill = black!05, font=\footnotesize}, 
    node_emph/.style={circle, minimum size=0.65cm, inner sep=0, fill = black!15, font=\footnotesize}, 
    edge/.style={-latex,above,font=\footnotesize}, 
    el/.style={below,font=\footnotesize}, 
    operation/.style={sloped,>=stealth,above,font=\footnotesize},
    arrow/.style={draw, single arrow, minimum width = 0.9cm, minimum height=\x-6*\s+\s, fill=black!10},
    blank/.style={}
  } 
    
  \node[node] (a) at (\x+1*\sx, 4*\sy + \y) {};
  \node[node] (b) at (\x+2*\sx, 2*\sy + \y) {};
  \node[node] (c) at (\x+1*\sx, 0*\sy + \y) {};
  \node[node] (a_) at (\x+3*\sx, 4*\sy + \y) {};
  \node[node] (b_) at (\x+4*\sx, 2*\sy + \y) {};
  \node[node] (c_) at (\x+3*\sx, 0*\sy + \y) {};
  \node[node] (v) at (\x+0*\sx, 2*\sy + \y) {$v$};
  \node[node] (_v) at (\x-0.5*\sx, 4*\sy + \y) {};
  \node[blank] (blank) at (\x+1*\sx+0.2cm, -0.75cm + \y) {$G$};
  
  \path[->,draw,thick]
  (a) edge[edge] (b)
  (b) edge[edge, bend left=10] (c)
  (c) edge[edge, bend left=10] (b)
  (c) edge[edge] (a)
  (a) edge[edge] (a_)
  (b) edge[edge] (b_)
  (c) edge[edge] (c_)
  (v) edge[edge] (c)
  (v) edge[edge] (a)
  (_v) edge[edge] (v)
  ;

  \def\x{5cm} 
  
  \node[node] (a) at (\x+1*\sx, 4*\sy + \y) {};
  \node[node] (b) at (\x+2*\sx, 2*\sy + \y) {};
  \node[node] (c) at (\x+1*\sx, 0*\sy + \y) {};
  \node[node] (a_) at (\x+3*\sx, 4*\sy + \y) {};
  \node[node] (b_) at (\x+4*\sx, 2*\sy + \y) {};
  \node[node] (c_) at (\x+3*\sx, 0*\sy + \y) {};
  \node[node] (v) at (\x+0*\sx, 2*\sy + \y) {$v$};
  \node[node] (_v) at (\x-0.5*\sx, 4*\sy + \y) {};
  \node[blank] (blank) at (\x+1*\sx+0.2cm, -0.75cm + \y) {$G''$};
  
  \path[->,draw,thick]
  (a) edge[edge] (b)
  (b) edge[edge, bend left=10] (c)
  (c) edge[edge, bend left=10] (b)
  (c) edge[edge] (a)
  (a) edge[edge] node[above] {$2$} (a_)
  (b) edge[edge] node[above] {$2$} (b_)
  (c) edge[edge] (c_)
  (v) edge[edge] node[left, pos = 0.7] {$\frac{3}{2}$} (c)
  (v) edge[edge] node[left, pos = 0.7] {$\frac{3}{2}$} (a)
  (_v) edge[edge] node[left] {$\frac{2}{3}$} (v)
  ;

   \def\x{10cm} 

  \node[node] (a) at (\x+1*\sx, 4*\sy + \y) {};
  \node[node] (b) at (\x+2*\sx, 2*\sy + \y) {};
  \node[node] (c) at (\x+1*\sx, 0*\sy + \y) {};
  \node[node] (a_) at (\x+3*\sx, 4*\sy + \y) {};
  \node[node] (b_) at (\x+4*\sx, 2*\sy + \y) {};
  \node[node] (c_) at (\x+3*\sx, 0*\sy + \y) {};
  \node[node] (v) at (\x+0*\sx, 2*\sy + \y) {$v$};
  \node[node] (_v) at (\x-0.5*\sx, 4*\sy + \y) {};
  \node[node] (u) at (\x-0.5*\sx, 0*\sy + \y) {$u$};
  \node[blank] (blank) at (\x+1*\sx+0.2cm, -0.75cm + \y) {$G^*$};
  
  \path[->,draw,thick]
  (a) edge[edge] (b)
  (b) edge[edge, bend left=10] (c)
  (c) edge[edge, bend left=10] (b)
  (c) edge[edge] (a)
  (a) edge[edge] node[above] {$2$} (a_)
  (b) edge[edge] node[above] {$2$} (b_)
  (c) edge[edge] (c_)
  (v) edge[edge] node[left, pos = 0.7] {$c$} (c)
  (v) edge[edge] node[left, pos = 0.7] {$c$} (a)
  (_v) edge[edge] node[left] {$\frac{2}{3}$} (v)
  (v) edge[edge] node[left] {$3-2c$} (u)
  ;

\end{tikzpicture}
\caption{An illustration to the case (I) in the proof of Lemma~\ref{lemma:katz:final}.
The weights of edges that do not have weight equal to $1$ are shown.
In an example graph, $G$, all of the successors of node $v$ are either leafs or parents of leafs.
Graph $G''$ is obtained from $G$ by multiplying the weights of edges from parents to leafs in such a way that out-degree of each parent is equal to $x=3$.
Also, the weights of the incoming and outgoing edges of $v$ are scaled.
In graph $G^*$ a leaf is added to node $v$.}
\label{fig:lemma:katz:final2}
\end{figure}

To this end, let us consider nodes $u',v' \not \in V$ and add to $G''$ a simple graph that consists of nodes $u'$ and $v'$ connected by an edge  (see Figure~\ref{fig:lemma:katz:final2}).
Formally, let 
\[
  G^\dagger = G'' + (\{v',u'\},\{(v',u')\},\textbf{1}, c^\dagger),
\]
where
$\textbf{1}(u')=\textbf{1}(v')=1$ and $c^\dagger(u',v')=x$.
Now, let us combine node $v'$ into $v$ in graph $G^\dagger$, i.e., let $G^\ddagger = C^F_{v' \rightarrow v}(G^\dagger)$.
From Lemma~\ref{lemma:pr-katz:source-node} we get that $F_{v'}(G^\dagger)=1$.
Moreover, we have that $\deg^+_v(G^\dagger)=\deg^+_{v'}(G^\dagger)=\deg^+_s(G^\dagger)=x$ for any $s \in S(v) \setminus V^L$.
Thus, from Lemma~\ref{lemma:pr-katz:nc-in-sor}a and LOC we get that
\begin{equation}
    \label{eq:lemma:katz:final:4}
    F_u(G^\ddagger) =
    \begin{cases}
        F_v(G'') + 1 & \mbox{if } u = v, \\
        F_u(G'') & \mbox{otherwise.}
    \end{cases}
\end{equation}
On the other hand, in graph $G^\dagger$ node $v$ is not ordinary anymore and no ordinary node was added.
Thus, from inductive assumption we have that $F_u(G^\ddagger)=K^{a_F}_u(G^\ddagger)$ for every $u \in V$.
Since Katz centrality satisfies our axioms (Lemma~\ref{lemma:axioms:katz}), from Equations~\eqref{eq:lemma:katz:final:2}-\eqref{eq:lemma:katz:final:4} we obtain that $F_u(G)=K^{a_F}_u(G)$ for every $u \in V$.

(II) Now, let us move to the case in which $v \in S(v)$ and all ordinary successors of $v$ belong to the same strongly connected component, i.e., for every $u,w \in S(v) \cap V^O$ we have $u \in S(w) \cap P(w)$.
Let us denote all nodes in this strongly connected components as $U = S(v) \cap P(v)$.
Also, let us denote their outgoing edges as $E^U = \{(u,w) \in E : u \in U\}$ and the nodes that they go to and that are not in $U$, i.e., let $U^+ = \{w : (u,w) \in E^U \} \setminus U$.
Let us first assume that $U^+ \neq \emptyset$ and relax this assumption at the end of the proof.
Finally, as before, let us denote the maximal out-degree of a successor of $v$ by $x = \max_{u \in S(v)} \deg^+_u(G)$.

Now, consider an auxiliary graph in which all nodes except for $U$ and $U^+$ are removed.
Formally, let $G^\dagger = (U \cup U^+, E^U, b_{U \cup U^+}, c_{E^U})$.
Observe that since all outgoing edges of $U$ remains, nodes $U$ in graph $G^\dagger$ still constitute a strongly connected component.
In order to make whole graph strongly connected let us add outgoing edges of nodes $U^+$.
More in detail, for each $u \in U^+$ let us add edge $(u,u)$ with weight $x$ and edge $(u,v)$ with weight $1$.
Formally, let $E^+ = \{ (u,u),(u,v) : u \in U^+\}$ and let $G^\circ = (V^\circ, E^\circ, b^\circ, c^\circ)$, where
$V^\circ = U \cup U^+$,
$E^\circ = E^U \cup E^+$,
$b^\circ = b_{U \cup U^+}$,
$c^\circ(u,u)= x$ and $c^\circ(u,v) = 1$ for every $u \in U^+$, and
$c^\circ_{E^U} =  c_{E^U}$
(see Figure~\ref{fig:lemma:katz:final3}).
Observe that $G^\circ$ is indeed strongly connected.

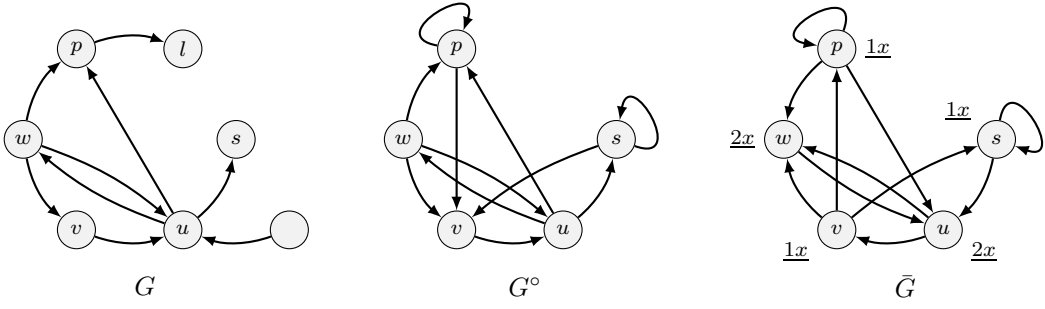
\begin{figure}[t]
\centering
\begin{tikzpicture}
  \def\sx{0.7cm} 
  \def\sy{0.6cm} 
  \def\x{0cm} 
  \def\y{0cm} 
  \def\arrdist{0.35cm}

  \tikzset{
    node_blank/.style={circle,draw,minimum size=0.5cm,inner sep=0, color=white}, 
    node/.style={circle,draw,minimum size=0.5cm,inner sep=0, fill = black!05, font=\footnotesize}, 
    node_emph/.style={circle, minimum size=0.65cm, inner sep=0, fill = black!15, font=\footnotesize}, 
    edge/.style={-latex,above,font=\footnotesize}, 
    el/.style={below,font=\footnotesize}, 
    operation/.style={sloped,>=stealth,above,font=\footnotesize},
    arrow/.style={draw, single arrow, minimum width = 0.9cm, minimum height=\x-6*\s+\s, fill=black!10},
    blank/.style={}
  } 
    
  \node[node] (p) at (\x+1*\sx, 4*\sy + \y) {$p$};
  \node[node] (l) at (\x+3*\sx, 4*\sy + \y) {$l$};
  \node[node] (s) at (\x+4*\sx, 2*\sy + \y) {$s$};
  \node[node] (u) at (\x+3*\sx, 0*\sy + \y) {$u$};
  \node[node] (v) at (\x+1*\sx, 0*\sy + \y) {$v$};
  \node[node] (w) at (\x+0*\sx, 2*\sy + \y) {$w$};
  \node[node] (0) at (\x+5*\sx, 0*\sy + \y) {};
  \node[blank] (blank) at (\x+2*\sx+0.2cm, -0.75cm + \y) {$G$};
  
  \path[->,draw,thick]
  (u) edge[edge, bend left=-20, looseness = 0.9] (s)
  (u) edge[edge] (p)
  (u) edge[edge, bend left=10] (w)
  (v) edge[edge, bend left=-20, looseness = 0.9] (u)
  (w) edge[edge, bend left=-20, looseness = 0.9] (v)
  (w) edge[edge, bend left=20, looseness = 0.9] (p)
  (w) edge[edge, bend left=10] (u)
  (p) edge[edge, bend left=20, looseness = 0.9] (l)
  (0) edge[edge, bend left=20, looseness = 0.9] (u)
  ;

  \def\x{5cm} 
  
  \node[node] (p) at (\x+1*\sx, 4*\sy + \y) {$p$};
  \node[node] (s) at (\x+4*\sx, 2*\sy + \y) {$s$};
  \node[node] (u) at (\x+3*\sx, 0*\sy + \y) {$u$};
  \node[node] (v) at (\x+1*\sx, 0*\sy + \y) {$v$};
  \node[node] (w) at (\x+0*\sx, 2*\sy + \y) {$w$};
  \node[blank] (blank) at (\x+2*\sx+0.2cm, -0.75cm + \y) {$G^\circ$};
  
  \path[->,draw,thick]
  (u) edge[edge, bend left=-20, looseness = 0.9] (s)
  (u) edge[edge] (p)
  (u) edge[edge, bend left=10] (w)
  (v) edge[edge, bend left=-20, looseness = 0.9] (u)
  (w) edge[edge, bend left=-20, looseness = 0.9] (v)
  (w) edge[edge, bend left=20, looseness = 0.9] (p)
  (w) edge[edge, bend left=10] (u)
  (p) edge[edge, out=170, in=70, looseness = 5] (p)
  (p) edge[edge] (v)
  (s) edge[edge, out=-20, in=80, looseness = 5] (s)
  (s) edge[edge, bend left=-10] (v)
  ;

   \def\x{10cm} 

  \node[node, label={0:\footnotesize \underline{$1x$}}] (p) at (\x+1*\sx, 4*\sy + \y) {$p$};
  \node[node, label={150:\footnotesize \underline{$1x$}}] (s) at (\x+4*\sx, 2*\sy + \y) {$s$};
  \node[node, label={350:\footnotesize \underline{$2x$}}] (u) at (\x+3*\sx, 0*\sy + \y) {$u$};
  \node[node, label={190:\footnotesize \underline{$1x$}}] (v) at (\x+1*\sx, 0*\sy + \y) {$v$};
  \node[node, label={180:\footnotesize \underline{$2x$}}] (w) at (\x+0*\sx, 2*\sy + \y) {$w$};
  \node[blank] (blank) at (\x+2*\sx+0.2cm, -0.75cm + \y) {$\bar{G}$};
  
  \path[->,draw,thick]
  (s) edge[edge, bend left=20, looseness = 0.9] (u)
  (p) edge[edge] (u)
  (w) edge[edge, bend left=-10] (u)
  (u) edge[edge, bend left=20, looseness = 0.9] (v)
  (v) edge[edge, bend left=20, looseness = 0.9] (w)
  (p) edge[edge, bend left=-20, looseness = 0.9] (w)
  (u) edge[edge, bend left=-10] (w)
  (p) edge[edge, in=170, out=70, looseness = 5] (p)
  (v) edge[edge] (p)
  (s) edge[edge, in=-20, out=80, looseness = 5] (s)
  (v) edge[edge, bend left=10] (s)
  ;

\end{tikzpicture}
\caption{An illustration to the first part of the case (II) in the proof of Lemma~\ref{lemma:katz:final}.
All edges have weights equal to $1$.
In an example graph, $G$, nodes $v$, $u$ and $w$ constitute a strongly connected component, $U$.
Nodes $l, p, s$ are the successors of $v$ that do not belong to this component and all of them are either leafs or parents of leafs.
Graph $G^\circ$ is a strongly connected graph constructed from the nodes in $U$ and their direct successors.
Maximal out-degree of a successor of $v$ that is not in $U$, i.e., $x$, is equal to $1$, thus loops around nodes $p$ and $s$ have weight $1$.
Graph $\bar{G}$ is an opposite graph to $G^\circ$.
Eigenvector centrality of every node in this graph is shown.
Note that $\lambda = 2$.}
\label{fig:lemma:katz:final3}
\end{figure}

Since $G^\circ$ is strongly connected, we can make it out-regular using the same method that we used in Lemma~\ref{lemma:ev:all-graphs}.
Formally, let us consider the opposite graph to $G^\circ$, i.e., graph $\bar{G}=(V^\circ,\bar{E},b^\circ,\bar{c})$ such that $\bar{E}= \{(u,w) : (w,u) \in E^\circ\}$ and $\bar{c}(u,w) = c(w,u)$ for every $(w,u) \in E^\circ$.
Now, in graph $\bar{G}$ let us multiply the weight of outgoing edges of node $u \in V^\circ$ by $EV_u(\bar{G})$ and divide the weights of its incoming edges as well as its own weight by $EV_u(\bar{G})$.
Because Eigenvector centrality satisfy EC we know that in this way Eigenvector centrality of other nodes in $V^\circ$ does not change and Eigenvector centrality of $u$ becomes 1.
If we proceed with this operation for all nodes in $V^\circ$ we obtain graph $\bar{G}'$ in which all nodes have Eigenvector centrality equal to 1.
Formally, let $\bar{G}' = (V^\circ,\bar{E},b',\bar{c}')$ where $b'(u)=b^\circ(u)/EV_u(\bar{G})$ for every $u \in V^\circ$ and $\bar{c}'(u,w)=\bar{c}(u,w) \cdot EV_u(\bar{G})/EV_w(\bar{G})$.
Observe that if all nodes in a graph $\bar{G}'$ have equal Eigenvector centrality, then from Eigenvector centrality recursive equation (Equation~\eqref{eq:rec:ev}) we get that in-degree of each node is equal, i.e., there exist $\lambda$ such that $\deg^-_u(\bar{G}')=\lambda$ for every $u \in V$.
Moreover, observe that for any $u \in U^+$ we have that $\bar{c}'(u,u)= \bar{c}(u,u) \cdot EV_u(\bar{G})/EV_u(\bar{G}) = c^\circ(u,u)=x$.
Thus, $\lambda > x$.

Now, let us copy this operation on the original graph $G$ to obtain equal out-degrees of nodes in $U$.
At the same time, we want to make sure that we will not increase out-degree of nodes in $U^+$ too much (as we cannot decrease their out-degree using leafs, only increase).
To this end, let us take an arbitrary constant $y \in \mathbb{R}_{>0}$ by which we will multiply the incoming edges and divide outgoing edges of all nodes in $V^\circ$.
Formally, let us define graph
$G'= (C, E, b', c')$ in which
$b'(u) = b(u) \cdot EV_u(\bar{G})\cdot y$ for every $u \in V^\circ$ and $b'(u)=b(u)$ for every $V \setminus V^\circ$
and 
\[
    c'(u,w) =
    \begin{cases}
        c(u,w) \frac{EV_w(\bar{G})}{EV_u(\bar{G})}    & \mbox{if } u,w \in V^\circ, \\
        c(u,w) / EV_u(\bar{G}) / y            & \mbox{if } u \in V^\circ, w \not \in  V^\circ, \\
        c(u,w) \cdot EV_w(\bar{G}) \cdot y                    & \mbox{if } u \not \in V^\circ, w \in  V^\circ, \\
        c(u,w)                                        & \mbox{if } u,w \not \in V^\circ.
    \end{cases}
\]
See Figure~\ref{fig:lemma:katz:final4} for an illustration.
Observe that for every $(u,w) \in E^U$ we have $c'(u,w) = \bar{c}'(w,u)$, thus indeed $\deg^+_u(G')= \deg^-_u(\bar{G}') = \lambda > x$ for every $u \in U$.
Furthermore, from EC we have that
\begin{equation}
    \label{eq:lemma:katz:final:5}
    F_u(G') =
    \begin{cases}
        F_u(G) \cdot EV_u(\bar{G})\cdot y & \mbox{if } u \in V^\circ, \\
        F_u(G) & \mbox{otherwise.}
    \end{cases}
\end{equation}

\begin{figure}[t]
\centering
\begin{tikzpicture}
  \def\sx{0.7cm} 
  \def\sy{0.6cm} 
  \def\x{0cm} 
  \def\y{0cm} 
  \def\arrdist{0.35cm}

  \tikzset{
    node_blank/.style={circle,draw,minimum size=0.5cm,inner sep=0, color=white}, 
    node/.style={circle,draw,minimum size=0.5cm,inner sep=0, fill = black!05, font=\footnotesize}, 
    node_emph/.style={circle, minimum size=0.65cm, inner sep=0, fill = black!15, font=\footnotesize}, 
    edge/.style={-latex,above,font=\footnotesize}, 
    el/.style={below,font=\footnotesize}, 
    operation/.style={sloped,>=stealth,above,font=\footnotesize},
    arrow/.style={draw, single arrow, minimum width = 0.9cm, minimum height=\x-6*\s+\s, fill=black!10},
    blank/.style={}
  } 
    
  \node[node] (p) at (\x+1*\sx, 4*\sy + \y) {$p$};
  \node[node] (l) at (\x+3*\sx, 4*\sy + \y) {$l$};
  \node[node] (s) at (\x+4*\sx, 2*\sy + \y) {$s$};
  \node[node] (u) at (\x+3*\sx, 0*\sy + \y) {$u$};
  \node[node] (v) at (\x+1*\sx, 0*\sy + \y) {$v$};
  \node[node] (w) at (\x+0*\sx, 2*\sy + \y) {$w$};
  \node[node] (0) at (\x+5*\sx, 0*\sy + \y) {};
  \node[blank] (blank) at (\x+2*\sx+0.2cm, -1cm + \y) {$G'$};
  
  \path[->,draw,thick]
  (u) edge[edge, bend left=-20, looseness = 0.9] node[right] {$\frac{1}{2}$} (s)
  (u) edge[edge] node[right] {$\frac{1}{2}$} (p)
  (u) edge[edge, bend left=10] node[below, pos=0.3] {$1$} (w)
  (v) edge[edge, bend left=-20, looseness = 0.9] node[below] {$2$} (u)
  (w) edge[edge, bend left=-20, looseness = 0.9] node[left] {$\frac{1}{2}$} (v)
  (w) edge[edge, bend left=20, looseness = 0.9] node[left] {$\frac{1}{2}$} (p)
  (w) edge[edge, bend left=10] node[above, pos=0.3] {$1$} (u)
  (p) edge[edge, bend left=20, looseness = 0.9] node[above] {$1$} (l)
  (0) edge[edge, bend left=20, looseness = 0.9] node[below] {$2$} (u)
  ;

  \def\x{5cm} 
  
  \node[node] (p) at (\x+1*\sx, 4*\sy + \y) {$p$};
  \node[node] (l) at (\x+3*\sx, 4*\sy + \y) {$l$};
  \node[node] (s) at (\x+4*\sx, 2*\sy + \y) {$s$};
  \node[node] (u) at (\x+3*\sx, 0*\sy + \y) {$u$};
  \node[node] (v) at (\x+1*\sx, 0*\sy + \y) {$v$};
  \node[node] (w) at (\x+0*\sx, 2*\sy + \y) {$w$};
  \node[node] (0) at (\x+5*\sx, 0*\sy + \y) {};
  \node[blank] (blank) at (\x+2*\sx+0.2cm, -1cm + \y) {$G''$};
  
  \path[->,draw,thick]
  (u) edge[edge, bend left=-20, looseness = 0.9] node[right] {$\frac{1}{2}$} (s)
  (u) edge[edge] node[right] {$\frac{1}{2}$} (p)
  (u) edge[edge, bend left=10] node[below, pos=0.3] {$1$} (w)
  (v) edge[edge, bend left=-20, looseness = 0.9] node[below] {$2$} (u)
  (w) edge[edge, bend left=-20, looseness = 0.9] node[left] {$\frac{1}{2}$} (v)
  (w) edge[edge, bend left=20, looseness = 0.9] node[left] {$\frac{1}{2}$} (p)
  (w) edge[edge, bend left=10] node[above, pos=0.3] {$1$} (u)
  (p) edge[edge, bend left=20, looseness = 0.9] node[above] {$2$} (l)
  (0) edge[edge, bend left=20, looseness = 0.9] node[below] {$2$} (u)
  ;

   \def\x{10cm} 

  \node[node] (p) at (\x+1*\sx, 4*\sy + \y) {$p$};
  \node[node] (l) at (\x+3*\sx, 4*\sy + \y) {$l$};
  \node[node] (s) at (\x+4*\sx, 2*\sy + \y) {$s$};
  \node[node] (u) at (\x+3*\sx, 0*\sy + \y) {$u$};
  \node[node] (v) at (\x+1*\sx, 0*\sy + \y) {$v$};
  \node[node] (w) at (\x+0*\sx, 2*\sy + \y) {$w$};
  \node[node] (0) at (\x+5*\sx, 0*\sy + \y) {};
  \node[node] (u_) at (\x-1*\sx, 0*\sy + \y) {$u'$};
  \node[blank] (blank) at (\x+2*\sx+0.2cm, -1cm + \y) {$G^*$};
  
  \path[->,draw,thick]
  (u) edge[edge, bend left=-20, looseness = 0.9] node[right] {$\frac{1}{2}$} (s)
  (u) edge[edge] node[right] {$\frac{1}{2}$} (p)
  (u) edge[edge, bend left=10] node[below, pos=0.3] {$1$} (w)
  (v) edge[edge, bend left=-20, looseness = 0.9] node[below] {$c$} (u)
  (v) edge[edge, bend left=20, looseness = 0.9] node[below] {$2-c$} (u_)
  (w) edge[edge, bend left=-20, looseness = 0.9] node[left] {$\frac{1}{2}$} (v)
  (w) edge[edge, bend left=20, looseness = 0.9] node[left] {$\frac{1}{2}$} (p)
  (w) edge[edge, bend left=10] node[above, pos=0.3] {$1$} (u)
  (p) edge[edge, bend left=20, looseness = 0.9] node[above] {$2$} (l)
  (0) edge[edge, bend left=20, looseness = 0.9] node[below] {$2$} (u)
  ;

\end{tikzpicture}
\caption{An illustration to the second part of the case (II) in the proof of Lemma~\ref{lemma:katz:final}.
The weight of each edge is shown.
Graph $G'$ is obtained from graph $G$ from Fig.~\ref{fig:lemma:katz:final3} by taking eigenvector centrality of each node in graph $\bar{G}$ and dividing by it the weight of its outgoing edges and multiplying by it its weight and the weights of its incoming edges.
Graph $G''$ is obtained from $G'$ by multiplying the weights of edges from parents to leafs so that the out-degree of a parent is equal to $2$
In graph $G^*$ a leaf is added to node $v$.}
\label{fig:lemma:katz:final4}
\end{figure}
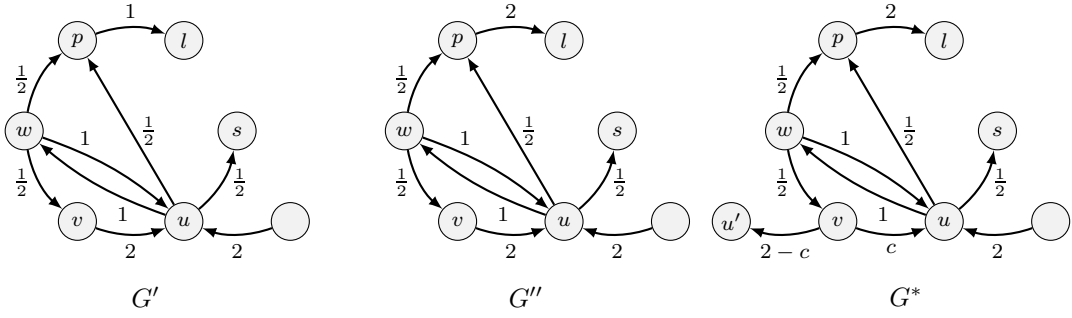

Observe that since for every node $u \in U^+$ outgoing edges of $u$ go to nodes outside of $V^\circ$, we have that $\deg^+_u(G') = \deg^+_u(G) / EV_u(\bar{G}) / y$.
Thus, let us take such $y$ that the maximal out-degree of node in $U^+$ is equal to $x$, i.e., $y = x \cdot \max_{u \in U^+} (\deg^+_u(G) / EV_u(\bar{G}))$.
Then, all successors of $v$ in graph $G'$ have out-degrees equal at most $x$.
Moreover, those that are in a strongly connected component, i.e., nodes in $U$, has out-degree equal to $\lambda > x$, and those that are not, i.e., nodes in $S(v) \setminus U$, are either leafs or parents of leafs and have out-degrees at most $x$.
Thus, let us increase the out-degree of parents of leafs in $S(v) \setminus U$ to $\lambda$ by changing the weight of their edge to a leaf.
Formally, let $G''=(V,E,b'',c'')$ where $b''(u) = b'(u) \cdot (1+((\lambda - \deg^+_{p(u)}(G'))/c'(p(u),u))$ if we have $v \in V^L \cap S(v)$ and $b''(u)=b'(u)$ otherwise, while $c''(u,w) = \lambda - \deg^+_u(G') + c'(u,w)$ if $w \in V^L \cap S(v)$ and $c''(u,v)=c'(u,v)$ otherwise.
From EC we know that
\begin{equation}
    \label{eq:lemma:katz:final:6}
    F_u(G'') \!=\!
    \begin{cases}
        \! F_u(G') \!\cdot \!\left(\! 1+ \frac{\lambda - \deg^+_{p(u)}(G')}{c'(p(u),u)} \!\right)& \!\!\!\mbox{if } u \in\! V^L \!\cap S(v), \\
        \! F_u(G') & \!\!\!\mbox{otherwise.}
    \end{cases}
\end{equation}
Observe that in $G''$ all successors of $v$ are either leafs or have out-degree $\lambda$.
Hence, we will use Lemma~\ref{lemma:pr-katz:nc-in-sor}a to add a leaf to $v$ in a similar way we did it in case (I).
Let us consider nodes $u',v' \not \in V$ and add simple graph of $u'$ and $v'$ connected by an edge to graph $G''$, i.e.,
let $G^\dagger = G'' + (\{v',u'\},\{(v',u')\},\textbf{1}, c^\dagger)$, where $\textbf{1}(u')=\textbf{1}(v')=1$ and $c^\dagger(u',v')=\lambda$.
Now, let us combine node $v'$ into $v$ in graph $G^\dagger$, i.e., let $G^\ddagger = C^F_{v' \rightarrow v}(G^\dagger)$.
From Lemma~\ref{lemma:pr-katz:source-node} we get that $F_{v'}(G^\dagger)=1$.
Moreover, we have that $\deg^+_v(G^\dagger)=\deg^+_{v'}(G^\dagger)=\deg^+_s(G^\dagger)=\lambda$ for any $s \in S(v) \setminus V^L$.
Thus, from Lemma~\ref{lemma:pr-katz:nc-in-sor}a and LOC we get that
\begin{equation}
    \label{eq:lemma:katz:final:7}
    F_u(G^\ddagger) =
    \begin{cases}
        F_v(G'') + 1 & \mbox{if } u = v, \\
        F_u(G'') & \mbox{otherwise.}
    \end{cases}
\end{equation}
On the other hand, in graph $G^\dagger$ node $v$ is not ordinary anymore and no ordinary node was added.
Thus,from inductive assumption we have that $F_u(G^\ddagger)=K^{a_F}_u(G^\ddagger)$ for every $u \in V$.
Since Katz centrality satisfies our axioms (Lemma~\ref{lemma:axioms:katz}), from Equations~\eqref{eq:lemma:katz:final:5}-\eqref{eq:lemma:katz:final:7} we get that $F_u(G)=K^{a_F}_u(G)$ for every $u \in V$.

It remains to consider the case in which $U^+ = \emptyset$, i.e., there are no successors of $v$ that are not predecessors of $v$ as well, i.e., $S(v) \subseteq P(v)$.
In such a case the situation is simpler.
We do not need to make sure that $\lambda$ of graph $\bar{G}$ is greater then $x$, and we do not need to change the out-degrees of successors of $v$ that are not its predecessors (the change from graph $G'$ into $G''$) since there are none.
Apart from that the proof is analogous.
\end{proof}
\end{document}